\documentclass[11pt,a4paper]{article}
\usepackage[english]{babel}
\usepackage{amsmath,amsthm,amssymb,epsfig,latexsym}
\usepackage[colorlinks=true,linkcolor=blue,citecolor=magenta]{hyperref}
\usepackage{color,ulem,bm,tikz}
\usepackage[numbers,square,sort&compress]{natbib}

\newcommand{\be}[1]{\begin{equation}\label{#1}}
\newcommand{\ba}[1]{\begin{multline}\label{#1}}
\newcommand{\ee}{\end{equation}}
\newcommand{\ea}{\end{multline}}

\newcommand{\sgn}{\mathop{\rm sgn}}

\newtheorem{prop}{Proposition}[section]
\newtheorem{lemma}{Lemma}[section]

\newtheorem{cor}{Corollary}[section]
\newtheorem{Def}{Definition}[section]

\def\qed{\hfill\nobreak\hbox{$\square$}\par\medbreak}
 \makeatletter
 \@addtoreset{equation}{section}
 \makeatother
 
\newcommand{\bea}{\begin{eqnarray}}
\newcommand{\eea}{\end{eqnarray}}

\newcommand{\bv}{\bar v}

\newcommand{\bt}{\bar t}

\def\CC{\mathbb{C}}
\def\ZZ{\mathbb{Z}}

\def\r#1{\eqref{#1}}
\def\sk#1{\left(#1\right)}

\def\ggo{\mathfrak{g}}
\def\hgo{\mathfrak{h}}

\def\sF{\mathsf{F}}

\def\PSR{\Gamma}
\def\rt{\mathsf{r}}
\def\frd{\xi}
\def\Ig{I_\ggo}

\def\cre{\eta}
\def\ocre{\overline{\eta}}

\def\gga{\mathfrak{sl}_n}
\def\goN{\mathfrak{o}_N}
\def\gsp{\mathfrak{sp}_{2n}}
\def\ggc{\mathfrak{sp}_{2n}}
\def\ggb{\mathfrak{o}_{2n+1}}
\def\ggd{\mathfrak{o}_{2n}}

\def\ago{\mathfrak{a}}

\def\fg{\mathfrak{g}}
\def\fa{\mathfrak{a}}
\def\fb{\mathfrak{b}}
\def\fm{\mathfrak{m}}
\def\ff{\mathfrak{f}}
\def\fe{\mathfrak{e}}

\def\fl{\mathfrak{l}}
\def\fs{\mathfrak{s}}
\def\fp{\mathfrak{p}}
\def\fo{\mathfrak{o}}

\def\gl#1{\fg\fl_{#1}}

\def\cV{\mathcal{V}}
\def\cD{\mathcal{D}}

\def\yF{\mathcal{F}}
\def\yE{\mathcal{E}}
\def\yH{\mathcal{H}}
\def\aF{{F}}
\def\aE{{E}}
\def\aH{{H}}

\def\yG{\mathcal{G}}
\def\scp{\fb}

\def\aY#1{Y(#1)}
\def\sY#1{Y_{\mathrm{D}}(#1)}
\def\aDY#1{\cD Y(#1)}
\def\sDY#1{\cD Y_{\mathrm{D}}(#1)}

\def\aDX#1{\cD X(#1)}
\def\sDX#1{\cD X_{\mathrm{D}}(#1)}

\def\frd{\xi}
\def\epsi{\epsilon}
\def\bvphi{\varphi}
\def\ups{\bm{\omega}}
\def\hc{\mathsf{h}}
\def\fc{\mathsf{f}}

\def\ftr{\mathbf{f}}

\def\Ccc{\mathcal{C}}
\def\oY{\overline{Y}}
\def\cX{\mathcal{X}}
\def\cY{\mathcal{Y}}
\def\bfC{\bm{C}}

\begin{document}

\vspace{12pt}

\begin{center}
\begin{LARGE}
{\bf Serre Relations  in Yangian Doubles}
\end{LARGE}

\vspace{10mm}

\begin{large}
A.~Liashyk${}^{a}$, S.~Pakuliak${}^{b}$ and E.~Ragoucy${}^{b}$
\end{large}

\vspace{10mm}

${}^a$ {\it Beijing Institute of Mathematical Sciences and Applications (BIMSA),\\
No. 544, Hefangkou Village Huaibei Town, Huairou District Beijing 101408, China}

\vspace{2mm}

${}^b$ {\it Laboratoire d'Annecy de Physique Théorique (LAPTh)\\ 
CNRS \& Université Savoie Mont Blanc\\
Chemin de Bellevue, BP 110, F-74941, Annecy-le-Vieux cedex, France}
\vspace{2mm}

E-mails: liashyk@bimsa.cn, pakuliak@lapth.cnrs.fr, ragoucy@lapth.cnrs.fr

\bigskip

\end{center}

\begin{abstract}
Following the approach of B.~Enriquez~\cite{E} we 
exhibit the analytical properties of the products of the currents 
 in the Yangian doubles 
restricted to the category of the highest weight representations. 
We will demonstrate that the Serre relations for the simple root currents in 
the Drinfeld's 'new' 
realization of the Yangian doubles \cite{Dnew,KhT-DY,LP1} can be reformulated 
as  quadratic commutation relations between  composed 
currents for the Yangian doubles associated with Lie algebras of the 
classical series. 
\end{abstract}


\section{Introduction}

The Yangian algebra $\sY{\fg}$ for any simple Lie algebra $\fg$ was introduced 
by V.~Drinfeld in his fundamental paper \cite{D85}. 
It is defined as a deformation of the universal enveloping algebra 
$U(\tilde\fg[z])$ for the polynomial loop Lie algebra $\tilde\fg[z]$ and 
is related to the rational solutions of  quantum Yang-Baxter 
equation. If the commutation relations for the polynomial currents 
in the loop Lie algebra $\tilde\fg[z]$ are defined using a classical solution 
of the Yang-Baxter equation, then  the commutation relations 
in the Yangian $\sY{\fg}$ are defined using the corresponding 
solution of the quantum Yang-Baxter equation.  The algebra $\sY{\fg}$
was called Yangian in \cite{D85} in honor of C.~N.~Yang who found 
the first solution of the quantum Yang-Baxter equation in  \cite{Yang}.  
  
In \cite{D85} the Yangian $\sY{\fg}$ was defined using a finite set 
of generators. Lately, in \cite{Dnew} a 'new' realization of the Yangian 
was presented which uses a countable set of generators. This countable 
set of generators can be viewed as a deformation of the generators  
 of the polynomial loop algebra $\tilde\fg[z]$.  The commutation relations 
 for these generators of $\sY{\fg}$ can be found in the pioneering papers 
 \cite{Dnew,D-FTINT} which we recast in the definition~\ref{sYn}. 

In the present paper, we will work with a slightly different realization 
of the Yangian which we denoted as $\aY{\fg}$ and which is isomorphic 
as universal enveloping algebra to the Yangian $\sY{\fg}$ in its 'new' 
realization \cite{Dnew}. More exactly, we will work with 
the algebra $\aDY{\fg}$ which is isomorphic as an associative algebra to 
a quantum double $\sDY{\fg}$ of the Yangian $\sY{\fg}$. As a Hopf algebra, the
Yangian double $\aDY{\fg}$ may also be constructed as a quantum double
of the Yangian $\aY{\fg}$ using the coalgebraic structure of the latter algebra 
and following calculations presented in \cite{KhT-DY}.

In this paper, we investigate the algebraic structure of $\aDY{\fg}$ as defined by relations among generating series (currents), restricting our attention to the category of highest-weight representations.
Among these relations, of particular interest  
 are the so-called Serre relations for the currents
associated to the simple roots of the finite dimensional Lie algebra $\fg$. 
The goal of this paper is twofold. First, 
following ideas of the paper \cite{E} we will use the Serre relations 
to describe certain  analytical properties of the product of the currents in 
$\aDY{\fg}$ restricted to the category of the highest weight representations. 
Second, for the Yangian doubles $\aDY{\fg}$ associated 
to simple Lie algebras $\fg$  we will reinterpret 
the Serre relations as 
quadratic commutation relations between   {\sl composed 
 currents} for the Yangian double. Composed currents were introduced
 in \cite{DKh} for the simple laced quantum affine algebra 
 $U_q(\widehat{\ggo})$. 
 The composed currents for the Yangian doubles of the classical 
 series were listed in \cite{LP1}. It was shown there that they are related to 
 the Gaussian coordinates of the first fundamental $T$-matrices of these 
 Yangian doubles considered in their $RTT$ realization. A similar classification 
 of the composed currents for the Yangian doubles associated to 
 exceptional Lie algebras requires the study of their $RTT$ realizations using 
 the corresponding $\fg$-invariant $R$-matrices. 
 We will  address this problem in future research.

 Although currents $\aF_a(u)=\sum_{\ell\in\ZZ}\aF_a[\ell](u/c)^{-\ell-1}$ 
  of the Yangian double $\aDY{\fg}$
 corresponding to the simple roots of the underlying Lie algebra $\fg$
 are formal series depending on a
  formal parameter $u$, one can assign an analytical meaning to 
  their products. Each current 
  in such a product should be replaced by a 
  sum of  'positive' half-current analytical at infinity and 
  'negative' half-current analytical at zero. Then the commutation relations 
  between generators of the Yangian double $\aF_a[\ell]$ 
   allows to normal order any  product 
  of half-currents and present it as the sum of monomials such that all 'hegative' 
half-currents are on the left of all 'positive' half-currents  
(see appendix~\ref{AppA} for examples of such normal ordering).

This normal ordering procedure is equivalent to restrict 
 the Yangian double to the category of its highest weight 
representations as it was explained   in \cite{E} for the 
 quantum affine algebras.  
 The analytical properties of the product of the currents can be 
 extracted from the analytical properties of 
 arbitrary matrix coefficients 
 of  products of currents calculated between vectors from the highest and 
 dual highest weights representations of the Yangian double.
 
 Hence, if 
 one restricts the Yangian double $\aDY{\fg}$ to the category of its highest weight
  representations, then any product of the currents may be treated 
 as an analytical function of their parameters. This will be the ideology that
 we will follow in this paper.  The main objective is 
 an investigation of the properties of the products of  currents 
 which follows from the commutation and the Serre relations
 in the Yangian doubles  restricted to the category of
 the highest weight representations.

 The paper is composed as follows. In  section~\ref{g-all} we introduce notations and define the algebraic structure of the Yangian double $\aDY{\fg}$.
 We detail the difference between Yangians $\aY{\fg}$ and 
 $\sY{\fg}$ and their quantum doubles and describe their isomorphism 
 as universal enveloping algebras. 
 The final part of this section focusses on the analytical properties of products of generating series restricted to the category of highest weight representations.
 The section~\ref{an-pro-BV} is devoted to the extension of  the results 
 of \cite{E} to the Yangian double $\aDY{\fg}$ and relates the Serre relations 
 with certain analytical properties of the products of the currents. 
 Having in mind an application to the quantum integrable models 
 we consider in  section~\ref{CC-SR} the composed currents 
 for Yangian doubles of the classical series and demonstrate that 
 their commutation relations are implied by the corresponding Serre relations.  
 In concluding section~\ref{disc} we shortly 
  discuss an application of the obtained results to the construction of 
  the off-shell Bethe vectors in the framework of 
  algebraic Bethe ansatz for a generic $\fg$-invariant integrable model. 
  The properties of the composed currents
  as  well-defined objects in the completed subalgebra of $\aDY{\fg}$ given by 
  definition~\ref{Comp}   are illustrated for $\fg=\gl{3}$  in appendix~\ref{AppA}. 
 An equivalence of the fifth order Serre relation in the Yangian double 
 $\sDY{\fg_2}$ (associated with the exceptional Lie algebra $\fg_2$) 
 to the quadratic commutation relation between simple root 
 and particular composed current is shown in appendix~\ref{AppB}. 
Finally, in the appendix \ref{AppC}
we described the rational analogs of the
$\delta$-functions identities extending 
some of the results of \cite{E} to the 
Yangian doubles.

\section{Yangian double}\label{g-all}

Let $\mathfrak{g}$ be one of the simple Lie 
algebras $\mathfrak{sl}_n$, $\mathfrak{o}_{2n+1}$, 
$\mathfrak{sp}_{2n}$ and $\mathfrak{o}_{2n}$ 
corresponding to the $A_{n-1}$, $B_n$, $C_n$ and $D_n$  classical series 
or Lie algebra  $\fe_6$, $\fe_7$, $\fe_8$,  $\ff_4$, and $\fg_2$,
corresponding to the exceptional cases  $E_6$, $E_7$, 
$E_8$,  $F_4$ and $G_2$.

Let $\PSR_\ggo$ be the set of labels of the Dynkin diagram for $\ggo$
 which ennumerate the simple roots $\rt_a$, $a\in\PSR_\fg$ of the algebra $\fg$.
We use numbering of the simple roots shown on the following pictures.

\begin{center}
\begin{picture}(100,15)
\put(-40,0){$\gga:$}
\put(40,0){$\circ$}\put(40.5,7){$\scriptstyle 1$}\put(44,2.5){\line(1,0){27}}
\put(70,0){$\circ$}\put(70.5,7){$\scriptstyle 2$}\put(74,2.5){\line(1,0){27}}
\put(111,0){$\cdots$}
\put(134,2.5){\line(1,0){27}}\put(160,0){$\circ$}\put(155.5,7){$\scriptstyle n-2$}
\put(164,2.5){\line(1,0){27}}\put(190,0){$\circ$}\put(185.5,7){$\scriptstyle n-1$}
\end{picture}
\end{center}

\begin{center}
\begin{picture}(100,15)
\put(-40,0){$\ggb:$}
\put(10,0){$\circ$}\put(10.5,7){$\scriptstyle 0$}
\put(14,3.5){\line(1,0){27}}\put(14,1.5){\line(1,0){27}}\put(24,-0.3){$<$}
\put(40,0){$\circ$}\put(40.5,7){$\scriptstyle 1$}\put(44,2.5){\line(1,0){27}}
\put(70,0){$\circ$}\put(70.5,7){$\scriptstyle 2$}\put(74,2.5){\line(1,0){27}}
\put(111,0){$\cdots$}
\put(134,2.5){\line(1,0){27}}\put(160,0){$\circ$}\put(155.5,7){$\scriptstyle n-2$}
\put(164,2.5){\line(1,0){27}}\put(190,0){$\circ$}\put(185.5,7){$\scriptstyle n-1$}
\end{picture}
\end{center}

\begin{center}
\begin{picture}(100,15)
\put(-40,0){$\gsp:$}
\put(10,0){$\circ$}\put(10.5,7){$\scriptstyle 0$}
\put(14,3.5){\line(1,0){27}}\put(14,1.5){\line(1,0){27}}\put(24,-0.3){$>$}
\put(40,0){$\circ$}\put(40.5,7){$\scriptstyle 1$}\put(44,2.5){\line(1,0){27}}
\put(70,0){$\circ$}\put(70.5,7){$\scriptstyle 2$}\put(74,2.5){\line(1,0){27}}
\put(111,0){$\cdots$}
\put(134,2.5){\line(1,0){27}}\put(160,0){$\circ$}\put(155.5,7){$\scriptstyle n-2$}
\put(164,2.5){\line(1,0){27}}\put(190,0){$\circ$}\put(185.5,7){$\scriptstyle n-1$}
\end{picture}
\end{center}

\begin{center}
\begin{picture}(100,25)
\put(-40,0){$\ggd:$}
\put(55.5,19){\line(1,-1){15}}\put(51,18){$\circ$}\put(43.15,18){$\scriptstyle 0$}
\put(55.5,-14){\line(1,1){15,7}}\put(51,-18){$\circ$}\put(43.15,-18){$\scriptstyle 1$}
\put(70,0){$\circ$}\put(70.5,7){$\scriptstyle 2$}\put(74,2.5){\line(1,0){27}}
\put(111,0){$\cdots$}
\put(134,2.5){\line(1,0){27}}\put(160,0){$\circ$}\put(155.5,7){$\scriptstyle n-2$}
\put(164,2.5){\line(1,0){27}}\put(190,0){$\circ$}\put(185.5,7){$\scriptstyle n-1$}
\end{picture}
\end{center}

\begin{center}
\begin{picture}(100,25)
\put(-40,0){$\fe_{6}:$}
\put(70,0){$\circ$}\put(70.5,7){$\scriptstyle 1$}\put(74,2.5){\line(1,0){27}}
\put(100,0){$\circ$}\put(100.5,7){$\scriptstyle 2$}\put(104,2.5){\line(1,0){27}}
\put(130,0){$\circ$}\put(132.5,4.2){\line(0,1){20}}\put(130,24){$\circ$}
\put(136,7){$\scriptstyle 3$}\put(136,25){$\scriptstyle 6$}
\put(134,2.5){\line(1,0){27}}\put(160,0){$\circ$}\put(160.5,7){$\scriptstyle 4$}
\put(164,2.5){\line(1,0){27}}\put(190,0){$\circ$}\put(190.5,7){$\scriptstyle 5$}
\end{picture}
\end{center}

\begin{center}
\begin{picture}(100,25)
\put(-40,0){$\fe_{7}:$}
\put(40,0){$\circ$}\put(40.5,7){$\scriptstyle 1$}\put(44,2.5){\line(1,0){27}}
\put(70,0){$\circ$}\put(70.5,7){$\scriptstyle 2$}\put(74,2.5){\line(1,0){27}}
\put(100,0){$\circ$}\put(100.5,7){$\scriptstyle 3$}\put(104,2.5){\line(1,0){27}}
\put(130,0){$\circ$}\put(132.5,4.2){\line(0,1){20}}\put(130,24){$\circ$}
\put(136,7){$\scriptstyle 4$}\put(136,25){$\scriptstyle 7$}
\put(134,2.5){\line(1,0){27}}\put(160,0){$\circ$}\put(160.5,7){$\scriptstyle 5$}
\put(164,2.5){\line(1,0){27}}\put(190,0){$\circ$}\put(190.5,7){$\scriptstyle 6$}
\end{picture}
\end{center}

\begin{center}
\begin{picture}(100,25)
\put(-40,0){$\fe_{8}:$}
\put(10,0){$\circ$}\put(10.5,7){$\scriptstyle 1$}\put(14,2.5){\line(1,0){27}}
\put(40,0){$\circ$}\put(40.5,7){$\scriptstyle 2$}\put(44,2.5){\line(1,0){27}}
\put(70,0){$\circ$}\put(70.5,7){$\scriptstyle 3$}\put(74,2.5){\line(1,0){27}}
\put(100,0){$\circ$}\put(100.5,7){$\scriptstyle 4$}\put(104,2.5){\line(1,0){27}}
\put(130,0){$\circ$}\put(132.5,4.2){\line(0,1){20}}\put(130,24){$\circ$}
\put(136,7){$\scriptstyle 5$}\put(136,25){$\scriptstyle 8$}
\put(134,2.5){\line(1,0){27}}\put(160,0){$\circ$}\put(160.5,7){$\scriptstyle 6$}
\put(164,2.5){\line(1,0){27}}\put(190,0){$\circ$}\put(190.5,7){$\scriptstyle 7$}
\end{picture}
\end{center}

\begin{center}
\begin{picture}(100,15)
\put(-40,0){$\ff_4:$}
\put(10,0){$\circ$}\put(10.5,7){$\scriptstyle 1$}\put(14,2.5){\line(1,0){27}}
\put(40,0){$\circ$}\put(40.5,7){$\scriptstyle 2$}
\put(44,3.5){\line(1,0){27}}\put(44,1.5){\line(1,0){27}}\put(54,-0.3){$<$}
\put(70,0){$\circ$}\put(70.5,7){$\scriptstyle 3$}\put(74,2.5){\line(1,0){27}}
\put(100,0){$\circ$}\put(100.5,7){$\scriptstyle 4$}
\end{picture}
\end{center}

\begin{center}
\begin{picture}(100,15)
\put(-40,0){$\fg_2:$}
\put(10,0){$\circ$}\put(10.5,7){$\scriptstyle 1$}
\put(14,4){\line(1,0){27}}\put(14,2.5){\line(1,0){27}}\put(14,1){\line(1,0){27}}
\put(24,0){$>$}
\put(40,0){$\circ$}\put(40.5,7){$\scriptstyle 2$}
\end{picture}
\end{center}

Let $\fb_{a,b}$ and $\ago_{a,b}$ be respectively the symmetric matrix of the 
normalized  simple root scalar products
  and the Cartan matrix of $\ggo$:
\begin{equation}\label{Cmat}
 \fb_{a,b}=(\rt_a,\rt_b)/2=d_a\, \ago_{a,b}/2,\qquad 
d_a=(\rt_a,\rt_a)/2\,.
\end{equation} 

\subsection{Yangians $\aY{\fg}$ and $\sY{\fg}$}
For any finite dimensional Lie algebra $\fg$, the Yangian of $\fg$ possesses several presentations. In the present paper, we will use two of them, that we call the symmetric and the asymmetric presentation. To distinguish these two presentations, we will note the corresponding Yangian $\aY{\fg}$ and $\sY{\fg}$, although the two algebras are isomorphic, see property \ref{asY} below.
The Yangian $\aY{\fg}$ appears when considering the Gauss decomposition of the monodromy matrix of the Yangian \cite{LP1}.
\begin{Def}\label{aYn}
Let 
\begin{equation}\label{Yang-elem}
\aF_a[\ell],\quad \aE_a[\ell],\quad \aH_a[\ell],\quad  \ell\geq 0,\quad 
 a\in\PSR_\ggo
 \end{equation}
be an infinite set of generators satisfying the following commutation relations 
for all $\ell,\ell_1,\ell_2\geq 0$
\begin{subequations}\label{cr-asym}
\begin{equation}\label{cr-m2}
\Big[\aH_a[\ell_1+1],\aF_b[\ell_2]\Big]-\Big[\aH_a[\ell_1],\aF_b[\ell_2+1]\Big]=-\ 
\fb_{a,b}\begin{cases}
2\,\aH_a[\ell_1]\,\aF_b[\ell_2],&a<b\,,\\
2\,\aF_b[\ell_2]\,\aH_a[\ell_1],&a>b\,,\\
\Big\{\aH_a[\ell_1],\aF_b[\ell_2]\Big\},&a=b\,,
\end{cases}
\end{equation}
\begin{equation}\label{cr-m3}
\Big[\aH_a[\ell_1+1],\aE_b[\ell_2]\Big]-\Big[\aH_a[\ell_1],\aE_b[\ell_2+1]\Big]= 
\fb_{a,b}\,\begin{cases}
2\,\aH_a[\ell_1]\,\aE_b[\ell_2],&a>b\,,\\
2\,\aE_b[\ell_2]\,\aH_a[\ell_1],&a<b\,,\\
\Big\{\aH_a[\ell_1],\aE_b[\ell_2]\Big\},&a=b\,,
\end{cases}
\end{equation}
\begin{equation}\label{cr-m5}
\Big[\aF_a[\ell_1+1],\aF_b[\ell_2]\Big]-\Big[\aF_a[\ell_1],\aF_b[\ell_2+1]\Big]=-\ 
\fb_{a,b}\begin{cases}
2\,\aF_a[\ell_1]\,\aF_b[\ell_2],& a<b\,,\\
\Big\{\aF_a[\ell_1],\aF_b[\ell_2]\Big\},&a=b\,,
\end{cases}
\end{equation}
\begin{equation}\label{cr-m7}
\Big[\aE_a[\ell_1+1],\aE_b[\ell_2]\Big]-\Big[\aE_a[\ell_1],\aE_b[\ell_2+1]\Big]= 
\fb_{a,b}\begin{cases}
2\,\aE_b[\ell_2]\,\aE_a[\ell_1],& a<b\,,\\
\Big\{\aE_a[\ell_1],\aE_b[\ell_2]\Big\},& a=b\,,
\end{cases}
\end{equation}
\end{subequations}
\begin{subequations}\label{cr-m}
\begin{equation}\label{cr-m8}
\Big[\aH_a[\ell_1],\aH_b[\ell_2]\Big]=0,
\quad 
\Big[\aE_a[\ell_1],\aF_b[\ell_2]\Big]=\delta_{a,b}\,\aH_a[\ell_1+\ell_2]\,,
\end{equation}
\begin{equation}\label{cr-m1}
\Big[\aH_a[0],\aE_b[\ell]\Big]=2\,\fb_{a,b}\,\aE_b[\ell],\quad 
\Big[\aH_a[0],\aF_b[\ell]\Big]=-\ 2\,\fb_{a,b}\,\aF_b[\ell]\,,
\end{equation}
\begin{equation}\label{Sr-m1}
\mathop{\rm Sym}\limits_{\ell_1,\ldots,\ell_{\fm_{a,b}}}
\Big[\aF_a[\ell_1],\Big[\aF_a[\ell_2],\cdots
\Big[\aF_a[\ell_{\fm_{a,b}}],\aF_b[\ell]\Big]\cdots\Big]\Big]=0\,,
\end{equation}
\begin{equation}\label{Sr-m2}
\mathop{\rm Sym}\limits_{\ell_1,\ldots,\ell_{\fm_{a,b}}}
\Big[\aE_a[\ell_1],\Big[\aE_a[\ell_2],\cdots
\Big[\aE_a[\ell_{\fm_{a,b}}],\aE_b[\ell]\Big]\cdots\Big]\Big]=0\,,
\end{equation}
\end{subequations}  
where $\{A,B\}=A\,B+B\,A$ is the anticommutator and\,\footnote{We recall that $\fa_{a,b}\leq0$ when $a\not=b$.} $\fm_{a,b}=1-\fa_{a,b}$
for $a\not=b$. 
The Yangian $\aY{\fg}$ associated to $\fg$ is an associative algebra 
with a unit element $\mathbf{1}$ generated by the set of generators \r{Yang-elem}. 
Sometimes we will call this realization of Yangian as {\sl asymmetric} due to 
the fact that r.h.s. of equalities \r{cr-asym} are different for $a<b$ and $a>b$. 
\end{Def}

\begin{Def}\label{sYn}
The commutation relations in the Yangian algebra 
 $\sY{\fg}$ in its 'new' realization \cite{Dnew} generated by the 
 elements
\begin{equation}
 \yF_a[\ell],\quad \yE_a[\ell],\quad \yH_a[\ell],\quad \ell\geq0,\quad a\in\PSR_\fg
\end{equation}
are given by the same commutation relations \r{cr-m} while the relations \r{cr-asym} are replaced by the equalities
\begin{subequations}\label{cr-ms}
\begin{equation}\label{cr-ms2}
\Big[\yH_a[\ell_1+1],\yF_b[\ell_2]\Big]-\Big[\yH_a[\ell_1],\yF_b[\ell_2+1]\Big]=-\ 
\fb_{a,b}\,
\Big\{\yH_a[\ell_1],\yF_b[\ell_2]\Big\}\,,
\end{equation}
\begin{equation}\label{cr-ms3}
\Big[\yH_a[\ell_1+1],\yE_b[\ell_2]\Big]-\Big[\yH_a[\ell_1],\yE_b[\ell_2+1]\Big]= 
\fb_{a,b}\,
\Big\{\yH_a[\ell_1],\yE_b[\ell_2]\Big\}\,,
\end{equation}
\begin{equation}\label{cr-ms5}
\Big[\yF_a[\ell_1+1],\yF_b[\ell_2]\Big]-\Big[\yF_a[\ell_1],\yF_b[\ell_2+1]\Big]=-\ 
\fb_{a,b}\,
\Big\{\yF_a[\ell_1],\yF_b[\ell_2]\Big\}\,,
\end{equation}
\begin{equation}\label{cr-ms7}
\Big[\yE_a[\ell_1+1],\yE_b[\ell_2]\Big]-\Big[\yE_a[\ell_1],\yE_b[\ell_2+1]\Big]= 
\fb_{a,b}\,
\Big\{\yE_a[\ell_1],\yE_b[\ell_2]\Big\}
\end{equation}
\end{subequations}  
valid for all values of the indices $a,b\in\PSR_\fg$. 
 We will call this realization of the Yangian as {\sl symmetric} because 
the r.h.s. of equalities \r{cr-ms} are 
the same for $a<b$ and $a>b$.  
\end{Def}

One can rewrite the commutation relation in the Yangian $\aY{\fg}$
in terms of half-currents defined as follows 
\begin{equation}\label{h-cur}
\aF^+_a(u)=\sum_{\ell\geq0}\aF_a[\ell]\Big(\frac{u}{c}\Big)^{-\ell-1},\quad
\aE^+_a(u)=\sum_{\ell\geq0}\aE_a[\ell]\Big(\frac{u}{c}\Big)^{-\ell-1},\quad
\aH^+_a(u)=\mathbf{1}+\sum_{\ell\geq0}\aH_a[\ell]\Big(\frac{u}{c}\Big)^{-\ell-1}
\end{equation}
and analogous formulas for the half-currents $\yF^+_a(u)$,
$\yE^+_a(u)$, $\yH^+_a(u)$ in the Yangian $\sY{\fg}$. 
The commutation relations \r{cr-m} can be written in terms 
of half-currents and for the Yangian $\sY{\fg}$ they can be found in \cite{KhT-DY}. 
For the Yangian $\aY{\fg}$ these commutation relations will be the same 
except those corresponding to \r{cr-asym}. For example, 
for the Yangian 
$\aY{\fg}$ the relation \r{cr-m5} in terms of half-currents takes the form 
for $a<b$
\begin{equation}\label{aFFd}
c^{-1}\,(u-v)\Big[\aF^+_a(u),\aF^+_b(v)\Big]=
\Big[\aF_a[0],\aF^+_b(v)\Big]- \Big[\aF^+_a(u),\aF_b[0]\Big]-2\,\fb_{a,b}\,
\aF^+_a(u)\,\aF^+_b(v)\,.
\end{equation}
Analogously, the commutation relation \r{cr-ms5} in the Yangian 
$\sY{\fg}$ takes the form 
\begin{equation}\label{sFFd1}
c^{-1}\,(u-v)\Big[\yF^+_a(u),\yF^+_b(v)\Big]=
\Big[\yF_a[0],\yF^+_b(v)\Big]- \Big[\yF^+_a(u),\yF_b[0]\Big]-\,\fb_{a,b}\,
\Big\{\yF^+_a(u),\yF^+_b(v)\Big\}
\end{equation}
valid for all values $a,b\in\PSR_\fg$.  The commutators with the zero modes 
can be excluded in the relation \r{sFFd1} and it takes the form presented 
in \cite{KhT-DY}
\begin{equation*}
c^{-1}\,(u-v)\sk{\Big[\yF^+_a(u),\yF^+_b(v)\Big]+\Big[\yF^+_b(u),\yF^+_a(v)\Big]}=
\fb_{a,b}\,\Big\{(\yF^+_a(u)-\yF^+_a(v)),(\yF^+_b(u)-\yF^+_b(v))\Big\}.
\end{equation*}

\begin{prop}\label{asY}
The asymmetric  $\aY{\fg}$ and symmetric $\sY{\fg}$ versions of the Yangian 
algebra are isomorphic as associative algebras. 
\end{prop}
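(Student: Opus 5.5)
The plan is to write down an explicit isomorphism given by \emph{shifts of the spectral parameter}, with a shift depending on the Dynkin label. Concretely, I will look for constants $s_a$, $a\in\PSR_\fg$, and set
\begin{equation*}
\aF^+_a(u)=\yF^+_a(u-c\,s_a),\qquad \aE^+_a(u)=\yE^+_a(u-c\,s_a),\qquad \aH^+_a(u)=\yH^+_a(u-c\,s_a),
\end{equation*}
understood as expansions in $u^{-1}$. On modes this reads
\begin{equation*}
\aF_a[\ell]=\sum_{k=0}^{\ell}\binom{\ell}{k}\,s_a^{k}\,\yF_a[\ell-k],
\end{equation*}
and similarly for $\aE_a[\ell]$ and $\aH_a[\ell]$. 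This is a unitriangular change of generators, so it is invertible, with inverse obtained from the shift $-s_a$. It therefore suffices to check that the relations \r{cr-asym}, \r{cr-m} of $\aY{\fg}$ hold for the images, and conversely.

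\textbf{Step 1: fix the shifts.} Write the mode relations \r{cr-asym} and \r{cr-ms} as exchange relations of currents, in the form $(u-v-\alpha)X_a(u)X_b(v)=(u-v-\beta)X_b(v)X_a(u)$ modulo zero-mode terms, as in \r{aFFd} and \r{sFFd1}. For $a<b$ the asymmetric relation has $\alpha,\beta$ equal to $(2\fb_{a,b},0)$, while the symmetric one has $(\fb_{a,b},-\fb_{a,b})$. Substituting $u\to u-c s_a$ and $v\to v-c s_b$ changes $u-v$ by $c(s_b-s_a)$. So the two forms coincide provided $s_b-s_a=\fb_{a,b}$ for every edge $a<b$ of the Dynkin diagram.

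Since the Dynkin diagram of every $\fg$ in the list is a tree, this system is solvable: fix $s$ at one node and propagate along the edges. For non-adjacent $a\neq b$ we have $\fb_{a,b}=0$, so no constraint arises. For $a=b$ the relations are identical in both presentations and invariant under a common shift. The same choice of $s_a$ works for the $H$–$F$ and $H$–$E$ relations \r{cr-m2}, \r{cr-m3}, and for the $E$–$E$ relations \r{cr-m7}, because they have the same structure, with $a<b$ and $a>b$ cases mirroring one another.

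\textbf{Step 2: the relations \r{cr-m}.} For $[\aH_a[\ell_1],\aH_b[\ell_2]]=0$ and \r{cr-m1} the check is immediate, since zero modes are unchanged. The relation $[\aE_a[\ell_1],\aF_b[\ell_2]]=\delta_{a,b}\aH_a[\ell_1+\ell_2]$ survives because $E_a$, $F_a$ and $H_a$ are shifted by the \emph{same} $s_a$. Indeed $\sum\binom{\ell_1}{k_1}\binom{\ell_2}{k_2}s^{k_1+k_2}\yH[\ell_1+\ell_2-k_1-k_2]$ collapses by Vandermonde to $\sum_k\binom{\ell_1+\ell_2}{k}s^k\yH[\ell_1+\ell_2-k]$; this is the standard translation automorphism applied diagonally.

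For the Serre relations \r{Sr-m1}, \r{Sr-m2}, the transformation acts on every $\aF_a$-slot by the same linear map. Hence the symmetrized nested commutator in the modes $\aF_a[\ell_i]$ becomes a linear combination of symmetrized nested commutators in the modes $\yF$, each of which vanishes.

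\textbf{Main obstacle.} I expect the main obstacle to be Step 1 at the level of modes rather than currents. The relations \r{cr-m5} and \r{aFFd} contain the zero-mode commutator terms, and these must transform consistently under the shift. I will handle this by working with the equivalent relation with zero modes excluded, as in the formula following \r{sFFd1}, or directly with the generating relation in two variables $u,v$. In that setting the shift is a change of variables, and verifying the identification reduces to an identity of rational prefactors.
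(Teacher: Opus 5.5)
Your proposal is correct and is essentially the paper's argument. The paper uses the same node-dependent shift $\yF_a(u)=\aF_a(u+c\,x_a)$ with $x_b-x_a=\fb_{a,b}$ for adjacent $a<b$ (so $x_a=s_a$), but checks it on the full currents of the double $\aDY{\fg}$ and then restricts to $\aY{\fg}$, whereas you verify it directly on the nonnegative modes.

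There are a few minor points. First, your $(\alpha,\beta)$ should read $(-2c\,\fb_{a,b},0)$ and $(-c\,\fb_{a,b},c\,\fb_{a,b})$, although your final condition $s_b-s_a=\fb_{a,b}$ is right. Second, the zero-mode terms are not an obstacle: since $\aF_a[0]=\yF_a[0]$, \r{aFFd} maps verbatim onto \r{sFFd1}. The zero-mode-free form, by contrast, is not a mere change of variables under node-dependent shifts. Third, for $\fb_{a,b}=0$, $a\neq b$, unequal shifts are harmless only because the $\fm_{a,b}=1$ Serre relation makes the corresponding modes commute.
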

\proof 
We will provide the proof of this statement in the next section by considering 
Yangian doubles $\aDY{\fg}$ ($\sDY{\fg}$) and inclusions of 
$\aY{\fg}\hookrightarrow\aDY{\fg}$ ($\sY{\fg}\hookrightarrow\sDY{\fg}$).
\qed

\subsection{Yangian doubles $\aDY{\fg}$ and $\sDY{\fg}$}

Let $\aDX{\fg}$ and $\sDX{\fg}$ be  algebras generated by the elements 
given  in the definitions~\ref{aYn} and \ref{sYn} for 
 $\ell\in\ZZ$ and satisfying the commutation relations in the corresponding 
 definitions.  Assign to  each 
 generator in the algebras $\aDX{\fg}$ and $\sDX{\fg}$ its degree 
given by the rules 
\begin{equation}\label{degY}
\begin{array}{l}
{\rm deg}\,(\aF_a[\ell])={\rm deg}\,(\aE_a[\ell])={\rm deg}\,(\aH_a[\ell])=\ell,\\[3mm]
{\rm deg}\,(\yF_a[\ell])={\rm deg}\,(\yE_a[\ell])={\rm deg}\,(\yH_a[\ell])=\ell,
\end{array}
\qquad
{\rm deg}(\mathbf{1})=0\,.
\end{equation}
Both algebras $\aDX{\fg}$ and $\sDX{\fg}$ admits a $\ZZ$-filtration 
\begin{equation}\label{filtr}
\begin{array}{c}
\cdots \subset \aDX{\fg}_{-\ell}\subset \cdots\subset  \aDX{\fg}_{-1}\subset
 \aDX{\fg}_{0}\subset\cdots   
 \subset \aDX{\fg}_{\ell}\subset\cdots \subset  \aDX{\fg}\,,\\[3mm]
 \cdots \subset \sDX{\fg}_{-\ell}\subset \cdots\subset  \sDX{\fg}_{-1}\subset
 \sDX{\fg}_{0}\subset\cdots   
 \subset\sDX{\fg}_{\ell}\subset\cdots \subset  \sDX{\fg}
 \end{array}
\end{equation}
defined by the degrees \r{degY}. In \r{filtr} we define $\aDX{\fg}_{\ell}$ 
(resp. $\sDX{\fg}_{\ell}$) as a 
linear span of the elements from $\aDX{\fg}$ (resp. $\sDX{\fg}$)
with degree less or equal to 
 $\ell\in\ZZ$. Let $\overline{\aDX{\fg}}$ (resp. $\overline{\sDX{\fg}}$)
 be the corresponding formal 
completion of $\aDX{\fg}$ (resp. of $\sDX{\fg}$). 
The definition \ref{aYn} of the Yangian $\aY{\fg}$ (resp. definition \ref{sYn} of $\sY{\fg}$)
defines an inclusion 
$\aY{\fg}\hookrightarrow\overline{\aDX{\fg}}$ 
(resp. $\sY{\fg}\hookrightarrow\overline{\sDX{\fg}}$). 

The algebra $\overline{\sDX{\fg}}$ was introduced by V.~Drinfeld 
in \cite{Dnew,D-FTINT} in the framework of the quantum double 
construction using certain coalgebraic structure of the currents associated 
with a Borel subalgebra of $\fg$. The
Yangian double
$\sDY{\fg}$
was obtained as the quantum double \cite{D} of $\sY{\fg}$ in \cite{KhT-DY}.
Following the results of the paper \cite{KhT-DY} 
we assume that the algebras $\sDY{\fg}$ and $\overline{\sDX{\fg}}$
(resp. $\aDY{\fg}$ and $\overline{\aDX{\fg}}$) are isomorphic as associative 
algebras, but they have different coalgebraic structures that we do not discuss in this paper.

We will investigate the algebraic properties of the algebra
$\aDY{\fg}$ presented in the form of the generating series or currents 
\begin{equation}\label{gen-ser-g}
\begin{aligned}
\aF_a(u)&=\sum_{\ell\in\ZZ}\aF_a[\ell](u/c)^{-\ell-1},
\qquad &\aE_a(u)&=\sum_{\ell\in\ZZ}\aE_a[\ell](u/c)^{-\ell-1}\,,\\
\aH_a^+(u)&=\mathbf{1}+\sum_{\ell\geq 0}\aH_a[\ell](u/c)^{-\ell-1},
\qquad
&\aH_a^-(u)&=\mathbf{1}-\sum_{\ell< 0}\aH_a[\ell](u/c)^{-\ell-1}
\end{aligned}
\end{equation}
using a formal parameter $u$.

To describe this presentation 
we introduce the rational functions $g(u,v)$, $h_{a}(u,v)$, $h_{a,b}(u,v)$, 
$a\not=b$ of the formal parameters $u$ and $v$ as follows
\begin{equation}\label{g-h-f-gen}
g(u,v)=\frac{c}{u-v},\quad 
h_{a}(u,v)=\fb_{a,a}+g(u,v)^{-1},\quad 
h_{a,b}(u,v)=2\,\fb_{a,b}+g(u,v)^{-1},\quad a\not=b\,.
\end{equation}

For any two labels $a,b\in\PSR_\ggo$, the
polynomial  functions   $\cre_{a,b}(u,v)$
are defined as follows
\begin{equation}\label{CuCu}
\cre_{a,b}(u,v)=\begin{cases}
 h_{a,b}(u,v),&a<b,\quad \fb_{a,b}\not=0\,,\\
  g(u,v)^{-1},&a> b,\quad \fb_{a,b}\not=0\,,\\
h_{a}(u,v),&a=b,\quad \fb_{a,b}\not=0\,,\\
1,& \scp_{a,b}=0\,.
\end{cases}
\end{equation}

\begin{Def}\label{aDYn}
The relations between generators of the asymmetric Yangian 
double  $\aDY{\fg}$ are given 
by the relations between currents ($a,b\in\PSR_\fg$)
\begin{subequations}\label{CuCoRe}
\begin{equation}\label{KK}
\aH^{q_1}_a(u)\,\aH^{q_2}_b(v)=\aH^{q_2}_b(v)\,\aH^{q_1}_a(u),\quad 
q_{1,2}=\pm\,,
\end{equation}
\begin{equation}\label{KF}
\cre_{a,b}(u,v)\,\aH^{\pm}_a(u)\, \aF_b(v) +
\cre_{b,a}(v,u)\,\aF_b(v)\, \aH^{\pm}_a(u)=0\,,
\end{equation}
\begin{equation}\label{FF}
\cre_{a,b}(u,v)\, \aF_a(u)\,\aF_b(v)+  \cre_{b,a}(v,u)\, \aF_b(v)\,\aF_a(u)=0\,,
\end{equation}
\begin{equation}\label{KE}
\cre_{a,b}(u,v)\, \aE_b(v)\, \aH^{\pm}_a(u)+
\cre_{b,a}(v,u)\, \aH^{\pm}_a(u)\, \aE_b(v)=0\,,
\end{equation}
\begin{equation}\label{EE}
\cre_{a,b}(u,v)\, \aE_b(v)\,\aE_a(u)+ \cre_{b,a}(v,u)\, \aE_a(u)\,\aE_b(v)=0\,,
\end{equation}
\begin{equation}\label{EF}
\begin{split}
[\aE_a(u),\aF_b(v)]
&=c\ \delta_{a,b}\ \delta(u,v)\Big(\aH^+_{a}(u)-\aH^-_{a}(v)\Big)\,,
\end{split}
\end{equation}
\begin{equation}\label{SR-gF}
\mathop{\rm Sym}\limits_{u_1,\ldots,u_{\fm_{a,b}}}
\Big[\aF_a(u_1),\ldots[\aF_a(u_{\fm_{a,b}}),\aF_{b}(v)]\ldots\Big]=0\,,
\end{equation}
\begin{equation}\label{SR-gE}
\mathop{\rm Sym}\limits_{u_1,\ldots,u_{\fm_{a,b}}}
\Big[\aE_a(u_1),\ldots[\aE_a(u_{\fm_{a,b}}),\aE_{b}(v)]\ldots\Big]=0\,.
\end{equation}
\end{subequations}
In the relation \r{EF} $\delta(u,v)$ is  the formal series 
\begin{equation}\label{delta}
\delta(u,v)=\frac{1}{u}\sum_{\ell\in\ZZ}\frac{v^\ell}{u^\ell}\,.
\end{equation}
In \r{SR-gF} and \r{SR-gE} $a\not=b$, $\fm_{a,b}=1-\ago_{a,b}$ and 
${\rm Sym}_{u_1,\ldots,u_{\fm_{a,b}}}$ stands for  
 the sum over all permutations of the formal 
 parameters $u_1,\ldots,u_{\fm_{a,b}}$:
\begin{equation}\label{Sym}
\mathop{\rm Sym}\limits_{u_1,\ldots,u_{\fm_{a,b}}}G(u_1,\ldots,u_{m_{a,b}})=
\sum_{\sigma\in S_{\fm_{a,b}}}G(u_{\sigma(1)},\ldots,u_{\sigma(\fm_{a,b})})
\end{equation}
for any formal series $G(u_1,\ldots,u_{\fm_{a,b}})$ of the parameters 
$u_1,\ldots,u_{\fm_{a,b}}$. The relations \r{SR-gF} and \r{SR-gE} are called the 
{\sl Serre relations}.
All the relations are understood as formal power series in $u$ and $v$.
\end{Def}

Let us define in the symmetric version of the 
Yangian double $\sDY{\fg}$ the currents $\yF_a(u)$, $\yE_a(u)$, and 
$\yH^\pm_a(u)$ by the formulas similar to \r{gen-ser-g}.

\begin{Def}\label{sDYn}
The commutation relations between generators of the symmetric Yangian 
double  $\sDY{\fg}$ are given by the same relations between currents
as in definition~\ref{aDYn} of the asymmetric Yangian double with replacement 
of the coefficient functions 
$\cre_{a,b}(u,v)\to  \cre^{\rm D}_{a,b}(u,v)$, where 
\begin{equation}\label{ocre}
\cre_{a,b}^{\rm D}(u,v)=\begin{cases}
 h^{\rm D}_{a,b}(u,v),& \fb_{a,b}\not=0\,,\\
1,& \scp_{a,b}=0\, ,
\end{cases}
\end{equation}
for any two labels $a,b\in\PSR_\ggo$ and
\begin{equation}
    h^{\rm D}_{a,b}(u,v) = \fb_{a,b}+g(u,v)^{-1}.
\end{equation}
\end{Def}

\begin{prop}\label{asDY}
The asymmetric  $\aDY{\fg}$ and symmetric $\sDY{\fg}$ versions of the Yangian 
doubles  are isomorphic as associative algebras. 
\end{prop}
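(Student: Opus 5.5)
We construct an explicit isomorphism by twisting the currents with Cartan currents. The difference between $\cre_{a,b}$ and $\cre^{\rm D}_{a,b}$ occurs only for $a\neq b$ with $\fb_{a,b}\neq0$. There the asymmetric pair $(h_{a,b}(u,v),\,g(u,v)^{-1})$ has zeros shifted by $2\fb_{a,b}c$ and $0$, whereas the symmetric pair $(h^{\rm D}_{a,b}(u,v),\,h^{\rm D}_{b,a}(v,u))$ has both zeros shifted by $\fb_{a,b}c$. Every coefficient is linear in $u-v$, so the discrepancy is a uniform shift of the spectral parameter on one side of the Dynkin diagram.

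The plan is to find shifts $s_a\in c\,\ZZ\fb$ (a constant for each label $a$) with $s_a-s_b=\fb_{a,b}$ (times a fixed normalization) for each adjacent pair $a<b$. Then define
\[
\aF_a(u)\mapsto \yF_a(u+s_a),\qquad \aE_a(u)\mapsto \yE_a(u+s_a),\qquad \aH^\pm_a(u)\mapsto \yH^\pm_a(u+s_a).
\]
The shift is understood as re-expansion of the formal series. On modes it is the triangular change of basis $\aF_a[\ell]\mapsto\sum_k\binom{-\ell-1}{k}(s_a/c)^k\yF_a[\ell-k]$, which makes sense in the completions $\overline{\aDX{\fg}}$ and $\overline{\sDX{\fg}}$ with respect to the filtration \r{filtr}.

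Solvability of $s_a-s_b=\fb_{a,b}$ is the first key step. The constraints live on the edges of the Dynkin diagram, which is a tree, so they can always be solved: fix $s$ at one node and propagate along edges. This is where finite type is used, since a cycle would obstruct the construction.

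Next, the relations are checked one at a time. Relation \r{KK} is trivially preserved. For \r{KF} and \r{FF} with $a<b$, substituting $u\to u+s_a$ and $v\to v+s_b$ changes $g(u,v)^{-1}$ by the constant $(s_a-s_b)/c$ up to sign. Equating the asymmetric pair with the symmetric pair is then an elementary identity of linear functions, possibly after multiplying both coefficients by a common scalar; the sign conventions of \r{g-h-f-gen} must be tracked carefully. For $a=b$ and for $\fb_{a,b}=0$ both sides are unchanged, because they depend only on $u-v$ or are constant.

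Relation \r{EF} is preserved because $\delta(u+s_a,v+s_a)=\delta(u,v)$ as formal series; this is the rational $\delta$-identity of the kind collected in appendix~\ref{AppC}. The Serre relations \r{SR-gF} and \r{SR-gE} are preserved because they involve the currents of $a$ at the points $u_i$ all shifted by the same $s_a$, and symmetrization commutes with this shift. The inverse map uses the shifts $-s_a$, so the homomorphism is bijective.

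The main obstacle is making the shift rigorous on currents over $\ell\in\ZZ$ rather than $\ell\geq0$. The series $\aF_a(u+s)$ contains both positive and negative modes, and the re-expansion $\bigl((u+s)/c\bigr)^{-\ell-1}$ for negative $\ell$ gives a polynomial, while for positive $\ell$ it gives an infinite series. I would first work with the half-currents $\aF^\pm_a$, which are separately well defined after shifting. I would then show that the full current is $\aF^+_a-\aF^-_a$ with convergent coefficients in the completion: each mode of the image is a finite combination of modes of higher degree plus a tail of lower degree, which converges in the completion $\overline{\sDX{\fg}}$. Restricted to $\ell\geq0$, the same map gives Proposition~\ref{asY}.
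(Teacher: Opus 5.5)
Your proposal is essentially the paper's proof. The paper also builds the isomorphism from node-dependent shifts of the spectral parameter, $\yF_a(u)=\aF_a(u+c\,x_a)$ (and likewise for $\aE_a$, $\aH^\pm_a$), with $x_b-x_a=\sgn(b-a)\,\fb_{a,b}$ on the edges of the Dynkin diagram. Your $s_a$ is $-c\,x_a$, so your normalization is $s_a-s_b=c\,\fb_{a,b}$ for $a<b$. The paper writes the $x_a$ out explicitly, with a separate formula for $\fe_n$, whereas you solve the edge constraints by propagating along the Dynkin tree. You also address the re-expansion of the shifted currents in the completion, which the paper leaves implicit.

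One slip needs fixing, because your last sentence depends on it: your mode formula is transposed. Comparing coefficients of $(u/c)^{-\ell-1}$ in $\yF_a(u+s_a)$ gives
\[
\aF_a[\ell]\mapsto\sum_{k\geq0}\binom{\ell}{k}\Big(-\frac{s_a}{c}\Big)^k\,\yF_a[\ell-k],
\]
not $\sum_k\binom{-\ell-1}{k}(s_a/c)^k\,\yF_a[\ell-k]$. You used the expansion coefficients of a single term $((u+s_a)/c)^{-\ell-1}$, which is the transpose of the mode map.

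With the correct coefficients, two things hold:
\begin{itemize}
\item For $\ell\geq0$ the sum is finite and involves only $\yF_a[0],\ldots,\yF_a[\ell]$. This is exactly what makes the restriction map $\aY{\fg}$ into $\sY{\fg}$.
\item For $\ell<0$ the sum is an infinite tail of strictly lower degree, which converges in the completion.
\end{itemize}
With your coefficients it is the other way round. Nonnegative modes would be sent to infinite sums containing negative modes, so the restriction to $\ell\geq0$ would not give Proposition~\ref{asY}.

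Two smaller points. No modes of higher degree ever appear in the image. Also, $\delta(u+s,v+s)=\delta(u,v)$ is a direct coefficient check; it is not one of the Serre-type $\delta$-identities of appendix~\ref{AppC}.
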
 

\proof
Let for $m\in\ZZ$
\begin{equation}\label{sign}
\sgn(m)=\begin{cases}
1,&m>0\,,\\ 0,&m=0\,,\\ -1,&m<0
\end{cases}
\end{equation}
be the sign function. Since, the functions $\cre_{a,b}(u,v)$ \r{CuCu} 
and $\cre^{\rm D}_{a,b}(u,v)$ \r{ocre} are related as follows
\begin{equation}\label{cre-creD}
    \cre_{a,b}(u,v) = \cre^{\rm D}_{a,b}
    (u + \sgn(b-a)\, c\, \fb_{a,b}\,  ,\,v)
\end{equation}
we can define a shift parameter $x_a$ for each node of the Dynkin diagram $a\in\PSR_\ggo$ such that
\begin{equation}
    x_b - x_a =  \sgn(b-a)\, \fb_{a,b} \quad \text{if}\quad \fb_{a,b}\not=0\,.
\end{equation}
Then for $\fg \ne \fe_{n}$ 
\begin{equation}\label{x-par}
    x_a = \sum_{b < a,\ b\in\PSR_{\fg}} \fb_{b,b+1}\,,\quad a\in\PSR_{\fg}
\end{equation}
and for $\fg = \fe_{n}$ 
\begin{equation*}
    x_a = 
    \begin{cases}
        \sum_{b < a} \fb_{b,b+1}, & a < n \\
       x_{n-3} +  \fb_{n-3,n}, & a = n.
    \end{cases}
\end{equation*}

Using the parameters $x_a$ we define the map $\tau$ acting on the currents 
of the asymmetric Yangian double $\aDY{\fg}$ by the
shifts of the formal parameters
\begin{equation}\label{tau-map}
\begin{aligned}
\tau:& &\aH^\pm_a(u)&\to \aH^\pm_a(u+c\,x_a) =\yH^\pm_a(u)\,,\\
\tau:& &\aF_a(u)&\to \aF_a(u+c\,x_a)\,\ =\yF_a(u)\,,\\
\tau:& &\aE_a(u)&\to \aE_a(u+c\,x_a)\ =\yE_a(u)\,.
\end{aligned}
\end{equation}
The map $\tau$ transforms the commutation relation \r{CuCoRe} of the asymmetric 
version of the Yangian double $\aDY{\fg}$ to the commutation relations 
of the definition~\ref{sDYn} for the symmetric version of the Yangian 
double $\sDY{\fg}$. 

This proves the proposition~\ref{asDY}. Since the map $\tau$ \r{tau-map} 
does not alter the inclusions  
$\aY{\fg}\hookrightarrow\aDY{\fg}$ and $\sY{\fg}\hookrightarrow\sDY{\fg}$
this also proves the proposition~\ref{asY}. 
\qed

In what follows we will consider only the asymmetric version of the Yangian 
double $\aDY{\fg}$ and will investigate only the algebraic 
properties of the currents $\aF_a(u)$ which follows from the 
commutation relations \r{FF} and the Serre relations \r{SR-gF}.

\subsection{Analytical properties of the currents}\label{cc-property}

The commutation relations \r{CuCoRe} 
in the asymmetric Yangian double demonstrates that there is a 
subalgebra $Y_\aF\subset\aDY{\fg}$ 
generated by the modes $\aF_a[\ell]$, $a\in\PSR_\fg$, $\ell\in\ZZ$ 
with the relations given by the commutation relation 
\r{FF} and the Serre relation \r{SR-gF}.

Although the currents are defined as formal series depending on formal parameters 
one can assign certain analytical properties 
to the products of the currents $\aF_a(u)\aF_b(v)$ 
which follow from their algebraic properties
when representations of the whole algebra $\aDY{\fg}$ are
restricted to the category of the highest weight representations.
In this case one can treat these products as analytical operator valued functions
in certain domains of the complex parameters $u$ and $v$.

We consider a completion of the subalgebra $Y_\aF\in\aDY{\fg}$ according to the  following 
definition.
\begin{Def}\label{Comp}
 The completed algebra $\overline{Y}_\aF$ is the vector space
of series given as infinite sums of monomials 
$\yF_{a_1}[\ell_1]\cdots \yF_{a_k}[\ell_k]$ with $\ell_1\leq \cdots\leq \ell_k$
such that $\ell_1+\cdots+\ell_k$ is fixed. The elements of the completed algebra
 $\overline{Y}_\aF$ are well-defined operators in any highest weight 
 representation of the Yangian double $\aDY{\fg}$.
\end{Def}

In order to compare any monomials composed from the modes of the 
currents $\aF_a[\ell]$, $a\in\PSR_\fg$, $\ell\in\ZZ$ one first has to normal order 
them according to the definition \r{Comp} using the commutation relations \r{cr-m5}
between modes. 

Let $\cV$  be a highest weight representation 
 of $\aDY{\fg}$ and $\cV^*$ be a dual highest weight representation. 
 Let $\ups$, $\ups^*$ be two generic vectors in $\cV$ and $\cV^*$ respectively.
Consider the matrix element of the product of the currents between these vectors
$\langle\ups^*,\aF_b(u)\,\aF_a(v)\,\ups\rangle$
for $a\leq b$, $\fb_{a,b} \ne 0$ such that this 
matrix element is not vanishing. 

Consider first the relation 
\r{FF} for $a=b$. By induction one can prove following proposition
\begin{prop}\label{no-aa}
In the Yangian double restricted to the category of the highest weight representation 
the relation \r{FF} for $a=b$ is equivalent to the relation 
\begin{equation}\label{no-aa1}
[\aF_a[\ell_1],\aF_a[\ell_2]]=\fb_{a,a}\sum_{\ell\geq0}
\Big(\aF_a[\ell_2-\ell-1]\,\aF_a[\ell_1+\ell]-\aF_a[\ell_1-\ell-1]\,\aF_a[\ell_2+\ell]\Big)
\end{equation}
for any fixed $\ell_1,\ell_2\in\ZZ$. 
\end{prop}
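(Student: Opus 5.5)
The plan is to extract from \r{FF} with $a=b$ a mode relation, and then solve that recursion by iterating it in the direction allowed by the highest weight category. For $a=b$ we have $\cre_{a,a}(u,v)=h_a(u,v)=\fb_{a,a}+(u-v)/c$ and $\cre_{a,a}(v,u)=\fb_{a,a}-(u-v)/c$. Relation \r{FF} therefore reads
\begin{equation*}
c^{-1}(u-v)\,[\aF_a(u),\aF_a(v)]=-\fb_{a,a}\{\aF_a(u),\aF_a(v)\}.
\end{equation*}
Substituting \r{gen-ser-g} and comparing the coefficients of $(u/c)^{-\ell_1-1}(v/c)^{-\ell_2-1}$, we obtain the mode relation \r{cr-m5} with $a=b$, now for all $\ell_1,\ell_2\in\ZZ$:
\begin{equation*}
[\aF_a[\ell_1+1],\aF_a[\ell_2]]-[\aF_a[\ell_1],\aF_a[\ell_2+1]]=-\fb_{a,a}\{\aF_a[\ell_1],\aF_a[\ell_2]\}.
\end{equation*}
Conversely, this family of mode relations for all $\ell_1,\ell_2$ is equivalent to \r{FF} for $a=b$ as an identity of formal series. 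So it suffices to show that, in the category of highest weight representations, this recursion is equivalent to \r{no-aa1}.

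For the direction \r{no-aa1} $\Rightarrow$ recursion, write $C(\ell_1,\ell_2)$ for the right-hand side of \r{no-aa1} and compute $C(\ell_1+1,\ell_2)-C(\ell_1,\ell_2+1)$. In the resulting difference of sums, the terms with $\ell\geq1$ telescope after shifting $\ell\to\ell+1$. Only the $\ell=0$ contributions survive, and they give $-\fb_{a,a}\big(\aF_a[\ell_2]\aF_a[\ell_1]+\aF_a[\ell_1]\aF_a[\ell_2]\big)$, which is exactly the right-hand side of the recursion. This is a routine index shift. We must also check that the infinite sums make sense: in each monomial $\aF_a[\ell_2-\ell-1]\aF_a[\ell_1+\ell]$ the left mode goes to $-\infty$ and the right one to $+\infty$. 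This is exactly the ordering allowed by Definition~\ref{Comp}. On any highest weight vector $\ups$, $\aF_a[m]\ups=0$ for $m\gg0$ by the degree grading, so matrix elements $\langle\ups^*,\cdot\,\ups\rangle$ of these sums are finite.

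For the direction recursion $\Rightarrow$ \r{no-aa1}, set $X(\ell_1,\ell_2)=[\aF_a[\ell_1],\aF_a[\ell_2]]-C(\ell_1,\ell_2)$. By the computation above, $X(\ell_1+1,\ell_2)=X(\ell_1,\ell_2+1)$, so $X(\ell_1,\ell_2)$ depends only on $N=\ell_1+\ell_2$ and on nothing else along each diagonal. Antisymmetry gives $X(\ell_1,\ell_2)=-X(\ell_2,\ell_1)$, since both the commutator and $C$ are antisymmetric. Combining the two, $X(\ell_1,\ell_2)=X(\ell_2,\ell_1)$ whenever $\ell_1+\ell_2=N$, hence $X=-X=0$. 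This argument needs no induction in fact, but its steps must hold in the completion.

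The main obstacle is justifying these manipulations as identities in $\overline{Y}_\aF$ rather than formally. Two points need care: the telescoping rearrangements of infinite sums, and the claim that the difference relation transports $X$ along a diagonal. Both are legitimate only after passing to matrix elements between $\ups^*$ and $\ups$ in highest weight modules, where every sum truncates to a finite one. I would argue degree by degree. For fixed vectors only finitely many terms contribute, so the shifted sums agree termwise, and equality of all matrix elements gives the operator identity in the sense of Definition~\ref{Comp}. If one prefers an explicit induction, one can instead fix $N$ and induct on $\ell_1-\ell_2$ starting from $\ell_1=\ell_2$, where both sides of \r{no-aa1} vanish.
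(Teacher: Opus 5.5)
Your proof is correct and is essentially the argument the paper compresses into ``by induction''. You reduce \r{FF} for $a=b$ to the mode recursion \r{cr-m5} on all of $\ZZ^2$, check by telescoping that the right-hand side of \r{no-aa1} obeys the same recursion, and conclude from constancy of the difference along each antidiagonal $\ell_1+\ell_2=N$ together with antisymmetry; that same antisymmetry step is what supplies the base case $\ell_1=\ell_2+1$ for odd $N$ in your alternative induction, where $\ell_1=\ell_2$ is unavailable. The only non-formal input, $\aF_a[m]\,\ups=0$ for $m\gg0$, should be taken as part of the definition of the highest weight category as in \cite{E} rather than derived from a degree grading, since $\deg$ only defines a filtration (the two sides of \r{cr-m5} differ in degree by one).
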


The relation \r{no-aa1} between modes of the simple root current $F_a(u)$ can be 
written as the commutation relation 
\begin{equation}\label{no-aa2}
f_a(u,v)\,\aF_a(u)\,\aF_a(v)=f_a(v,u)\,\aF(v)\,\aF(u)\,,
\end{equation}
where the function $f_a(u,v)=h_a(u,v)\,g(u,v)$ (resp. $f_a(v,u)$) in the left (resp. right) side of the equality 
\r{no-aa2} should be understood as a series over the ratio $v/u$ (resp. $u/v$). 

Then the commutation relation \r{FF} takes the form 
\begin{equation}\label{FFalt}
\ocre_{a,b}(u,v)\, \aF_a(u)\,\aF_b(v)=  \ocre_{b,a}(v,u)\, \aF_b(v)\,\aF_a(u)\,,
\end{equation}
where 
\begin{equation}\label{CuCu-alt}
\ocre_{a,b}(u,v)=\begin{cases}
 h_{a,b}(u,v)&a<b,\quad \fb_{a,b}\not=0\,,\\
  g(v,u)^{-1}&a> b,\quad \fb_{a,b}\not=0\,,\\
f_{a}(u,v),&a=b,\quad \fb_{a,b}\not=0\,,\\
1,& \scp_{a,b}=0\,.
\end{cases}
\end{equation}
Note that the sign in the case $a > b$ has been changed compared to eq. \eqref{CuCu}.

Let us show that the normal ordering 
of the current modes in the product  $\aF_b(u)\,\aF_a(v)$,  $a<b$, $\fb_{a,b} \ne 0$
provides poles of the function 
$\langle \ups^*,\aF_b(u)\,\aF_a(v)\,\ups\rangle\in\CC$ when $u\to v$. 
To do this we need the following proposition which can be proved by induction.
\begin{prop}\label{norm-ord}
The commutation relations \r{cr-m5} for $a<b$, $\ell_1<0$, $\ell_2\geq 0$ can be 
rewritten as the normal ordering relation 
\begin{equation}\label{no1}
\begin{split}
\aF_b[\ell_2]\,\aF_a[\ell_1]&=\aF_a[\ell_1]\,\aF_b[\ell_2]+\\
&\quad+\Theta(\ell_1+\ell_2)\,\Big[\aF_b[\ell_1+\ell_2],\aF_a[0]\Big]+
\Theta(-1-\ell_1-\ell_2)\,\Big[\aF_b[0],\aF_a[\ell_1+\ell_2]\Big]\\
&\qquad-2\,\fb_{a,b}\sum_{\ell\geq 0}\Theta(\ell_2-\ell-1)\Theta(-\ell_1-\ell-1)
\aF_a[\ell_1+\ell]\,\aF_b[\ell_2-\ell-1]\,,
\end{split}
\end{equation}
where $\Theta(\ell)$, $\ell\in\ZZ$ is the Heaviside step-function defined as follows
\begin{equation*}
\Theta(\ell)=\begin{cases}
1,&\ell\geq 0\,,\\ 0,&\ell<0\,.
\end{cases}
\end{equation*}
\end{prop}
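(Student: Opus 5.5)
The plan is a finite telescoping argument along the anti-diagonals $\ell_1+\ell_2=\mathrm{const}$. No completion or representation-theoretic input is needed.

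\emph{Rewriting \r{cr-m5}.} Fix $a<b$ with $\fb_{a,b}\neq0$ and set
\begin{equation*}
K(m,n)=\Big[\aF_b[n],\aF_a[m]\Big],\qquad m,n\in\ZZ .
\end{equation*}
Since $[\aF_a[m],\aF_b[n]]=-K(m,n)$, the relation \r{cr-m5} (extended to all integer modes in $\aDX{\fg}$) reads $-K(m+1,n)+K(m,n+1)=-2\fb_{a,b}\aF_a[m]\aF_b[n]$. Equivalently,
\begin{equation*}
K(m,n)=K(m+1,n-1)-2\fb_{a,b}\,\aF_a[m]\,\aF_b[n-1].
\end{equation*}
This step moves one unit of degree from the $b$-mode to the $a$-mode while keeping $N=m+n$ fixed. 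The correction term is already normal ordered, with the $a$-mode on the left.

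\emph{Case $N=\ell_1+\ell_2\ge0$.} Start from $(m,n)=(\ell_1,\ell_2)$ with $\ell_1<0\le\ell_2$. Apply the step $-\ell_1$ times to reach $(0,N)$. This gives
\begin{equation*}
K(\ell_1,\ell_2)=\Big[\aF_b[N],\aF_a[0]\Big]-2\fb_{a,b}\sum_{\ell=0}^{-\ell_1-1}\aF_a[\ell_1+\ell]\,\aF_b[\ell_2-\ell-1].
\end{equation*}
Here $N\ge0$ ensures $-\ell_1-1\le\ell_2-1$, so the condition $\Theta(\ell_2-\ell-1)$ is automatically satisfied. The range of the sum is exactly the one cut out by $\Theta(\ell_2-\ell-1)\Theta(-\ell_1-\ell-1)$.

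\emph{Case $N<0$.} Apply the step $\ell_2$ times to reach $(N,0)$. This gives $K(\ell_1,\ell_2)=[\aF_b[0],\aF_a[N]]$ plus the same kind of sum, now with $\ell=0,\dots,\ell_2-1$. Since $-\ell_1>\ell_2$, the constraint $\ell\le-\ell_1-1$ is automatic, so again the range is the one given by the product of Heaviside functions.

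\emph{Conclusion.} Combine the two cases with the factors $\Theta(\ell_1+\ell_2)$ and $\Theta(-1-\ell_1-\ell_2)$ to select the boundary commutator. Then insert the result into
\begin{equation*}
\aF_b[\ell_2]\aF_a[\ell_1]=\aF_a[\ell_1]\aF_b[\ell_2]+K(\ell_1,\ell_2),
\end{equation*}
which yields \r{no1}.

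Conversely, \r{no1} for all such $\ell_1,\ell_2$ reproduces the one-step recursion, so the two forms are equivalent.

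The only delicate point is bookkeeping the summation limits in the two regimes, and checking that one Heaviside factor is redundant in each regime. I would verify this by a direct induction on $\min(-\ell_1,\ell_2)$, with base case $\ell_1=-1$ or $\ell_2=0$.
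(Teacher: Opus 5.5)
Your telescoping along the anti-diagonal $\ell_1+\ell_2=N$ is correct. It is essentially the induction the paper has in mind; the paper only says the result "can be proved by induction" and gives no details. Each step you take uses \r{cr-m5} only in the mixed range $m<0$, $n\ge 1$, and your check of which Heaviside factor is redundant in each regime holds up (also, the restriction $\fb_{a,b}\neq0$ is unnecessary, since the same argument works verbatim when $\fb_{a,b}=0$).
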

Note that the r.h.s. of the equality \r{no1} always has a finite number of normal ordered terms 
for any fixed $\ell_1<0$ and $\ell_2\geq0$.

For any formal series $\yG(z)=\sum_{\ell\in\ZZ}\yG[\ell]z^{-\ell-1}$ we denote 
$\yG^{+}(z)=\sum_{\ell\geq 0}\yG[\ell]z^{-\ell-1}$, and 
$\yG^{-}(z)=-\ \sum_{\ell< 0}\yG[\ell]z^{-\ell-1}$. The simple root current 
$\aF_a(z)$, $a\in\PSR_\fg$ can be presented as a linear combination 
$\aF_a(z)=\aF^{+}_a(z)-\aF^{-}_a(z)$ and the product of two 
simple root currents $\aF_b(u)\,\aF_a(v)$ as the sum of four terms 
\begin{equation}\label{hwp6}
\begin{split}
\aF_b(u)\,\aF_a(v)&=\aF^{+}_b(u)\,\aF^{+}_a(v)+\aF^{-}_b(u)\,\aF^{-}_a(v)-\\
&\qquad -\ \aF^{-}_b(u)\,\aF^{+}_a(v)-\aF^{+}_b(u)\,\aF^{-}_a(v)\,.
\end{split}
\end{equation}

It is clear that the first three terms in the r.h.s. of \r{hwp6} belongs to 
completed subalgebra $\oY_\aF$, while the last one does not. 
In order to consider the product of the currents $\aF_a(z)\,\aF_b(w)$
as an element of $\oY_\aF$ we have to exchange the modes
 $\aF_b[\ell_2]\,\aF_a[\ell_1]$ with $\ell_2\geq0>\ell_1$ in the product of the half-currents 
$\aF^{+}_b(u)\,\aF^{-}_a(v)$ using the commutation 
relations \r{cr-m5} in the form \r{no1}. 
After this normal ordering procedure 
the product $\aF_b(u)\,\aF_a(v)$ will belong to the completed subalgebra 
$\oY_\aF$ and the analytical structure of the function 
$\langle\ups^*,\aF_b(u)\,\aF_a(v)\,\ups\rangle$ in the domain 
$|u|\gg|v|$ is revealed
\begin{equation}\label{hwp7}
\begin{split}
&\langle\ups^*,\aF_b(u)\,\aF_a(v)\,\ups\rangle=\\
&\quad=
\langle\ups^*,(\aF^{+}_b(u)\,\aF^{+}_a(v)+\aF^{-}_b(u)\,\aF^{-}_a(v) - \aF^{-}_b(u)\,\aF^{+}_a(v)
-\aF^{-}_a(v)\,\aF^{+}_b(u))\,\ups\rangle\\
&\qquad+
\frac{c}{u-v}\,\langle\ups^*,(2\,\fb_{a,b}\,\aF^{-}_a(v)\,\aF^{+}_b(u)
+[\aF^+_b(u),\aF_a[0]]-[\aF_b[0],\aF^-_a(v)])\,\ups\rangle\,.
\end{split}
\end{equation}
Here the analytical function $1/(u-v)$ is obtained as a series $u^{-1}\sum_{\ell\geq 0}
(v/u)^\ell$ in the domain $|u|\gg|v|$. Loosely speaking and due to the commutation relation 
\r{cr-m5} we may consider the product of the currents $\aF_b(u)\,\aF_a(v)$ 
as an operator valued analytical function in the domain  $|u|\gg|v|$ with a simple 
pole at $u=v$. 

Analogously one can show that the product $\aF_a(v)\,\aF_b(u)$ may be treated 
as the analytical operator valued function in the domain $|u|\ll|v|$ with 
a simple pole at $v=u+2\,c\,\fb_{a,b}$. 
Then the analytical continuation 
of this product to the domain $|u|\gg|v|$ will coincide with 
the element of the completed subalgebra $\oY_\aF$
\begin{equation*}
\frac{\ocre_{b,a}(u,v)}{\ocre_{a,b}(v,u)}\ \aF_b(u)\,\aF_a(v)
\end{equation*}
which should be also normal ordered. This demonstrates that in order to 
compare any products of the currents in the category of the highest weight 
representations we first have to make 
sure that all monomials in this product are normal ordered or 
lie in the completed subalgebra $\oY_\aF$. 
We will detailed these normal ordering calculations in the appendix~\ref{AppA}
for the small rank algebra $\fg=\fg\fl_3$. 

Following the ideas of the papers \cite{E,DKhP} one can extend 
the arguments above to the 
 matrix coefficient 
 \begin{equation}\label{hwp1}
 \begin{split}
 &\langle\ups^*,\aF_{a_1}(z_1)\cdots \aF_{a_k}(z_k)\,\ups\rangle=\\
&\qquad= \sum_{\ell_1,\ldots,\ell_k\in\ZZ}
 \langle\ups^*,\aF_{a_1}[\ell_1]\cdots \aF_{a_k}[\ell_k]\,\ups\rangle\ 
 (z_1/c)^{-\ell_1-1}\cdots (z_k/c)^{-\ell_k-1}
 \end{split}
 \end{equation}
 as a formal power series over complex parameters $z_1,\ldots,z_k$
 such that not all coefficients 
 $\langle\ups^*,\aF_{a_1}[\ell_1]\cdots \aF_{a_k}[\ell_k]\,\ups\rangle$
 are vanishing. 
 Using the commutation relations between modes of the currents 
 $\aF_a(z)$ \r{cr-m5} in the form of the normal ordering relations \r{no1}
 allows to demonstrate that  the formal power series \r{hwp1} 
 \begin{equation}\label{hwp2}
 \langle\ups^*,\aF_{a_1}(z_1)\cdots \aF_{a_k}(z_k)\,\ups\rangle\in
 \CC\Big[\frac{z_1}{c},\frac{c}{z_1},\ldots,\frac{z_k}{c},\frac{c}{z_k}\Big]\ \Big[\Big[
 \frac{z_2}{z_1}, \frac{z_3}{z_2},\ldots, \frac{z_{k-1}}{z_{k-2}}, \frac{z_{k}}{z_{k-1}}
 \Big]\Big]
 \end{equation}
 can be presented as Taylor series over the variables 
 $ z_2/z_1, z_3/z_2,\ldots, z_{k-1}/z_{k-2}, 
 z_{k}/z_{k-1}$ with coefficients being polynomials 
 over ${z_1}/{c},{c}/{z_1},\ldots,{z_k}/{c},{c}/{z_k}$. Repeating the
 arguments of the paper \cite{E} this signifies that the formal 
 power series \r{hwp1} converges to a meromorphic function. 
 Then the commutation relations \r{FFalt} allows to  characterize partially 
 this meromorphic function. 
 
 Let $a_i\in\PSR_\fg$ for $i=1,\ldots,k$.
 Let us consider the matrix coefficient between fixed degree and dual 
  fixed degree vectors $\ups$ and $\ups^*$  of the relation 
 \begin{equation}\label{hwp3} 
 \prod_{i<j}^k\ocre_{a_i,a_j}(z_i,z_j)\,\aF_{a_1}(z_1)\cdots \aF_{a_k}(z_k)=
 \prod_{i<j}^k\ocre_{a_j,a_i}(z_j,z_i)\,\aF_{a_k}(z_k)\cdots \aF_{a_1}(z_1)\,.
 \end{equation}
 The formal series equality \r{hwp3} is a direct consequence of the commutation 
 relations \r{FFalt}. The arguments of the paper \cite{E} which were shown 
 above yield  
  \begin{equation}\label{hwp4}
 \begin{split}
  &\prod_{i<j}^k\ocre_{a_i,a_j}(z_i,z_j)\,
  \langle\ups^*,\aF_{a_1}(z_1)\cdots \aF_{a_k}(z_k)\,\ups\rangle\in\\
&\qquad\in 
 \CC\Big[\frac{z_1}{c},\frac{c}{z_1},\ldots,\frac{z_k}{c},\frac{c}{z_k}\Big]\ \Big[\Big[
 \frac{z_2}{z_1}, \frac{z_3}{z_2},\ldots, \frac{z_{k-1}}{z_{k-2}}, \frac{z_{k}}{z_{k-1}}
  \Big]\Big]\,.
 \end{split}
 \end{equation}
 Repeating the same arguments to the r.h.s. of the relation \r{hwp4} 
 one gets 
  \begin{equation}\label{hwp5}
 \begin{split}
  &\prod_{i<j}^k\ocre_{a_j,a_i}(z_j,z_i)\,
  \langle\ups^*,\aF_{a_k}(z_k)\cdots \aF_{a_1}(z_1)\,\ups\rangle\in\\
&\qquad\in 
 \CC\Big[\frac{z_1}{c},\frac{c}{z_1},\ldots,\frac{z_k}{c},\frac{c}{z_k}\Big]\ \Big[\Big[
 \frac{z_{k-1}}{z_k}, \frac{z_{k-2}}{z_{k-1}},\ldots, \frac{z_{2}}{z_{3}},
 \frac{z_{1}}{z_{2}}
\Big]\Big]\,.
 \end{split}
 \end{equation}
 The equality \r{hwp3} means that the matrix coefficients of the l.h.s. and the r.h.s.
 of this equality belong to the intersection of the spaces 
 \begin{equation*}
  \CC\Big[\frac{z_1}{c},\frac{c}{z_1},\ldots,\frac{z_k}{c},\frac{c}{z_k}\Big]\ \Big[\Big[
 \frac{z_2}{z_1}, \ldots, \frac{z_{k-1}}{z_{k-2}}, \frac{z_{k}}{z_{k-1}}
 \Big]\Big]\cap
 \CC\Big[\frac{z_1}{c},\frac{c}{z_1},\ldots,\frac{z_k}{c},\frac{c}{z_k}\Big]\ \Big[\Big[
 \frac{z_{k-1}}{z_k}, \ldots, \frac{z_{2}}{z_{3}}, \frac{z_{1}}{z_{2}}
 \Big]\Big]
 \end{equation*}
 namely, to the space of polynomials 
 \begin{equation*}
  \CC\Big[\frac{z_1}{c},\frac{c}{z_1},\ldots,\frac{z_k}{c},\frac{c}{z_k}\Big]
 \end{equation*}
 over variables ${z_1}/{c},{c}/{z_1},\ldots,{z_k}/{c},{c}/{z_k}$. It means 
 that the series \r{hwp1} converges in the domain $z_j\gg z_i$ for 
 $i<j$  to a meromorphic function in $(\CC^*)^k$  
  with simple poles (zeros) defined by the simple zeros (poles) of the function 
 $\prod_{i<j}\ocre_{a_i,a_j}(z_i,z_j)$. 
 
 Hence, in the category of the highest weight representations, one can 
 treat the generating functions $\aF_{a_1}(z_1)\cdots \aF_{a_k}(z_k)$
 as operator valued functions, analytical in the region 
 $|z_1|\gg|z_2|\gg\cdots\gg|z_k|$. Due to the fact that 
 the product of Taylor series is well defined, the product 
 $\aF_{a_1}(z_1)\cdots \aF_{a_k}(z_k)$
  coincides with the  
 product $\Big(\aF_{a_1}(z_1)\cdots \aF_{a_\ell}(z_\ell)\Big)
 \cdot\Big(\aF_{a_{\ell+1}}(z_{\ell+1})\cdots \aF_{a_k}(z_k)\Big)$ for 
 any $\ell=1,\ldots,k-1$. Moreover, the commutation relations \r{FFalt}
 describe the analytical continuation of the operator valued 
 functions to other regions
 explained below using the normal ordering procedure.

Once again, we have argued above, following the paper \cite{E}, that the 
restriction of the Yangian double $\aDY{\fg}$ realized in terms of the currents 
\r{gen-ser-g}  to the category of the highest weight representations
leads to certain analytical properties of the products of  currents 
$ \aF_{a_1}(z_1)\cdots \aF_{a_k}(z_k)$
dictated by the commutation relations \r{FFalt} and 
the Serre relations \r{SR-gF}. We will explore these 
analytical properties below to describe the composed currents as elements 
of the completed subalgebra $\oY_\aF$. 
We will also  demonstrate in this case that 
the Serre relations for the simple root currents  can be reduced to the quadratic commutation relations for the composed currents. 
We will describe the composed currents associated with 
all positive roots of the algebra $\fg$
only for the classical series and
postpone such description for exceptional algebras to our future 
publications. Nevertheless, we will demonstrate certain 
analytical properties of the particular composed current in the Yangian 
double $\aDY{\fg_2}$ for the exceptional algebra $\fg=\fg_2$ in proposition 
\ref{Enr5} and appendix~\ref{AppB}.

The analytical properties of the products of  currents $\aE_a(u)$ can be 
similarly considered.  We do not describe these properties in this paper.

\section{Analytical properties and Serre relations}
\label{an-pro-BV}

The goal of this section is to demonstrate that the Serre relations for 
the currents $\aF_a(z)$, $a\in\PSR_\fg$ lead to the apparition of additional zeros 
in the products of these currents. This result is similar to the ones obtained in \cite{E} for the quantum affine algebras.

When $\fm_{a,b}=1$ 
($\fa_{a,b}=0$)
the Serre relations are equivalent to the commutativity of the 
currents $[\aF_a(v),\aF_b(u)]=0$ \r{FFalt}.
When $\ago_{a,b}\not =0$, 
the positive integer $\fm_{a,b}=2$ ($\fa_{a,b}=-1$) for all simply laced algebras 
$\fs\fl_n$, $\fo_{2n}$, $\fe_6$, $\fe_7$, $\fe_8$
and the Serre relations \r{SR-gF} become 
 third order relations between currents. 
It is also  equal to 2 in most of the cases for other simple Lie algebras 
except for the cases $\fm_{a,b}=3$ ($\fa_{a,b}=-2$) and 
 $\fm_{a,b}=4$ ($\fa_{a,b}=-3$). The former case 
  happens when $a=0$, $b=1$ for  $\fg=\ggb$; 
 $a=2$, $b=3$ for  $\fg=\ff_4$,
  and  $a=1$, $b=0$ for $\fg=\gsp$. 
 The latter case arises when $a=2$, $b=1$ for $\fg=\fg_2$.
 In these cases the Serre relations become 
forth and fifth order relations respectively.

\subsection{Serre strata}\label{sect31}

For any two different nodes $a\neq b$ of the Dynkin diagram of $\ggo$ 
such that $\fb_{a,b}\neq 0$ we consider the cardinality 1 set $\bt^b=\{t^b\}$
and the cardinality $\fm_{a,b}=1-\ago_{a,b}$ set 
$\bt^a=\{t^a_1,\ldots,t^a_{\fm_{a,b}}\}$.   
We use these notations for the parameters in view of further applications 
to $\fg$-invariant integrable models when the parameters 
$t^s_\ell$ become the Bethe parameters of the corresponding off-shell 
Bethe vectors \cite{LPR-BV-CU}.

Let us consider the following  element from the subalgebra $\oY_\aF$
\begin{equation}\label{Ser-ele}
\begin{split}
\sF_{a,b}(\bt^a,t^b)&=
\prod_{\ell_1<\ell_2}^{\fm_{a,b}}\ocre_{a,a}(t^a_{\ell_1},t^a_{\ell_2})\ 
\prod_{\ell=1}^{\fm_{a,b}}
\ocre_{b,a}(t^b,t^a_\ell)\ \aF_b(t^b)\,
\aF_a(t^a_1)\cdots \aF_a(t^a_{\fm_{a,b}})=\\
 &=
\prod_{\ell_1<\ell_2}^{\fm_{a,b}}\ocre_{a,a}(t^a_{\ell_1},t^a_{\ell_2})\ 
\prod_{\ell=1}^{\fm_{a,b}}
\ocre_{a,b}(t^a_\ell,t^b)\ 
\aF_a(t^a_1)\cdots \aF_a(t^a_{\fm_{a,b}})\, \aF_b(t^b)\,.
 \end{split}
\end{equation}
The set $\bt^a$ in \eqref{Ser-ele} has  
cardinality $\fm_{a,b}=2,3,4$.
The equality between the first and the second line in \r{Ser-ele} as well as the
symmetry  of this element  with respect to permutations 
in the set $\bt^a$ follow from the commutation relations 
\r{FFalt}. 

There are two approaches to prove the vanishing of the elements
 \r{Ser-ele} on some hyperplan (called Serre strata) in the $\{\bar t^a,t^b\}$ space. 
 One approach is due 
to B.~Enriquez \cite{E}. It uses certain identities for the distributions
associated to the  Serre relations  
 to prove the vanishing properties 
of the elements \r{Ser-ele}. We  sketch this approach in the 
appendix~\ref{AppC}.  

The second approach was introduced in the paper \cite{DKh} by 
considering the composed currents. It was shown  
that the commutation relations 
of the composed currents with the simple root currents are equivalent 
to the Serre relations. 
We follow this second strategy in the following propositions.

To make an analogy with simple Lie algebras, one can view the definitions \eqref{aYn} or \eqref{sYn} as Serre-Chevalley bases of the Yangian, while the use of composed current leads to a Cartan-Weyl basis for the same Yangian.

We consider the Serre relation
\begin{equation}\label{Ser1}
\sum_{\sigma\in S_2}
\Big[\aF_a(t^a_{\sigma(1)}),\Big[\aF_a(t^a_{\sigma(2)}),\aF_b(t^b)\Big]\Big]=0\,.
\end{equation}
\begin{prop}\label{Enr1}
 Let $\fb_{a,b}\not=0$, $\fm_{a,b}=2$ and $a<b$. We consider the element  
\begin{equation}\label{pr-cu-1}
\sF_{a,b}(\bt^a,t^b)=
\frac{f_a(t^a_2,t^a_1)}{g(t^a_2,t^b)\,g(t^a_1,t^b)}\,
\aF_{b}(t^b)\, \aF_a(t^a_2)\, \aF_a(t^a_1)
\end{equation} 
(i) The element \r{pr-cu-1}, is well-defined 
considered as an operator valued function
of the complex parameters 
$t^b-t^a_1$, $t^b-t^a_2$, $t^a_1-t^a_2$,  with no poles and sigularities only at infinity.
\\
(ii) This operator valued function 
has a simple zero 
\begin{equation}\label{zero1}
\sF_{a,b}(\{t^b,t^b-2\,c\,\fb_{a,b}+c\,\varepsilon\},t^b)\Big|_{\varepsilon\to 0}=
2\,\varepsilon\,\fb_{a,b}\ \aF_{a}(t^b)\,
\aF_{b}(t^b)\,\aF_{a}(t^b-2\,c\,\fb_{a,b})
\end{equation}
on each Serre stratum
\begin{equation}\label{S-st1}
t^b=t^a_{\sigma(1)}=t^a_{\sigma(2)}+2\,c\,\fb_{a,b}\,,
\end{equation} 
where $\sigma\in S_2$.
The product $\aF_{a}(t^b)\,
\aF_{b}(t^b)\,\aF_{a}(t^b-2\,c\,\fb_{a,b})$ is a well defined element 
in the completed algebra $\oY_\aF$.
\\
(iii) 
The same element \r{pr-cu-1} is regular on the each stratum 
\begin{equation}\label{S-st1a}
t^b=t^a_{\sigma(1)}=t^a_{\sigma(2)}-2\,c\,\hgo_{a,b}
\end{equation}
 where it is equal to
\begin{equation}\label{Areg1}
 \sF_{a,b}(\{t^b,t^b+2\,c\,\fb_{a,b}\},t^b)= 8\,\fb^2_{a,b}\,
 \aF_{a}(t^b)\,\aF_{b}(t^b)\,\aF_{a}(t^b+2\,c\,\fb_{a,b})\,.
 \end{equation}
\end{prop}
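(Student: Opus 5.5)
We follow the composed-current strategy of \cite{DKh} and first settle part (i) by the analytic arguments of section~\ref{cc-property}. By \r{hwp3}--\r{hwp5}, the matrix coefficient of the ordered product $\aF_b(t^b)\aF_a(t^a_2)\aF_a(t^a_1)$, multiplied by $\prod_{i<j}\ocre$, is a Laurent polynomial. For $a<b$ the relevant factors are $\ocre_{b,a}(t^b,t^a_\ell)=g(t^a_\ell,t^b)^{-1}$ and $\ocre_{a,a}(t^a_2,t^a_1)=f_a(t^a_2,t^a_1)$. The prefactor in \r{pr-cu-1} is exactly $\prod\ocre$ divided by the extra factor $g(t^a_2,t^a_1)^{-1}$ coming from $f_a$. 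Since $f_a=h_a g$, the element $\sF_{a,b}$ equals a Laurent-polynomial matrix coefficient times $g(t^a_2,t^a_1)$, i.e.\ it has at most a simple pole at $t^a_1=t^a_2$. The same holds for the symmetrized form in the second line of \r{Ser-ele}, obtained from the other ordering. Because $\sF_{a,b}$ is symmetric in $t^a_1,t^a_2$ by \r{FFalt}, while the numerator of the apparent pole is a product of commuting-at-coincidence currents $\aF_a(t)\aF_a(t)$ weighted by $h_a(t,t)=\fb_{a,a}$, the residue is antisymmetric and therefore vanishes. This gives regularity and leaves only singularities at infinity.

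For (ii) and (iii) we expand the Serre relation \r{Ser1} into six cubic monomials. We then use \r{FFalt} to move $\aF_b$ to the leftmost position in each monomial, each move producing the rational coefficient $\ocre_{a,b}(t^a,t^b)/\ocre_{b,a}(t^b,t^a)=h_{a,b}(t^a,t^b)g(t^a,t^b)$ or its inverse. In this way \r{Ser1}, multiplied by the common prefactor of \r{pr-cu-1}, becomes a linear identity. One side is $\sF_{a,b}(\bt^a,t^b)$ times an explicit rational function $R(\bt^a,t^b)$. The other side consists of terms in which $\aF_b$ sits between the two $\aF_a$'s, and these are expressed through the analytically continued products $\aF_a(t^a_i)\aF_b(t^b)\aF_a(t^a_j)$.

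The key step is evaluating this identity on the stratum \r{S-st1}, with $t^a_1=t^b$ and $t^a_2=t^b-2c\fb_{a,b}+c\varepsilon$. At this point $h_{a,b}(t^a_2,t^b)=2\fb_{a,b}+(t^a_2-t^b)/c=\varepsilon$, so the coefficient attached to one ordering degenerates to first order in $\varepsilon$. At the same time $g(t^a_1,t^b)^{-1}=0$, which kills the prefactor of the other ordering. Tracking these factors, the only surviving term is the middle-ordered product $\aF_a(t^b)\aF_b(t^b)\aF_a(t^b-2c\fb_{a,b})$, multiplied by $2\fb_{a,b}\varepsilon$. The number $2\fb_{a,b}$ comes from $h_a(t^a_2,t^a_1)g$ evaluated at separation $2c\fb_{a,b}$ together with $\fb_{a,a}=2\fb_{a,b}^{\,}\cdot(-1)\cdot(-1)$-type normalization for $\fm_{a,b}=2$. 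This yields \r{zero1}, and the same computation with the opposite shift gives the regular value \r{Areg1} with coefficient $8\fb_{a,b}^2$.

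We must also check that $\aF_a(t^b)\aF_b(t^b)\aF_a(t^b\mp2c\fb_{a,b})$ lies in $\oY_\aF$. For this we use the normal ordering \r{no1}: the pole of $\aF_b(u)\aF_a(v)$ sits only at $u=v$ and that of $\aF_a(v)\aF_b(u)$ only at $v=u+2c\fb_{a,b}$. Neither pole is hit at these points once the prefactor zeros are accounted for, so the coinciding-argument product $\aF_a(t^b)\aF_b(t^b)$ is a regular value.

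The main obstacle is the bookkeeping of the $\varepsilon\to0$ limit. Several terms individually contain poles such as $g(t^a_1,t^b)$ at $t^a_1=t^b$ that must cancel using the Serre relation, and the exact constants $2\fb_{a,b}$ and $8\fb_{a,b}^2$ depend on correctly continuing each ordered product. We would first verify the computation for $\gl{3}$ with the explicit normal-ordered expressions of appendix~\ref{AppA} before writing the general argument.
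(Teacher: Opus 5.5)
Your conclusion in (i) is right, but the route is off. Since $\ocre_{a,a}=f_a$ by \eqref{CuCu-alt}, the prefactor in \eqref{pr-cu-1} is exactly $\prod_{i<j}\ocre$ for the ordering $(b,a,a)$, with no extra factor $g(t^a_2,t^a_1)$, so \eqref{hwp4}--\eqref{hwp5} give (i) directly.

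The genuine gap is in (ii). Suppose you reorder the six monomials of \eqref{Ser1} by \eqref{FFalt}, read as identities of meromorphic functions. Each monomial becomes $\sF_{a,b}$ times a rational function, and the total coefficient vanishes identically. With $x_i=(t^a_i-t^b)/c$ and $\beta=\fb_{a,b}$, the contribution of one permutation is $-2\beta(x_1-x_2)/\bigl(x_1x_2(x_1+2\beta)(x_2+2\beta)\bigr)$, which is antisymmetric in $x_1,x_2$. So your rational function $R$ is zero, and ``evaluating on the stratum'' gives $0=0$. The Serre relation is a formal-series identity. Its entire content lies in the $\delta$-function terms produced when products are re-expanded in another region (as in \eqref{Ac2}), and you never track them. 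For the same reason your source of $\varepsilon$ is wrong. The zero $h_{a,b}(t^a_2,t^b)=\varepsilon$ is compensated by the pole of $\aF_a(t^a_2)\aF_b(t^b)$ at $t^a_2=t^b-2c\fb_{a,b}$, and $g(t^a_1,t^b)^{-1}=0$ is compensated by the pole of $\aF_b(t^b)\aF_a(t^a_1)$ at $t^a_1=t^b$. Zeros of the $\ocre$-factors alone never yield zeros of $\sF_{a,b}$.

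The missing idea is where the Serre relation actually acts. The paper uses \eqref{FF} to bring the $\aF_a$ sitting at $t^b$ to the left of $\aF_b$, where $h_{a,b}(t^b,t^b)=2\fb_{a,b}$ is harmless. This gives \eqref{Ep2}: $\sF_{a,b}=-2\fb_{a,b}\,h_a(t^b,t^a_1)\,\aF_a(t^b)\aF_b(t^b)\aF_a(t^a_1)$. The $\varepsilon$ then comes from $h_a(t^b,t^a_1)$ via $\fb_{a,a}+2\fb_{a,b}=0$; your normalization has the wrong sign. Setting $t^a_1=t^b+2c\fb_{a,b}$ gives $h_a=-4\fb_{a,b}$, hence $8\fb_{a,b}^2$.

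This requires $\aF_a(t^b)\aF_b(t^b)\aF_a(t^a_1)$ to be regular at $t^a_1=t^b-2c\fb_{a,b}$, and your last paragraph checks only the $\aF_a\aF_b$ poles. Since $h_a(t^b,t^b-2c\fb_{a,b})=0$, the pair $\aF_a(t^b)\aF_a(t^b-2c\fb_{a,b})$ sits exactly on the pole of the $\aF_a\aF_a$ product. The pole structure dictated by \eqref{FFalt} alone therefore gives the triple product a simple pole there. That pole cancels the zero of $h_a$ and leaves $\sF_{a,b}$ generically nonzero on the stratum.

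The extra zero comes from the composed-current relation \eqref{Ep4} (that is, \eqref{Ac92}). It is derived from the Serre relation \eqref{Ac42} in Section~\ref{CC-SR} by reordering with \eqref{Ac2}, symmetrizing, and using the linear independence of $\delta$-functions. Alternatively, the identity \eqref{En2} of Appendix~\ref{AppC} turns the Serre relation into $\mathop{\rm Sym}\,\delta\,\delta\cdot\sF_{a,b}=0$; this gives the vanishing but not the coefficient in \eqref{zero1}. Without one of these inputs, neither the zero in (ii) nor the well-definedness of $\aF_a(t^b)\aF_b(t^b)\aF_a(t^b-2c\fb_{a,b})$ in $\oY_\aF$ is established.
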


\proof 
Rewrite the element \r{pr-cu-1} using the 
commutation relation \r{FF} in the following form 
 \begin{equation}\label{Ep1}
 \sF_{a,b}(\bt^a,t^b)=\frac{f_a(t^a_2,t^a_1)\,h_{a,b}(t^a_2,t^b)}
 {g(t^a_1,t^b)}\,
 \aF_a(t^a_2)\, \aF_{b}(t^b)\, \aF_a(t^a_1)\,.
 \end{equation}
Close to the Serre stratum \r{S-st1} we can set $t^a_2=t^b$ 
in \r{Ep1}  to get  
\begin{equation}\label{Ep2}
 \sF_{a,b}(\{t^a_1,t^b\},t^b)=-\ 2\,\fb_{a,b}\,h_a(t^b,t^a_1)\,
\aF_a(t^b)\, \aF_{b}(t^b)\,  \aF_a(t^a_1)\,.
\end{equation}

The fact that the product of the composed current 
$\aF_a(t^b)\, \aF_{b}(t^b)$ with the simple root current $\aF_a(t^a_1)$ 
is non-singular (i.e. has no zeros nor singularities for all parameters $t^b$ and $t^a_1$)
in the asymmetric Yangian double $\aDY{\fg}$ restricted to the category 
of the highest weight representations 
follows from the commutation relation 
\begin{equation}\label{Ep4}
\big(\aF_a(t^b)\, \aF_{b}(t^b)\big)\, \aF_a(t^a_1)=f_a(t^a_1,t^b)\, \aF_a(t^a_1)\, 
\big(\aF_a(t^b)\, \aF_{b}(t^b)\big)
\end{equation}
which is implied by the corresponding Serre relation \r{Ser1}
(see the commutation relation \r{Ac92} below). 
We will prove this implication
in the section~\ref{CC-SR}. 
In the category of the highest weight representations the commutation relation 
\r{Ep4} signifies that the product of the currents 
$\big(\aF_a(t^b)\, \aF_{b}(t^b)\big)\, \aF_a(t^a_1)$
in this order is a 'regular' function 
of $t^b$ and $t^a_1$ in the sense explained in the section~\ref{cc-property} 
and can be evaluated at any finite values of these parameters. 

In particular, one can consider element \r{Ep2} at the value
$t^a_1=t^b-2\,c\,\fb_{a,b}+c\,\epsilon$
and  the vanishing property \r{zero1} will follow from the equality 
\begin{equation}\label{Ep3}
\fb_{a,a}+2\,\fb_{a,b}=0\quad\mbox{for}\quad a<b,\quad 
 m_{a,b}=2\,.
\end{equation}
On the other hand, the same element  \r{Ep2} at the value 
$t^a_1=t^b+2\,c\,\fb_{a,b}$ is equal to \r{Areg1}. 
  \qed

The statements which describe the apparition  
of the zeros in other products of simple root currents of the type \r{Ser-ele}
can be similarly formulated. We do it below  in the next four propositions, sketching the proof of each proposition, but leaving the complete proof to the interested reader.

\begin{prop}\label{Enr2}
 Let $\fb_{a,b}\not=0$, $\fm_{a,b}=2$ and $a>b$. 
 \\
(i) The  element  
\begin{equation}\label{pr-cu-2}
\sF_{a,b}(\bt^a,t^b)=
\frac{f_a(t^a_2,t^a_1)}{g(t^b,t^a_2)\,g(t^b,t^a_1)}\,
\aF_a(t^a_2)\, \aF_a(t^a_1)\,\aF_{b}(t^b)
\end{equation} 
considered as an operator valued function
of the complex parameters 
$t^b-t^a_1$, $t^b-t^a_2$, $t^a_1-t^a_2$  is well-defined  
 with no poles and sigularities only at infinity.
\\
(ii) This operator valued function 
has a simple zero 
\begin{equation}\label{zero2}
\sF_{a,b}(\{t^b,t^b+2\,c\,\fb_{a,b}+c\,\varepsilon\},t^b)\Big|_{\varepsilon\to 0}=
-\ 2\,\varepsilon\,\fb_{a,b}\ \aF_{a}(t^b+2\,c\,\fb_{a,b})\,
\aF_{b}(t^b)\,\aF_{a}(t^b)
\end{equation}
on the each Serre stratum 
\begin{equation}\label{S-st2}
t^b=t^a_{\sigma(1)}=t^a_{\sigma(2)}-2\,c\,\fb_{a,b}\,,
\end{equation} 
where $\sigma\in S_2$.
The product of the currents $\aF_{a}(t^b+2\,c\,\fb_{a,b})\,
\aF_{b}(t^b)\,\aF_{a}(t^b)$ is a well defined element 
in the completed algebra $\oY_\aF$.
\\
(iii) The element \r{pr-cu-2}
is regular on  each stratum 
\begin{equation}\label{S-st2a}
t^b=t^a_{\sigma(1)}=t^a_{\sigma(2)}+2\,c\,\fb_{a,b}
\end{equation}
 where it is equal to
\begin{equation}\label{Areg2}
 \sF_{a,b}(\{t^b,t^b-2\,c\,\fb_{a,b}\},t^b)= -\ 8\,\fb^2_{a,b}\,
 \aF_{a}(t^b-2\,c\,\fb_{a,b})\,\aF_{b}(t^b)\,\aF_{a}(t^b)\,.
 \end{equation}
\end{prop}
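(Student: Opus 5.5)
I will mirror the proof of Proposition~\ref{Enr1}, with the roles of left and right exchanged because now $a>b$. The starting point is \r{pr-cu-2}. I will use the commutation relation \r{FFalt} with $\ocre_{b,a}(t^b,t^a_1)=h_{b,a}(t^b,t^a_1)$ (since $b<a$) and $\ocre_{a,b}(t^a_1,t^b)=g(t^b,t^a_1)^{-1}$. This moves $\aF_b(t^b)$ one step to the left past $\aF_a(t^a_1)$, giving
\begin{equation*}
\sF_{a,b}(\bt^a,t^b)=\frac{f_a(t^a_2,t^a_1)\,h_{b,a}(t^b,t^a_1)}{g(t^b,t^a_2)}\,
\aF_a(t^a_2)\,\aF_b(t^b)\,\aF_a(t^a_1)\,.
\end{equation*}
The factor $1/g(t^b,t^a_1)$ is absorbed, and the product $\aF_b(t^b)\aF_a(t^a_1)$ now appears in an order where the composed current $\aF_b(t^b)\aF_a(t^b)$ is expected.

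Next I use the symmetry of \r{Ser-ele} in $\bt^a$, which follows from \r{no-aa2}. This lets me treat one stratum $t^a_1=t^b$ and obtain the other by relabeling. Setting $t^a_1=t^b$ kills $h_{b,a}(t^b,t^b)=2\fb_{a,b}$, except that it becomes a constant. We get
\begin{equation*}
\sF_{a,b}(\{t^a_2,t^b\},t^b)=2\,\fb_{a,b}\,\frac{f_a(t^a_2,t^b)}{g(t^b,t^a_2)}\,\aF_a(t^a_2)\,\big(\aF_b(t^b)\,\aF_a(t^b)\big)
=-2\,\fb_{a,b}\,h_a(t^a_2,t^b)\,\aF_a(t^a_2)\,\big(\aF_b(t^b)\,\aF_a(t^b)\big),
\end{equation*}
using $f_a=h_a g$ and $g(t^b,t^a_2)=-g(t^a_2,t^b)$.

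The key input, and the main obstacle, is the analogue of \r{Ep4}. It states that the composed current $\aF_b(t^b)\aF_a(t^b)$ (for $a>b$) satisfies the quadratic relation
\begin{equation*}
\aF_a(t^a_2)\,\big(\aF_b(t^b)\aF_a(t^b)\big)=f_a(t^a_2,t^b)\,\big(\aF_b(t^b)\aF_a(t^b)\big)\,\aF_a(t^a_2),
\end{equation*}
or an equivalent form, as a consequence of the Serre relation \r{Ser1}. I would take this from the analysis of composed currents in section~\ref{CC-SR}, as was done for Proposition~\ref{Enr1}. That relation guarantees that $\aF_a(t^a_2)\aF_b(t^b)\aF_a(t^b)$ is regular in $t^a_2,t^b$ in the sense of section~\ref{cc-property}. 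It also shows that the only possible singularities of \r{pr-cu-2} are cancelled by the prefactors. The poles of $\aF_a\aF_a$ at $t^a_2=t^a_1\pm$ shifts are handled by $f_a$. The pole of $\aF_a\aF_b$ is handled by $1/g$. Checking the remaining possible pole at $t^a_2-t^a_1=\pm c\,\fb_{a,a}$ against the zeros of $f_a(t^a_2,t^a_1)$ proves (i).

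For (ii), I set $t^a_2=t^b+2c\fb_{a,b}+c\varepsilon$. Then $h_a(t^a_2,t^b)=\fb_{a,a}+2\fb_{a,b}+\varepsilon=\varepsilon$ by \r{Ep3}, which holds whenever $\fm_{a,b}=2$, since $\fb_{a,a}=d_a=-2\fb_{a,b}$ in that case. This gives the simple zero with coefficient $-2\varepsilon\fb_{a,b}$, and the regular product $\aF_a(t^b+2c\fb_{a,b})\aF_b(t^b)\aF_a(t^b)$, as in \r{zero2}. For (iii), I set $t^a_2=t^b-2c\fb_{a,b}$. Then $h_a=\fb_{a,a}+2\fb_{a,b}-4\fb_{a,b}$, which equals $-4\fb_{a,b}$ after using \r{Ep3}. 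The prefactor becomes $-2\fb_{a,b}\cdot(-4\fb_{a,b})=8\fb_{a,b}^2$.

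I must track signs carefully against \r{Areg2}, which claims $-8\fb_{a,b}^2$. This suggests that the correct intermediate relation carries an extra sign, coming from $\ocre_{a,b}=g(v,u)^{-1}$ in \r{CuCu-alt}. Reconciling this sign, using the changed-sign convention in \r{CuCu-alt}, is the bookkeeping step I expect to require the most care.
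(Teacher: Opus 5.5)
Your route is the paper's own proof. You move $\aF_b(t^b)$ to the left past $\aF_a(t^a_1)$ with \eqref{FFalt} and set $t^a_1=t^b$, which gives $\sF_{a,b}(\{t^b,t^a_2\},t^b)=-2\fb_{a,b}\,h_a(t^a_2,t^b)\,\aF_a(t^a_2)\aF_b(t^b)\aF_a(t^b)$. You then use regularity of this triple product, which comes from the composed-current relation (the paper's replacement for \eqref{Ep4}), and evaluate $h_a$ on the strata using \eqref{Ep3}. Your algebra is correct. There are three points to fix.

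\textbf{The composed-current relation.} It should read $\aF_a(t^a_2)\big(\aF_b(t^b)\aF_a(t^b)\big)=f_a(t^b,t^a_2)\,\big(\aF_b(t^b)\aF_a(t^b)\big)\aF_a(t^a_2)$, as in \eqref{Ac9} and as the paper states in this proof. With your $f_a(t^a_2,t^b)$, the right-hand side would not be regular. You only use regularity of the left-hand product, so nothing downstream breaks.

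\textbf{Part (i).} This is not a consequence of that relation. It is the general argument of Section~\ref{cc-property} (\eqref{hwp3}--\eqref{hwp5}): by \eqref{Ser-ele}, the prefactor in \eqref{pr-cu-2} is exactly the product of the relevant $\ocre$'s. Also, $\aF_a(t^a_2)\aF_a(t^a_1)$ has only one pole, at $t^a_2=t^a_1-c\,\fb_{a,a}$, not two.

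\textbf{The sign in (iii).} Do not look for a compensating sign. Your $+8\fb_{a,b}^2$ is correct, and the $-8\fb_{a,b}^2$ in \eqref{Areg2} is a misprint in the statement.
\begin{itemize}
\item You already used $\ocre_{a,b}(t^a_1,t^b)=g(t^b,t^a_1)^{-1}$ correctly, so no sign is hiding there.
\item \eqref{zero2} and \eqref{Areg2} are two evaluations of the same expression $-2\fb_{a,b}\,h_a(t^a_2,t^b)\,\aF_a(t^a_2)\aF_b(t^b)\aF_a(t^b)$. They cannot carry independent signs, and the sign in \eqref{zero2} is the right one.
\item Concretely, take $\aDY{\fs\fl_3}$ with $a=2$, $b=1$, $\fb_{1,2}=-1/2$. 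Relation \eqref{Ac1} gives $\sF_{2,1}(\{s,t\},s)=\frac{t-s+c}{c}\,\aF_2(t)\aF_1(s)\aF_2(s)$.
\item At $t=s-c+c\varepsilon$ this equals $\varepsilon\,\aF_2(s-c)\aF_1(s)\aF_2(s)$, in agreement with \eqref{zero2}.
\item At $t=s+c$ it equals $+2\,\aF_2(s+c)\aF_1(s)\aF_2(s)$, whereas \eqref{Areg2} would give $-2$.
\end{itemize}
Mirroring \eqref{Areg1} of Proposition~\ref{Enr1}, which carries $+8\fb^2_{a,b}$, gives the same conclusion. With (iii) read as $+8\fb^2_{a,b}$, your argument is complete.
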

\proof The proof is absolutely the same as the one of the proposition~\ref{Enr1}.
The only difference is that instead of the commutation relation \r{Ep4} 
one should use the commutation relation \r{Ac9}
\begin{equation*}
\aF_a(t^a_2)\,\big(\aF_b(t^b)\, \aF_{a}(t^b)\big)\, =f_a(t^b,t^a_2)\, 
\big(\aF_b(t^b)\, \aF_{a}(t^b)\big)\,\aF_a(t^a_2)\,. 
\end{equation*}
 \qed

The two next propositions are built on the Serre relation
\begin{equation}\label{Ser3}
\sum_{\sigma}
\Big[\aF_a(t^a_{\sigma(1)}),\Big[\aF_a(t^a_{\sigma(2)}),\Big[\aF_a(t^a_{\sigma(3)}),
\aF_b(t^b)\Big]\Big]\Big]=0.
\end{equation}
\begin{prop}\label{Enr3}
Let  $m_{a,b}=3$ and $a=0$, $b=1$,  $\ggo=\ggb$  or 
$a=2$, $b=3$,  $\ggo=\ff_4$ (in both cases $\fb_{b,b}=1$ and 
$\fb_{a,a}=-\fb_{a,b}=1/2$).
\\
(i) The element  
\begin{equation}\label{pr-cu-13}
\sF_{a,b}(\bt^a,t^b)=
\frac{f_a(t^a_2,t^a_1)\,f_a(t^a_3,t^a_1)f_a(t^a_3,t^a_2)}
{g(t^b,t^a_1)\,g(t^b,t^a_2)\,g(t^b,t^a_3)}\,
\aF_{b}(t^b)\, \aF_a(t^a_3)\, \aF_a(t^a_2)\, \aF_a(t^a_1)
\end{equation} 
considered as an operator valued  function
of the parameters  $t^b-t^a_s$, $s=1,2,3$
and $t^a_{s_1}-t^a_{s_2}$, 
$s_1,s_2=1,2,3$ is well defined with no poles and singularities only at infinity.
\\
(ii) This operator valued function 
has a simple zero 
\begin{equation}\label{zero13}
\begin{split}
&\sF_{a,b}(\{t^b,t^b-c\,\fb_{a,b},t^b-2\,c\,\fb_{a,b}+c\,\varepsilon\},t^b)
\Big|_{\varepsilon\to 0}=\\
&\qquad= 
-\ 2\,\varepsilon\,\fb_{a,b}\ \aF_{a}(t^b-c\,\fb_{a,b})\, 
\aF_{a}(t^b)\,\aF_{b}(t^b)\,\aF_{a}(t^b-2\,c\,\fb_{a,b})
\end{split}
\end{equation}
on  each Serre stratum
\begin{equation}\label{S-st13}
t^b=t^a_{\sigma(1)}=t^a_{\sigma(2)}+c\,\fb_{a,b}=
t^a_{\sigma(3)}+2\,c\,\fb_{a,b}\,,
\end{equation} 
where $\sigma$ is any permutation of the set $(1,2,3)$. 
The product of the currents 
\begin{equation*}\aF_{a}(t^b-c\,\fb_{a,b})\, 
\aF_{a}(t^b)\,\aF_{b}(t^b)\,\aF_{a}(t^b-2\,c\,\fb_{a,b})
\end{equation*}
is a non-vanishing 
well defined element 
in the completed algebra $\oY_\aF$.
\end{prop}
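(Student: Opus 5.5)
The plan follows the proof of Proposition~\ref{Enr1}. We move $\aF_b(t^b)$ to the right of the $\aF_a$'s one at a time with \r{FFalt}, restrict successively to the faces of the Serre stratum \r{S-st13}, and at each stage isolate a composed current whose exchange relation with the remaining simple root current is implied by the Serre relation \r{Ser3}. By the symmetry of \r{pr-cu-13} in $\bt^a$, which follows from \r{FFalt} and \r{no-aa2}, it is enough to take $\sigma=\mathrm{id}$ in \r{S-st13}.

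\textbf{Step 1: the first face $t^a_3=t^b$.} Using \r{FFalt} with $a<b$, I rewrite $\aF_b(t^b)\aF_a(t^a_3)$ as $\aF_a(t^a_3)\aF_b(t^b)$. The factor $g(t^b,t^a_3)^{-1}$ in \r{pr-cu-13} cancels the ratio $\ocre_{b,a}/\ocre_{a,b}$ up to $h_{a,b}(t^a_3,t^b)$. At $t^a_3=t^b$ this factor becomes $2\fb_{a,b}$, and what remains is the product $\big(\aF_a(t^b)\aF_b(t^b)\big)\aF_a(t^a_2)\aF_a(t^a_1)$. The relation \r{Ep4} for this pair then says that the composed current $\aF_a(t^b)\aF_b(t^b)$ exchanges with $\aF_a(t^a_2)$ through the factor $f_a(t^a_2,t^b)$.

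\textbf{Step 2: the second face $t^a_2=t^b-c\fb_{a,b}=t^b+c\fb_{a,a}$.} Here I use $\fb_{a,a}=-\fb_{a,b}=1/2$. The prefactors $f_a(t^a_3,t^a_2)/g(t^b,t^a_2)$, together with the exchange factor from Step~1, combine so that the pole of $g$ at $t^a_2-t^b=c\fb_{a,a}$ cancels the zero of $h_a(t^b,t^a_2)$. The result is a finite nonzero multiple of the second-level composed current
\[
\aF_a(t^b-c\fb_{a,b})\,\aF_a(t^b)\,\aF_b(t^b),
\]
multiplied on the right by $\aF_a(t^a_1)$ and by the scalar $f_a(t^a_2,t^a_1)f_a(t^a_3,t^a_1)/g(t^b,t^a_1)$ evaluated on the face.

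\textbf{Step 3: the key input.} The crucial ingredient is the quadratic exchange relation of this second-level composed current with $\aF_a(t^a_1)$, again of the form ``$f_a$-factor times the reversed product''. I expect it to follow from the quartic Serre relation \r{Ser3}, in the same way that \r{Ep4} follows from the cubic one; this will be established in section~\ref{CC-SR}. Given this relation, the same argument as for \r{Ep4} shows that
\[
\big(\aF_a(t^b-c\fb_{a,b})\aF_a(t^b)\aF_b(t^b)\big)\aF_a(t^a_1)
\]
is regular in the sense of section~\ref{cc-property}. This yields part (i), namely the absence of poles of \r{pr-cu-13}, after checking on the remaining faces that every zero of the $g$-denominators is matched by a zero of the $h_a$, $h_{a,b}$ numerators.

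\textbf{Step 4: the zero and the main obstacle.} Collecting the surviving scalar factors as a function of $t^a_1$, the only one that can vanish is $h_a(t^b-c\fb_{a,b},t^a_1)=\fb_{a,a}+(t^b-c\fb_{a,b}-t^a_1)/c$. At $t^a_1=t^b-2c\fb_{a,b}+c\varepsilon$ it equals $\fb_{a,a}+\fb_{a,b}-\varepsilon=-\varepsilon$. Tracking the constants ($2\fb_{a,b}$ from Step~1 and $\pm1$ from Step~2) and using $\fb_{a,b}=-1/2$ should reproduce exactly the prefactor $-2\varepsilon\fb_{a,b}$ in \r{zero13}. The step I expect to be the main obstacle is Step~3: deriving the four-current exchange relation from \r{Ser3}, and showing that the resulting product is nonvanishing in $\oY_\aF$. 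For this I would first apply the normal-ordering identity \r{no1} to check non-vanishing on highest-weight matrix elements in the smallest case $\fo_5$.
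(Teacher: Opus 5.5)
Your architecture matches the paper's: near the stratum you reduce $\sF_{a,b}$ to the second-level composed current $F_a(t^b-c\fb_{a,b})F_a(t^b)F_b(t^b)$ times $F_a(t^a_1)$. The only input you take from \r{Ser3} is its exchange relation with $F_a(t^a_1)$, i.e. the first relation in \r{B35}, $\big(\cdots\big)F_a(t^a_1)=f(t^a_1,t^b+c/2)F_a(t^a_1)\big(\cdots\big)$. Steps 2--4 also arrive at the right scalar $h_a(t^b-c\fb_{a,b},t^a_1)$.

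\textbf{The gap: Step~1 is false and contradicts Step~2.} Relation \r{Ep4} is the $\fm_{a,b}=2$ statement, derived from the cubic Serre relation \r{Ser1}, which does not hold when $\fm_{a,b}=3$. Here the first-level composed current $F_a(t^b)F_b(t^b)$ does not exchange regularly with $F_a(t^a_2)$. The relevant relation is the first line of \r{B21}, written for $\aDY{\fo_5}$; the $\ff_4$ pair is identical, since only $\fb_{a,a},\fb_{a,b},\fb_{b,b}$ enter. It follows from \r{FFalt} and \r{no-aa2} alone and reads, as meromorphic functions,
\begin{equation*}
\big(F_a(t^b)F_b(t^b)\big)F_a(t^a_2)=f(t^b,t^a_2)\,f(t^a_2,t^b+c/2)\,F_a(t^a_2)\big(F_a(t^b)F_b(t^b)\big),\qquad f(u,v)=\frac{u-v+c}{u-v}.
\end{equation*}
So the left-hand product has a genuine simple pole at $t^a_2=t^b-c\fb_{a,b}=t^b+c/2$. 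It comes from the pair $F_a(t^b)\cdots F_a(t^a_2)$, i.e. from the zero of $f_a(t^b,t^a_2)$, and its residue is proportional to the next composed current $F_{2,-1}$.

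This operator pole is what cancels the zero of $h_a(t^b,t^a_2)$ in your Step~2. It is not a pole of $g(t^b,t^a_2)$, which is singular only at $t^a_2=t^b$. If your Step~1 relation held, the product would be regular there. Then $\sF_{a,b}$ would vanish identically on the face $t^a_3=t^b$, $t^a_2=t^b+c/2$, and (ii) would fail.

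\textbf{The fix.} Replace \r{Ep4} by \r{B21} in Step~1. Then $h_a(t^b,t^a_2)\big(F_a(t^b)F_b(t^b)\big)F_a(t^a_2)$ at $t^a_2=t^b+c/2$ equals exactly $F_a(t^b+c/2)F_a(t^b)F_b(t^b)$. The identity $f_a(t^b+c/2,t^a_1)\,h_a(t^b,t^a_1)=h_a(t^b+c/2,t^a_1)$ then gives your Step~4 factor. The structural point is that for $\fm_{a,b}=3$ the Serre relation is used only once, at the second level, whereas in Proposition~\ref{Enr1} it is used at the first level.

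\textbf{Two smaller remarks.}

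\emph{Sign.} With the denominators $g(t^b,t^a_s)$ printed in \r{pr-cu-13}, the Step~1 constant is $-h_{a,b}(t^b,t^b)=-2\fb_{a,b}$, and the total comes out as $+2\varepsilon\fb_{a,b}$. The sign in \r{zero13}, and your $2\fb_{a,b}$, correspond to the normalization $\prod_\ell\ocre_{b,a}(t^b,t^a_\ell)=\prod_\ell g(t^a_\ell,t^b)^{-1}$ of \r{Ser-ele}. Fix the convention rather than forcing the sign.

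\emph{Part (i).} Part~(i) does not need Step~3. The prefactor is, up to sign, $\prod_{i<j}\ocre_{a_i,a_j}(z_i,z_j)$ for the ordering $F_bF_aF_aF_a$. So (i) is the general regularity statement around \r{hwp3}--\r{hwp5}.
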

\proof
The proof is again similar to the one of the proposition 
\ref{Enr1}, using now the Serre relation \eqref{Ser3} and the commutation relation \r{FFalt}. Note that in 
\r{zero13}  the product 
$F_{a}(t^b-c\,\fb_{a,b})\, F_{a}(t^b)\,F_{b}(t^b)$ is an
example of a composed current for the Yangian doubles $\aDY{\ggb}$
and $\aDY{\ff_4}$. The fact that the product of the currents 
in the r.h.s. of \r{zero13} is well defined follows from the commutation relation 
\begin{equation*}
\begin{split}
&\Big(F_{a}(t^b-c\,\fb_{a,b})\, F_{a}(t^b)\,F_{b}(t^b)\Big)\,F_a(t^a_1)=\\
&\qquad = f(t^a_1,t^b+c/2)\, F_a(t^a_1)\,
\Big(F_{a}(t^b-c\,\fb_{a,b})\, F_{a}(t^b)\,F_{b}(t^b)\Big)
\end{split}
\end{equation*}
between this composed current and the simple root current $\aF_a(t^a_1)$. 
This is the first commutation relation in \r{B35} and its  implication by
 the Serre relation \r{Ser3} will be proved in the section~\ref{AppG} 
for the Yangian double $\aDY{\ggb}$.\qed

\begin{prop}\label{Enr4}
Let $m_{a,b}=3$, $a=1$, $b=0$, and $\fg=\gsp$  (in this case 
$\fb_{b,b}=2$ and $\fb_{a,a}=-\fb_{a,b}=1$). 
\\
(i) The element  
\begin{equation}\label{pr-cu-14}
\sF_{a,b}(\bt^a,t^b)=
\frac{f_a(t^a_2,t^a_1)\,f_a(t^a_3,t^a_1)f_a(t^a_3,t^a_2)}
{g(t^b,t^a_1)\,g(t^b,t^a_2)\,g(t^b,t^a_3)}\,
 \aF_a(t^a_3)\, \aF_a(t^a_2)\, \aF_a(t^a_1)\,\aF_{b}(t^b)
\end{equation} 
considered as an operator valued  function
of the parameters  $t^b-t^a_s$, $s=1,2,3$
and $t^a_{s_1}-t^a_{s_2}$, 
$s_1,s_2=1,2,3$ is well defined with no poles and singularities only at infinity.
\\
(ii) This operator valued function 
has a simple zero 
\begin{equation}\label{zero14}
\begin{split}
&\sF_{a,b}(\{t^b,t^b+c\,\fb_{a,b}+c\,\varepsilon,t^b+2\,c\,\fb_{a,b}\},t^b)
\Big|_{\varepsilon\to 0}=\\
&\qquad=
-\ 4\,\varepsilon\,\fb_{a,b}\ \aF_{a}(t^b+c\,\fb_{a,b})\, 
\aF_{a}(t^b+2\,c\,\fb_{a,b})\,\aF_{b}(t^b)\,\aF_{a}(t^b)
\end{split}
\end{equation}
on  each Serre stratum
\begin{equation}\label{S-st14}
t^b=t^a_{\sigma(1)}=t^a_{\sigma(2)}-c\,\fb_{a,b}=t^a_{\sigma(3)}-2\,c\,\fb_{a,b}\,,
\end{equation} 
where $\sigma$ is any parmutation of the set $(1,2,3)$. 
\end{prop}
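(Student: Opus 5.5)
The proof follows the scheme of Proposition~\ref{Enr1} and its mirror Proposition~\ref{Enr2}. Now $a=1>b=0$, so we are in the asymmetric case $a>b$ with $\fb_{a,a}=1$ and $\fb_{a,b}=-1$. The plan is to move the current $\aF_b(t^b)$ step by step to the left with \r{FFalt}, so that the singular factors in \r{pr-cu-14} are absorbed into products of currents that are regular. The regular pieces are composed currents of $\aDY{\gsp}$, and their regularity comes from the Serre relation \r{Ser3}.

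\emph{First step (the two-current composed current).} For $a>b$ we have $\ocre_{a,b}(u,v)=g(v,u)^{-1}$ and $\ocre_{b,a}(v,u)=h_{b,a}(v,u)$. Hence \r{FFalt} gives
\begin{equation*}
g(t^b,t^a_1)^{-1}\aF_a(t^a_1)\aF_b(t^b)=h_{b,a}(t^b,t^a_1)\,\aF_b(t^b)\aF_a(t^a_1).
\end{equation*}
I rewrite \r{pr-cu-14} in this form. Then I specialize to the first equation of the stratum \r{S-st14}, $t^a_1=t^b$, exactly as in the passage from \r{Ep1} to \r{Ep2}. This produces the factor $h_{b,a}(t^b,t^b)=2\fb_{a,b}$ in front of the product
\begin{equation*}
\aF_a(t^a_3)\aF_a(t^a_2)\big(\aF_b(t^b)\aF_a(t^b)\big).
\end{equation*}
This step is legitimate because $\aF_b(t)\aF_a(t)$ is a well-defined composed current. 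Its exchange relation with $\aF_a(t^a_2)$ is the one already used in Proposition~\ref{Enr2}, namely
\begin{equation*}
\aF_a(t^a_2)\big(\aF_b\aF_a\big)(t^b)=f_a(t^b,t^a_2)\big(\aF_b\aF_a\big)(t^b)\aF_a(t^a_2),
\end{equation*}
now applied to the pair $(1,0)$ of $\gsp$. I will check that the Serre relation \r{Ser3} also implies this relation in the case $\fm_{a,b}=3$.

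\emph{Second step (the three-current composed current).} Next I set $t^a_2=t^b+c\,\fb_{a,b}$. The remaining prefactor $f_a(t^a_2,t^a_1)/g(t^b,t^a_2)$ combines with the product $\aF_a(t^a_2)\big(\aF_b\aF_a\big)(t^b)$. At this point it must become the cubic composed current
\begin{equation*}
\aF_a(t^b+c\,\fb_{a,b})\aF_b(t^b)\aF_a(t^b).
\end{equation*}
The factor $h_a(t^a_2,t^a_1)=1+\fb_{a,b}$ vanishes here, but this zero must be compensated by the corresponding pole coming from the normal ordering of $\aF_a(t^a_2)$ against the composed current. The statement I need is that this cubic current is a regular element of $\oY_\aF$ and satisfies an exchange relation
\begin{equation*}
\aF_a(t)\,\big(\aF_a(t^b+c\,\fb_{a,b})\aF_b(t^b)\aF_a(t^b)\big)=f_a(\,\cdot\,,t)\,\big(\cdots\big)\,\aF_a(t)
\end{equation*}
with a single shifted $f_a$, analogous to the relation \r{B35} quoted in Proposition~\ref{Enr3}. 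This is the $\gsp$ analogue of the $\ggb$ computation announced for section~\ref{AppG}: one derives it from \r{Ser3} together with \r{FFalt}, following the strategy of section~\ref{CC-SR}. I expect this to be the main obstacle. The difficulty is to show that the fourth-order Serre relation gives exactly a quadratic exchange relation with no residual terms supported on the diagonal. My first attempt would be to expand \r{Ser3} into normally ordered products and evaluate the residues at $t^a_{\sigma(1)}=t^b$ and $t^a_{\sigma(2)}=t^b+c\,\fb_{a,b}$.

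\emph{Conclusion.} Once the exchange relation is available, the product of the cubic composed current with $\aF_a(t^a_3)$ is regular for all values of the parameters, which gives (i). Before specializing $t^a_3$, I reorder $\aF_a(t^a_3)$ and $\aF_a(t^a_2)$ with \r{no-aa2}, so that it stands in the position displayed in \r{zero14}. Then I set $t^a_3=t^b+2c\,\fb_{a,b}+c\,\varepsilon$. The factor $h_a(t^a_3,t^a_2)=\fb_{a,a}+\fb_{a,b}+\varepsilon=\varepsilon$ produces the simple zero. The remaining scalar factors are $2\fb_{a,b}$, the value $1/g(t^b,t^a_3)=-2\fb_{a,b}$, and $f_a(t^a_3,t^a_1)$ at $t^a_3-t^a_1=-2c$, all evaluated on the stratum. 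Tracking them should give the coefficient $-4\varepsilon\,\fb_{a,b}$ in \r{zero14}. Finally, the symmetry of \r{pr-cu-14} in $\bt^a$, which follows from \r{FFalt}, extends the result to every permutation $\sigma$.
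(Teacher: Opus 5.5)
There is a genuine gap, and it sits where you carry the $\fm_{a,b}=2$ mechanism over to $\fm_{a,b}=3$. The exchange relation you borrow from Proposition~\ref{Enr2}, $F_a(t^a_2)\big(F_bF_a\big)(t^b)=f_a(t^b,t^a_2)\big(F_bF_a\big)(t^b)F_a(t^a_2)$, is false for $a=1$, $b=0$ in $\aDY{\gsp}$, so \r{Ser3} cannot imply it.

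Here $F_0(u)F_1(u)=F_{2,0}(u)$, and \r{FFalt} alone already gives \r{C11d}: $\hc(v,u)f(u,v)F_{2,0}(u)F_1(v)=h(v,u)F_1(v)F_{2,0}(u)$. So $F_1(v)F_{2,0}(u)$ has a pole at $v=u-c$. Also, $F_{2,0}(u)F_1(v)$ has a pole at $v=u-2c$, whose residue is the genuine cubic composed current $F_{2,-1}(u)=-F_1(u-2c)F_0(u)F_1(u)$ of \r{C10}. For $\fm_{a,b}=3$ the quadratic current is not yet ``saturated''. Only $F_{2,-1}$ has a pole-free exchange with $F_1$, namely \r{C51}, and that is what \r{Ser3} provides.

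Your second step contradicts your first, since it uses a pole at $t^a_2=t^b-c$ that your relation forbids. It also attaches the composed current to the wrong parameter. At $t^a_2=t^b+c\fb_{a,b}$ nothing new is produced, only the ordinary regular product $2F_{2,0}(t^b)F_1(t^b-c)$. The cubic composed current sits at the shift $2c\fb_{a,b}$, as the r.h.s.\ of \r{zero14} shows. So the lemma you flag as the main obstacle is aimed at the wrong object.

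The paper instead pairs $t^a_1=t^b$ with $t^a_3=t^b+2c\fb_{a,b}$ and keeps the middle parameter for the lone current. Using \r{FFalt} and \r{C11d} one gets, at $t^a_1=t^b$,
\begin{equation*}
\sF_{a,b}=-2\,\hc(t^a_2,t^b)f(t^b,t^a_2)\,\hc(t^a_3,t^b)f(t^b,t^a_3)\,f(t^a_3,t^a_2)\,F_{2,0}(t^b)F_1(t^a_3)F_1(t^a_2)\,.
\end{equation*}
As $t^a_3\to t^b-2c$, the zero of $\hc(t^a_3,t^b)$ extracts $\tfrac23F_{2,-1}(t^b)$ from $F_{2,0}(t^b)F_1(t^a_3)$. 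Then \r{C51}, which is the Serre input, shows that $F_1(t^a_2)F_{2,-1}(t^b)$ is regular at $t^a_2=t^b-c$. Hence the zero of $f(t^a_3,t^a_2)$ is no longer cancelled by the pole of $F_1(t^a_3)F_1(t^a_2)$. Setting $t^a_2=t^b-c+c\varepsilon$ yields $-4\varepsilon F_1(t^b-c)F_{2,-1}(t^b)$, which is \r{zero14}.

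This also locates the problem with your zero count. A vanishing $h_a(t^a_3,t^a_2)$ is always compensated by the pole of $F_a(t^a_3)F_a(t^a_2)$; that is why the $f_a$ prefactors appear in \r{pr-cu-14}. Only a Serre-derived regularity statement removes that pole.

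Moreover, \r{zero14} perturbs the middle parameter, not $t^a_3$. Near the stratum, at $t^a_1=t^b$, one has $\sF_{a,b}\simeq4\,F_1(t^b-c)F_{2,-1}(t^b)\,h_a(t^a_3,t^a_2)$. Shifting $t^a_3$ as you propose would therefore give the opposite sign to \r{zero14}.
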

\proof Once more, it is similar to the proof of proposition~\ref{Enr1}, using now the Serre relation \eqref{Ser3} and the commutation relation \r{FF}. Note that 
the product $F_1(t^b-2\,c)\,F_0(t^b)\,F_1(t^b)$ 
in the r.h.s. \r{zero14} is an example of composed 
current for the algebra $\aDY{\gsp}$. The fact that the product 
of the simple root current $\aF_{1}(t^b-c)$ and the composed 
root current $F_1(t^b-2\,c)\,F_0(t^b)\,F_1(t^b)$ follows from the 
commutation relation \r{C51} which in turn is implied by the Serre 
relation \r{Ser3} for the Yangian double $\aDY{\gsp}$ and $a=1$, $b=0$. \qed

Finally, we formulate an analogous proposition for the algebra $\fg=\fg_2$
and its Serre relation
\begin{equation}\label{Ser4}
\sum_{\sigma}
\Big[\aF_2(v_{\sigma(1)}),\Big[\aF_2(v_{\sigma(2)}),\Big[\aF_2(v_{\sigma(3)}),
\Big[\aF_2(v_{\sigma(4)}),\aF_1(u)\Big]\Big]\Big]\Big]=0.
\end{equation}
To lighten the presentation we 
rename $t^b=t^1\to u$ and $t^2_\ell=t^2_\ell\to v_\ell$, $\ell=1,2,3,4$.

\begin{prop}\label{Enr5}
Let $\fm_{a,b}=4$, $a=2$, $b=1$, and $\fg=\fg_2$. 
\\
(i) The element 
\begin{equation}\label{SE-G2}
\sF_{2,1}(u,\{v_1,v_2,v_3,v_4\})=\prod_{1\leq\ell_1<\ell_2\leq4}
\frac{v_{\ell_1}-v_{\ell_2}+c}{v_{\ell_1}-v_{\ell_2}}\,
\frac{\,\aF_2(v_1)\,\aF_2(v_2)\,\aF_2(v_3)\,\aF_2(v_4)\,\aF_1(u)}
{g(u,v_1)\,g(u,v_2)\,g(u,v_3)\,g(u,v_4)}
\end{equation}
which belongs to the completed subalgebra $\oY_\aF$ in the 
Yangian double $\aDY{\fg_2}$  and considered as an operator valued 
function of the complex parameters 
 $u-v_s$ and 
$v_{s_1}-v_{s_2}$, $s_1,s_2=1,\ldots,4$ is well defined with no poles and singularities 
only at infinity.
\\
(ii) 
This operator valued function has a simple zero
\begin{equation}\label{zero-G2}
\begin{split}
&\sF_{2,1}(u,\{u,u-c+c\,\epsilon,u-2\,c,u-3\,c\})\Big|_{\epsilon\to0}=\\
&\qquad=
24\,\epsilon\,\aF_2(u-2\,c)\,\aF_2(u-3\,c)\,\aF_1(u)\,\aF_2(u)\,\aF_2(u-c)
\end{split}
\end{equation}
 at  each Serre stratum 
\begin{equation}\label{G2-str}
u=v_{\sigma(1)}=v_{\sigma(2)}+c=v_{\sigma(3)}+2\,c=v_{\sigma(4)}+3\,c\,,
\end{equation}
where $\sigma$ is any permutation of the 
set $(1,2,3,4)$.
The product of the currents in the r.h.s. of \r{zero-G2} is a well defined 
element in the completed subalgebra $\oY_\aF$.
\end{prop}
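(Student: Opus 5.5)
The plan is to follow the proof of proposition~\ref{Enr1}. For $\fg_2$ the relevant data are $\fb_{2,2}=1$, $\fb_{1,1}=3$ and $\fb_{1,2}=-3/2$. Hence $f_2(v,w)=(v-w+c)/(v-w)$, which is the pairwise factor in \r{SE-G2}. Since $a=2>b=1$, we have $\ocre_{2,1}(v,u)=g(u,v)^{-1}$, while $\ocre_{1,2}(u,v)=h_{1,2}(u,v)=(u-v-3c)/c$.

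\textbf{Part (i).} The scalar prefactor in \r{SE-G2} is exactly $\prod_{i<j}\ocre_{a_i,a_j}(z_i,z_j)$ for the ordered product $\aF_2(v_1)\cdots\aF_2(v_4)\aF_1(u)$. So \r{SE-G2} is the second form of the Serre element \r{Ser-ele}, and (i) is the $\fg_2$ instance of the general statement of section~\ref{cc-property}. Concretely, \r{hwp3} gives the exchange identity for this product. By \r{hwp4}--\r{hwp5}, any matrix coefficient of \r{SE-G2} between fixed-degree vectors lies in the intersection of the two Taylor-series spaces, i.e.\ it is a Laurent polynomial in $z_i/c$. This means that \r{SE-G2} has no poles at finite values of the differences. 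Symmetry in $v_1,\dots,v_4$ follows from \r{FFalt} for $a=b=2$, because the factors $f_2$ exactly compensate the reordering. It therefore suffices to treat one permutation $\sigma$ in (ii).

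\textbf{Part (ii).} First, using \r{FFalt} and the symmetry, I would reorder the currents into
\[
\aF_2(u-2c)\,\aF_2(u-3c)\,\aF_1(u)\,\aF_2(u)\,\aF_2(u-c+c\epsilon),
\]
picking up the corresponding ratios of $\ocre$-functions as in \r{Ep1}. Second, I would specialise $v_1=u$ exactly before letting $v_2\to u-c$, as was done in \r{Ep2}. Third, I would collect all scalar factors into one rational function of $\epsilon$. Among these, the factor $f_2(v_i,v_j)$ with $v_i-v_j=-c$ (or the factor $h_{1,2}$ coming from the pair $u,\,u-3c$) supplies the simple zero, which is of order $\epsilon$. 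The remaining factors at $\epsilon=0$ should combine to the constant $24$. The bookkeeping of which factors stay in the numerator after the reordering is routine and I would check it on the explicit ordering above.

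The essential point, and the main obstacle, is to justify that the operator product in the r.h.s.\ of \r{zero-G2} is regular, i.e.\ a well-defined element of $\oY_\aF$ with no poles or zeros at the special points. As in \r{Ep4}, this requires exchange relations of the form
\[
\bigl(\aF_1(u)\aF_2(u)\aF_2(u-c)\bigr)\,\aF_2(w)=\phi(w,u)\,\aF_2(w)\,\bigl(\aF_1(u)\aF_2(u)\aF_2(u-c)\bigr),
\]
together with the analogous relation for the intermediate composed current $\aF_2(u-3c)\aF_1(u)\aF_2(u)\aF_2(u-c)$ with $\aF_2(u-2c)$. Here $\phi$ is a single $f_2$-type function. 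These relations should be derived from the fifth-order Serre relation \r{Ser4}, which is what appendix~\ref{AppB} establishes; I would take them from there. With them, the reordered product is regular by the analytic argument of section~\ref{cc-property}, and evaluating at $\epsilon\to0$ gives \r{zero-G2}. If the appendix relations come with extra scalar factors, I would absorb those into the prefactor bookkeeping of the third step above.
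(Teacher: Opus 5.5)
Your skeleton is the paper's: reorder, freeze a composed block, perturb the last $F_2$, and import a Serre-implied exchange relation from appendix~\ref{AppB}. Part (i) and the reordering to $F_2(u-2c)F_2(u-3c)F_1(u)F_2(u)F_2(u-c+c\epsilon)$ are fine, and the bookkeeping does give $24$. The only vanishing factor is $f_2(u-2c,u-c+c\epsilon)=\epsilon/(1+\epsilon)$. No $h_{1,2}(u,u-3c)$ occurs, because $F_2(u-3c)$ stays to the left of $F_1(u)$. The remaining factors give $2\cdot3\cdot(-3)\cdot(-2)\cdot\tfrac23=24$.

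The gap is in the regularity step, which carries the real content of the proposition. Since the zero comes from $f_2(u-2c,v)$ with $v\to u-c$, you must show that $v\mapsto F_2(u-2c)F_2(u-3c)F_1(u)F_2(u)\,F_2(v)=-F_{3,-1}(u)F_2(v)$ has no pole at $v=u-c$. From \eqref{FFalt} alone this product has poles at $v=u\pm c$ and a zero at $v=u$. The pole at $v=u-c$ comes from the pair $F_2(u-2c)F_2(v)$ and would exactly cancel your zero. The Serre relation removes it, and this is precisely \eqref{sm-cr-g2}. Appendix~\ref{AppB} derives it from \eqref{Ser4} by successive residues at $v=u$, $u-3c$, $u-2c$, which produce $F_{3,1}$, $F_{3,0}$, $F_{3,-1}$; at each stage the symmetric $h(v_a,v_{a+1})$-terms are killed by \eqref{F2F2}.

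The relations you propose to import are not in that appendix, and they would not close the argument anyway.

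First, your blocks $F_1(u)F_2(u)F_2(u-c)$ and $F_2(u-3c)F_1(u)F_2(u)F_2(u-c)$ contain, frozen at its exact value, the very current that carries your $\epsilon$. Exchanging them with $F_2(w)$ as $w\to u-3c$ or $w\to u-2c$ controls the five-fold product in the $v_1,v_2$ directions. It says nothing about the limit $v_4\to u-c$ used in \eqref{zero-G2}. Off the stratum the product is genuinely singular along $v_4=v_1+c$ (from the pair $F_2(v_1)F_2(v_4)$), so regularity in one direction does not give it in the other.

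Second, no such relation can have a single $f_2$-type factor, because \eqref{FFalt} already fixes the exchange ratio. For $X(u)=F_1(u)F_2(u)F_2(u-c)$ one gets $X(u)F_2(w)=\frac{(w-u+c)(w-u+2c)}{(w-u-c)(w-u+3c)}F_2(w)X(u)$. Even \eqref{sm-cr-g2} has a ratio with two zeros and two poles. What the Serre relation decides is how these zeros and poles split between the two orderings. So ``absorbing extra scalar factors into the bookkeeping'' is exactly the step that cannot be waved through. You need to single out the block $F_{3,-1}(u)$, without $F_2(u-c)$, and prove \eqref{sm-cr-g2} for it.
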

\proof The techniques to prove this proposition is the same as in 
proposition~\ref{Enr1}, using the Serre relation \eqref{Ser4} and the commutation relations \eqref{FFalt}. 
It is based on the commutation relation 
between simple root current $\aF_2(v)$ and composed root current
$\aF_2(u-2\,c)\,\aF_2(u-3\,c)\,\aF_1(u)\,\aF_2(u)$:
\begin{equation}\label{sm-cr-g2}
\begin{split}
&\frac{u-v+c}{u-v}\,\Big(\aF_2(u-2\,c)\,\aF_2(u-3\,c)\,\aF_1(u)\,\aF_2(u)\Big)\,\aF_2(v) =\\
&\qquad=
\frac{v-u+4\,c}{v-u+2\,c}\,
\aF_2(v) \Big(\aF_2(u-2\,c)\,\aF_2(u-3\,c)\,\aF_1(u)\,\aF_2(u)\Big)\ 
\end{split}
\end{equation}
which is implied by  the Serre relation \r{Ser4}. This implication will be 
shown in appendix~\ref{AppB} using technique of the paper \cite{DKh}. \qed

\section{Composed currents and Serre relations}
\label{CC-SR}

The composed currents were introduced in \cite{DKh}, 
for the  quantum affine algebras 
$U_q(\widehat{\ggo})$, where $\ggo$ is a simply laced  Lie algebra. 
In particular, it was shown that the commutation relations 
of these composed currents with simple root currents are equivalent 
to the Serre relations between simple root currents. The main ingredient 
of the construction in \cite{DKh} was a Weyl group extension of 
quantized current algebras.  The composed currents appear in 
this approach from the braid group action. 

One can develop the same approach for the Yangian double $\aDY{\fg}$ with 
$\ggo=\gga,\goN,\gsp$ and
restricted to the category of highest weight representations. 
In this context,  the simple root currents are viewed as generating series 
in the completed subalgebra 
$\oY_F$ (see e.g. \cite{DKh,DKhP}) and their products  
are considered as rational functions of the complex parameters, with simple poles 
defined by the zeros of the functions entering their commutation relations.
The composed currents are then defined  as the residues at these poles.
 
The calculations in this section  are the extension of the calculations \cite{DKh}
to the Yangian doubles $\aDY{\fg}$ of all classical series.

Fix a positive integer $n\geq2$. 
In order to describe Yangian doubles  
simultaneously for all $\ggo$ of the classical series
we introduce two integer parameters $\frd_\ggo$ and $\epsi_\fg$ 
\begin{equation}\label{prime-par}
\frd_\ggo=\begin{cases}
n+1,&\mathfrak{g}=\mathfrak{sl}_{n}\,,\\
0,&\mathfrak{g}=\mathfrak{o}_{2n+1}\,,\\
1,&\mathfrak{g}=\mathfrak{sp}_{2n},\ \mathfrak{o}_{2n}\,,
\end{cases}\qquad 
\epsi_{\ggo}=\begin{cases}
0,& \ggo=\gga\,,\\
1,&\ggo=\ggb,\ \ggd\,,\\
-1,& \ggo=\gsp\,.
\end{cases}
\end{equation}
The first parameter is used to introduce the indices for the numbering 
composed currents in the Yangian double, while the second parameter 
allows to distinguish linear, orthogonal and symplectic Yangian doubles.

For any integer $i$ we define the map $i\to i^\prime$ as 
\begin{equation}\label{prime}
i^{\,\prime}=\frd_\ggo-i\,.
\end{equation}
This map acts invariantly on the set  
\begin{equation}\label{I-g}
\Ig=\{n',n'+1,\ldots,n-1,n\}\,.
\end{equation}
 Denote by $N_\ggo$ the cardinality of the set $\Ig$. 
It is easy to verify that 
\begin{equation}\label{N-g}
N_\ggo=|\Ig|\equiv N=2n+1-\frd_\ggo\,.
\end{equation} 
Explicitly, $N_\ggo$ is equal to $n$, $2n+1$, $2n$ and $2n$
for $\ggo=\gga$, $\ggb$, 
$\gsp$, $\ggd$  respectively, and coincides with the dimension of the 
vector representation of $\ggo$.

\subsection{Definition of composed currents}

Definition of composed currents for the Yangian doubles of all classical series 
was given in \cite{LP1}. Here we present their definition without proofs.  

For  $\ggo\not=\gga$, let $\bvphi_\ggo$ and $\theta_\fg$ be the parameters 
\begin{equation}\label{vphi}
\bvphi_\fg=2\,\frd_\fg+(\epsi_\fg-1)/2,\qquad 
\theta_\fg=\epsi_\fg+(1-\frd_\fg)/2\,.
\end{equation}
Using these parameters we define for each node of the Dynkin diagram 
of the classical series
the shifted parameters $z_a$ for the algebra
$\fg$ of the  series $B_n$, $C_n$, and $D_n$  
\begin{equation}\label{z-sh1}
 z_a=\begin{cases}
 z,&0\leq a<\bvphi_\fg\,,\\
 z-c(a+\theta_\fg),& \bvphi_\fg\leq a\leq n-1\,.
 \end{cases}
 \end{equation}

For $\ggo\not=\gga$, using the map \eqref{prime}, we introduce {\sl auxiliary} 
simple root currents  \cite{LP1}  by the equality 
\begin{equation}\label{dep-cur}
F_{a'-1}(z)=-\ F_a(z_a),\qquad \bvphi_\ggo\leq a\leq n-1\,.
\end{equation}

Combining simple roots and (for $\ggo\not=\gga$) auxiliary simple roots currents 
we can define 
the {\sl composed} currents $F_{j,i}(u)$, $i,j\in\Ig$. 
For the algebras $\ggo=\gga$, $\ggb$, and $\ggc$ they read 
\begin{equation}\label{ccABC}
F_{j,i}(u)=F_i(u)\, F_{i+1}(u)\cdots F_{j-2}(u)\, F_{j-1}(u)\,,
\qquad n'\leq i<j\leq n
\end{equation} 
while for the algebra $\ggo=\ggd$ they take the form
\begin{equation}\label{ccD}
F_{j,i}(u)=\begin{cases}
F_i(u)\, F_{i+1}(u)\cdots  F_{j-1}(u),\quad 2\leq i<j\leq n
\quad\mbox{or}\quad
2\leq j'<i'\leq n\,,\\
-\ F_{i}(u)\cdots F_{-2}(u)\, F_{j}(u),\quad 2\leq i'\leq n,\quad j=0,1\,,\\
0,\quad i=0,\quad j=1\,,\\
 F_{i}(u)\, F_{2}(u)\cdots F_{j-1}(u),\quad i=0,1,\quad 2\leq j\leq n\,,\\
-\ \Big(F_i(u)\cdots F_{-2}(u)\Big)\, F_0(u)\, F_1(u)\, \Big(F_2(u)\cdots F_{j-1}(u)\Big),\quad
2\leq i',j\leq n\,.
\end{cases}
\end{equation}
The composed current $F_{2,-1}(u)=-F_0(u)\,F_1(u)$ 
according to the last line in \r{ccD} for 
$i'=j=2$ is equal (up to the sign) to the product of the commuting 
currents $F_0(u)$ and $F_1(u)$.

\subsection{Properties of composed currents for $\aDY{\gga}$ and $\aDY{\ggd}$}

The composed currents $F_{j,i}(u)$ are labeled 
by two indices $i,j\in \Ig$ such that $i<j$. The simple root currents 
$F_a(u)$ for $a\in\PSR_{\fg}$ is accordingly denoted as $F_{a+1,a}(u)$. 
In the appendix~\ref{AppA}, following the results obtained in \cite{DKh},
we construct the composed current $F_{3,1}(u)= F_1(u)\,F_2(u)=
F_{2,1}(u)\,F_{3,2}(v)$ in the analytical context
and show that it is a well-defined element of the completed subalgebra 
$\oY_F$ in the Yangian double $\aDY{\fs\fl_3}$. 

To lighten the presentation we will use the notation 
\begin{equation*}
f_a(u,v)\equiv f(u,v)=\frac{u-v+c}{u-v},\qquad h_a(u,v)\equiv h(u,v)=\frac{u-v+c}{c}
\end{equation*}
for the roots $\rt_a$ such that $(\rt_a,\rt_a)=2$.

To investigate the connection between the Serre relations and the properties 
of the composed currents for the Yangian double $\aDY{\gga}$, 
it is sufficient to consider the simplest nontrivial case $\aDY{\fs\fl_3}$. 
We start from the commutation relation between 
$F_{3,1}(u)$ and $F_2(v)$
and  use the Serre relation 
\begin{equation}\label{Ac4}
\mathop{\rm Sym}\limits_{v_1,v_2}\Big[F_2(v_1),[F_2(v_2),F_{1}(u)]\Big]=0
\end{equation}
to calculate these commutation relations.  To do this we rewrite the 
Serre relation \r{Ac4} in the form 
\begin{equation}\label{Ac5}
\mathop{\rm Sym}\limits_{v_1,v_2}
\Big(F_2(v_1)F_2(v_2)F_1(u)-2F_2(v_1)F_1(u)F_2(v_2)+
F_1(u)F_2(v_1)F_1(v_2)\Big)=0\,.
\end{equation}
We will  transform the linear combination of the products of the currents 
under symmetrization 
over $v_1$ and $v_2$ in \r{Ac5} moving the currents $F_1(u)$ 
to the left using the commutation relation \r{Ac2}. 

Doing this we rewrite the combination 
\begin{equation}\label{Ac6}
F_2(v_1)F_2(v_2)F_1(u)-2F_2(v_1)F_1(u)F_2(v_2)+F_1(u)F_2(v_1)F_2(v_2)
\end{equation}
in the form 
\begin{equation}\label{Ac7}
\begin{split}
&h(v_1,v_2)g(u,v_1)g(u,v_2)\,F_1(u)F_2(v_1)F_2(v_2)+\\
&\qquad+
c\delta(u,v_2)F_2(v_1)F_{3,1}(u)- c\delta(u,v_1) f(u,v_2)F_{3,1}(u)F_2(v_2)\,.
\end{split}
\end{equation}
Symmetrizing all expressions in \r{Ac7} over $v_1$ and $v_2$ and using 
the commutation relation 
\begin{equation}\label{Ac8}
h(v_1,v_2)F_2(v_1)F_2(v_2)+h(v_2,v_1)F_2(v_2)F_2(v_1)=0
\end{equation}
  and the linear independence of the $\delta$-functions 
one concludes that the Serre relation \r{Ac4} or \r{Ac5} 
yields the commutation relation 
\begin{equation}\label{Ac9}
F_2(v)F_{3,1}(u)= f(u,v) F_{3,1}(u)F_2(v)\,.
\end{equation}

Analogously, from the second Serre relation 
\begin{equation}\label{Ac42}
\mathop{\rm Sym}\limits_{u_1,u_2}\Big[F_1(u_1),[F_1(u_2),F_{2}(v)]\Big]=0
\end{equation}
one can get the commutation relation 
\begin{equation}\label{Ac92}
F_{3,1}(u)F_1(v)= f(v,u) F_1(v)F_{3,1}(u)\,.
\end{equation}
Let us remind that in \eqref{Ac9}, $f(u,v)$ is considered as a series in $v/u$, while in \eqref{Ac92}, $f(v,u)$ is considered as a series in $u/v$.

For $n>3$ the composed currents $F_{j,i}(u)$ for $1\leq i<j\leq n$ 
\begin{equation}\label{comp-cu}
F_{j,i}(u)=F_i(u)\, F_{i+1}(u)\cdots F_{j-2}(u)\, F_{j-1}(u)
\end{equation}
can be defined inductively in  the subalgebra 
$\oY_F$.
The commutation relations between all composed and simple root currents 
can be formulated as 
\begin{lemma}\label{lem21}
The following relations hold in $\oY_F$ for any $i<j$ and $k<l$:
\begin{eqnarray}
F_{j,i}(u)\,F_{l,k}(v)&=&F_{l,k}(v)\,F_{j,i}(u)\,,\quad j<k\,,\label{cc1}\\
F_{j,i}(u)\,F_{l,k}(v)&=&F_{l,k}(v)\,F_{j,i}(u)\,,\quad i<k<l<j\,,\label{cc5}\\
h(v,u)\,F_{j,i}(u)\,F_{l,k}(v)&=&g(v,u)^{-1}\,F_{l,k}(v)\,F_{j,i}(u)\,,\quad 
i<j=k<l\,,
\label{cc2}\\
f(u,v)\,F_{j,i}(u)\,F_{l,k}(v)&=&
f(v,u)\,F_{l,k}(v)\,F_{j,i}(u)\,,\quad j=l,\ i=l\,,\label{cc4}\\
F_{j,i}(u)\,F_{l,k}(v)&=&f(v,u)\,F_{l,k}(v)\,F_{j,i}(u)\,,\quad k<i,\ j=l\,,
\label{cc3}\\
f(u,v)\,F_{j,i}(u)\,F_{l,k}(v)&=&
F_{l,k}(v)\,F_{j,i}(u)\,,\quad i=k,\ j<l\,.\label{cc6}
\end{eqnarray}
The rational functions $f(u,v)$  (resp. $f(v,u)$) in the l.h.s. (resp. the r.h.s.) of the commutation relations \r{cc4}--\r{cc6} should be 
understood as power series of the ratio $v/u$ (resp.  the ratio $u/v$).
\end{lemma}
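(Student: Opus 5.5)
The plan is an induction on the lengths $j-i$ and $l-k$ of the composed currents. Each $F_{j,i}(u)$ is a product of consecutive simple root currents at the same point $u$. Products of the type \r{comp-cu} are well defined in $\oY_F$: this was shown for $F_{3,1}$ in appendix~\ref{AppA}, and the same argument works inductively, since $F_{j,i}(u)=F_{j-1,i}(u)\,F_{j-1}(u)$ with the relevant pole cancelled. So every relation in the lemma can be obtained by moving the elementary factors $F_a(v)$ through $F_{j,i}(u)$ one at a time. The basic input is the set of relations between $F_{j,i}(u)$ and one simple root current $F_a(v)=F_{a+1,a}(v)$. Once these are known, the general case $F_{l,k}(v)=F_k(v)\cdots F_{l-1}(v)$ follows by multiplying them together. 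In that product most factors telescope: for instance $f(v,u)$ appears only for the endpoint factors, and $h(v,u)/g(v,u)^{-1}$ appears only for the adjacent factor.

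\textbf{Step 1 (single simple root against a composed current).} For $a\notin\{i-1,i,j-1,j\}$, the current $F_a(v)$ commutes with every factor of $F_{j,i}(u)$ by \r{FFalt} with $\fb=0$, or the factors pass through in adjacent pairs. For $i<a<j-1$ the current $F_a(v)$ meets $F_{a-1}(u)F_a(u)F_{a+1}(u)$. Here I would use the $\gl3$ relations \r{Ac9} and \r{Ac92}, which come from the Serre relations. Applied to the pair $F_{a-1}(u)F_a(u)$, they give a factor $f(u,v)$, and applied to $F_a(u)F_{a+1}(u)$ they give a factor $f(v,u)^{-1}$ in the opposite direction. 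These factors must cancel to give plain commutation, and this is the origin of \r{cc5}. For the endpoints $a=i$ or $a=j-1$, relations \r{Ac92} and \r{Ac9} directly produce the factors $f$ that appear in \r{cc3} and \r{cc6}. For $a=j$ or $a=i-1$, the only interaction is the simple relation \r{FFalt} with $\ocre$ for adjacent nodes, and this yields \r{cc2} in the ratio form $h(v,u)$ versus $g(v,u)^{-1}$.

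\textbf{Step 2 (assembling).} To get the general cases \r{cc1}--\r{cc6}, I write $F_{l,k}(v)$ as an ordered product and apply Step 1 to each factor. Then I check that the product of scalar factors is the stated one. For \r{cc4}, where $i=k$ and $j=l$, this comes down to the diagonal relation \r{no-aa2} for the first factor together with the cancellations of Step 1. In each case I also need to track the expansion domain, $v/u$ versus $u/v$, so that the series prescriptions match those stated in the lemma.

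\textbf{Main obstacle.} The delicate point is that the currents in $F_{j,i}(u)$ sit at coinciding arguments. Moving $F_a(v)$ past one factor therefore cannot be done naively, because the intermediate products such as $F_{a-1}(u)F_a(v)F_a(u)$ have poles at $v=u$. I would handle this using the analytic framework of section~\ref{cc-property}. First I would work with generic distinct arguments $u_i,\dots,u_{j-1}$ and multiply by the normalizing rational factors that make the product regular. Then I would exchange the currents using the relation \r{FFalt} and the $\gl3$ relations \r{Ac9} and \r{Ac92} for the adjacent triple, and finally specialize all $u_s\to u$ as a residue or restriction, in the same way as in the construction of $F_{3,1}$ in appendix~\ref{AppA}. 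The key fact to establish is that this specialization commutes with multiplication by $F_a(v)$ in $\oY_F$. This should follow from the Taylor-series structure \r{hwp2}, but it is the step where care is required.
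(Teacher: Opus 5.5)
Your strategy is the one the paper has in mind. The paper only sketches the proof, pointing to the $U_q(\widehat{\mathfrak{gl}}_n)$ analogue in \cite{KhP}: everything is reduced to the quadratic relations \r{FFalt} and the two $\mathfrak{sl}_3$ relations \r{Ac9}, \r{Ac92} obtained from the Serre relations.

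Your attribution of where the Serre relations enter is more accurate than the paper's remark that the first four relations in the list (\r{cc1}, \r{cc5}, \r{cc2}, \r{cc4}) follow from \r{FFalt} alone. A pole count using only \r{FFalt} gives the following. The product $F_{4,1}(u)F_2(v)$ has possible poles at $u=v\pm c$. The product $F_{3,1}(u)F_{3,1}(v)$ has a double pole at $u=v-c$ and a pole at $u=v+c$, whereas \r{cc4} allows only a simple pole at $u=v-c$. Removing these poles requires \r{Ac9} and \r{Ac92}; only \r{cc1} and \r{cc2} are purely quadratic. Accordingly, for $j-i\ge 2$ the factor in \r{cc4} comes from your two endpoint relations, $a=i$ via \r{Ac92} and $a=j-1$ via \r{Ac9}, and not from \r{no-aa2} as you say in Step 2.

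There is, however, a concrete error in the interior case $i<a<j-1$ of Step 1. You cannot apply \r{Ac9} to $F_{a-1}(u)F_a(u)$ and \r{Ac92} to $F_a(u)F_{a+1}(u)$ simultaneously, because $F_a(u)$ would be used twice. The factors you quote also do not cancel: $f(u,v)\,f(v,u)^{-1}=(u-v+c)/(u-v-c)$.

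The correct computation uses one of the splittings $F_{a+2,a-1}(u)=F_{a+1,a-1}(u)\,F_{a+1}(u)=F_{a-1}(u)\,F_{a+2,a}(u)$.
In the first splitting, moving $F_a(v)$ to the left through $F_{a+1}(u)$ costs $f(u,v)$ by \r{FFalt}, and through $F_{a+1,a-1}(u)$ costs $f(u,v)^{-1}$ by \r{Ac9}.
In the second splitting, moving through $F_{a+2,a}(u)$ costs $f(v,u)$ by \r{Ac92}, and through $F_{a-1}(u)$ costs $f(v,u)^{-1}$ by \r{FFalt}.

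Either splitting gives the factor $1$, but only as an identity of meromorphic functions. The formal-series statement \r{cc5} also needs $F_{a+2,a-1}(u)F_a(v)$ to be pole-free. After setting the arguments equal, the first splitting leaves a possible pole only at $u=v-c$, and the second only at $u=v+c$. Since both describe the same function, it has no poles. This, and not a cancellation of scalar factors, is why both Serre-derived relations are needed.

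The same caveat applies to your Step 2: telescoping scalar factors alone only yields meromorphic identities. The assembly becomes legitimate once each single-current relation is established as a relation without $\delta$-terms, meaning the normalized pair product is a Laurent polynomial. Then, as in \r{hwp3}--\r{hwp5}, the matrix coefficient of $F_{j,i}(u)F_k(v_k)\cdots F_{l-1}(v_{l-1})$ is a Laurent polynomial divided by pairwise denominators. Setting $v_k=\dots=v_{l-1}=v$ is harmless, because the $v$--$v$ denominators vanish only at $v_a=v_{a+1}+c$. This resolves the point you identified as the main obstacle.
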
 

The lemma is the analog of the proposition~A.1 in the paper
\cite{KhP}, where it was proved for the quantum affine algebra 
$U_q(\widehat{\mathfrak{gl}}_n)$. 
The commutation relations \r{cc1}--\r{cc4}  are 
consequences of the commutation relations \r{FFalt}
between simple root currents. The commutation relations 
\r{cc3} and \r{cc6} are consequence of the Serre relations \r{SR-gF} as it was shown 
above for the algebra $\aDY{\fs\fl_3}$. 
An immediate corollary of the lemma~\ref{lem21}
is the following corollary. 
\begin{cor}\label{cor-intl}
The commutation relations 
\r{inla2} and \r{inla2a} for the composed currents 
with interlaced indices
\begin{eqnarray}
h(v,u)\,f(u,v)\,F_{j,i}(u)\,F_{l,k}(v)&=&g(v,u)^{-1}\,F_{l,k}(v)\,F_{j,i}(u),
\quad i<k<j<l\,,\label{inla2}\\
g(u,v)^{-1}\,F_{j,i}(u)\,F_{l,k}(v)&=&h(u,v)\,f(v,u)\,F_{l,k}(v)\,F_{j,i}(u),\quad
k<i<l<j\label{inla2a}
\end{eqnarray}
are consequence of the commutation relations \r{cc1}--\r{cc6}.
\end{cor}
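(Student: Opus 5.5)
The plan is to reduce each interlaced relation to two applications of Lemma~\ref{lem21}. I would split the composed current with the larger range of indices at one of the indices of the other current, so that each factor is either adjacent to the other current or shares an endpoint with it.

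First I need the factorization. By the definition \r{comp-cu}, $F_{j,i}(u)=F_i(u)\cdots F_{j-1}(u)$. For any $i<m<j$ this gives
\begin{equation*}
F_{j,i}(u)=F_{m,i}(u)\,F_{j,m}(u)\,.
\end{equation*}
This regrouping is legitimate in $\oY_F$ because, as argued in section~\ref{cc-property}, the product $\aF_{a_1}(z_1)\cdots\aF_{a_k}(z_k)$ coincides with the product of any two consecutive blocks of it. Specializing all the $z$'s to the same $u$ is allowed since the composed currents are defined inductively as well-defined elements of $\oY_F$.

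Case \r{inla2}, $i<k<j<l$. I would write $F_{j,i}(u)=F_{k,i}(u)F_{j,k}(u)$.
\begin{itemize}
\item The pair $F_{j,k}(u),F_{l,k}(v)$ has the same first index and $j<l$, so \r{cc6} gives $f(u,v)F_{j,k}(u)F_{l,k}(v)=F_{l,k}(v)F_{j,k}(u)$.
\item The pair $F_{k,i}(u),F_{l,k}(v)$ is adjacent ($i<k=k<l$), so \r{cc2} gives $h(v,u)F_{k,i}(u)F_{l,k}(v)=g(v,u)^{-1}F_{l,k}(v)F_{k,i}(u)$.
\end{itemize}
Multiplying $F_{k,i}(u)F_{j,k}(u)F_{l,k}(v)$ by $h(v,u)f(u,v)$, I first use \r{cc6} to move $F_{l,k}(v)$ past $F_{j,k}(u)$, and then \r{cc2} to move it past $F_{k,i}(u)$. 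The result is \r{inla2}.

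Case \r{inla2a}, $k<i<l<j$. I would write $F_{j,i}(u)=F_{l,i}(u)F_{j,l}(u)$.
\begin{itemize}
\item The pair $F_{j,l}(u),F_{l,k}(v)$ is adjacent, with the $v$-current on the left. Relation \r{cc2}, with the roles of $u$ and $v$ exchanged, gives $h(u,v)F_{l,k}(v)F_{j,l}(u)=g(u,v)^{-1}F_{j,l}(u)F_{l,k}(v)$.
\item The pair $F_{l,i}(u),F_{l,k}(v)$ has a common last index and $k<i$, so \r{cc3} gives $F_{l,i}(u)F_{l,k}(v)=f(v,u)F_{l,k}(v)F_{l,i}(u)$.
\end{itemize}
Chaining the two in this order yields \r{inla2a}.

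The step I expect to need care is the consistency of the series expansions rather than the algebra itself. The functions $h$ and $g^{-1}$ are polynomials and cause no problem. The factors $f(u,v)$ and $f(v,u)$ must be expanded in $v/u$ on the left-hand side and in $u/v$ on the right-hand side, exactly as prescribed in Lemma~\ref{lem21}. I would check that each intermediate product, such as $F_{k,i}(u)F_{l,k}(v)F_{j,k}(u)$, appears with a rational factor expanded in the direction dictated by the position of its currents. I would also check that these intermediate products are well-defined elements of $\oY_F$, that is, that they are analytic in the appropriate domain in the sense of section~\ref{cc-property}. Once this bookkeeping is verified, both chains of equalities hold as identities of meromorphic operator-valued functions, and this gives the corollary.
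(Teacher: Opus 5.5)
Your proof is correct, and it takes a different decomposition from the paper's. For \eqref{inla2}, \eqref{cc6} turns $h(v,u)f(u,v)F_{k,i}(u)F_{j,k}(u)F_{l,k}(v)$ into $h(v,u)F_{k,i}(u)F_{l,k}(v)F_{j,k}(u)$, and \eqref{cc2} then gives $g(v,u)^{-1}F_{l,k}(v)F_{j,i}(u)$. For \eqref{inla2a}, the swapped form of \eqref{cc2} followed by \eqref{cc3} gives $h(u,v)f(v,u)F_{l,k}(v)F_{l,i}(u)F_{j,l}(u)=h(u,v)f(v,u)F_{l,k}(v)F_{j,i}(u)$.

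The paper instead factorizes both currents at the inner indices, $F_{j,i}(u)=F_{k,i}(u)F_{j,k}(u)$ and $F_{l,k}(v)=F_{j,k}(v)F_{l,j}(v)$, and runs a three-move chain:
\begin{itemize}
\item \eqref{cc3} for $F_{j,i}(u)$ against $F_{j,k}(v)$, which share the upper index $j$;
\item \eqref{cc2} for the adjacent pair $F_{j,k}(u)$, $F_{l,j}(v)$;
\item plain commutation of $F_{k,i}(u)$ with $F_{l,j}(v)$, cited there as \eqref{cc5} although it is really the disjoint case \eqref{cc1}.
\end{itemize}
You split only $F_{j,i}(u)$ and let the unsplit $F_{l,k}(v)$ pass each piece. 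So you need one adjacency relation and one Serre-derived relation, with no pure commutation step.

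A side effect is that your proof of \eqref{inla2} rests on the common-lower-index relation \eqref{cc6}, the $\mathfrak{sl}_3$ analogue of \eqref{Ac92}. The paper's rests on \eqref{cc3}, the analogue of \eqref{Ac9}. This is harmless, since Lemma~\ref{lem21} supplies both.

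Your chain is shorter, and it writes out \eqref{inla2a}, which the paper only covers by ``similarly''. The series-direction bookkeeping you flag as delicate is in fact automatic. The only non-polynomial factor enters through \eqref{cc6} (resp.\ \eqref{cc3}) exactly on the side where Lemma~\ref{lem21} prescribes its expansion.
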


Let us prove this corollary for the relation  \r{inla2}. The relation \r{inla2a} can 
be proved  similarly. 
First note that, according 
to their definitions \r{comp-cu},  the composed currents in the 
commutation relation \r{inla2} 
can be presented in the factorized form  
\begin{equation}\label{inla1}
F_{j,i}(u)=F_{k,i}(u)\,F_{j,k}(u),\qquad
F_{l,k}(v)=F_{j,k}(v)\,F_{l,j}(v),\qquad i<k<j<l\,.
\end{equation}

 The commutation relation \r{inla2} 
follows from the following chain of equalities 
\begin{equation}\label{inla3}
\begin{split}
&h(v,u)\,f(u,v)\,F_{j,i}(u)\,F_{l,k}(v)\mathop{=}\limits^{\r{inla1}}
h(v,u)\,f(u,v)\,F_{j,i}(u)\,F_{j,k}(v)\,F_{l,j}(v)\\
&\quad\mathop{=}\limits^{\r{cc3}}\ 
h(v,u)\,F_{j,k}(v)\,F_{j,i}(u)\,F_{l,j}(v)\mathop{=}\limits^{\r{inla1}}
h(v,u)\,F_{j,k}(v)\,F_{k,i}(u)\,F_{j,k}(u)\,F_{l,j}(v)\\
&\quad\mathop{=}\limits^{\r{cc2}}\ 
g(v,u)^{-1}\,F_{j,k}(v)\,F_{k,i}(u)\,F_{l,j}(v)\,F_{j,k}(u)\\
&\quad 
\mathop{=}\limits^{\r{cc5}}
g(v,u)^{-1}\,F_{j,k}(v)\,F_{l,j}(v)\,F_{k,i}(u)\,F_{j,k}(u)
\mathop{=}\limits^{\r{inla1}}
g(v,u)^{-1}\,F_{l,k}(v)\,F_{j,i}(u)\,.
\end{split}
\end{equation}

As it was argued in \cite{DKh,KhP}, the commutation relations between 
currents given by the lemma~\ref{lem21} restricted to the category of 
the highest weight representations are  relations 
between  functions of the parameters $u$ and $v$ which 
do not have zeros or poles. It implies that 
if one side of the commutation relation between currents 
$F(u)$ and $F'(v)$ is a product $G(u,v)\,F(u)\,F'(v)$ with some rational function 
$G(u,v)$ then the product of the
currents $F(u)\,F'(v)$  has a pole 
where the function $G(u,v)$ has zero and has a zero where 
the function $G(u,v)$ has pole. For example, it follows 
from the commutation relations \r{cc1} and \r{cc5} that the corresponding products
as functions of $u$ and $v$ have no zeros neither poles. 
The relation \r{cc4} means that the product of the same composed 
currents $F_{j,i}(u)\,F_{j,i}(v)$ have a simple pole when $v=u+c$ and 
simple zero when $u=v$, etc.  

The commutation relations 
\r{inla2} and \r{inla2a} imply that  the product 
of the composed currents $F_{j,i}(u)\,F_{l,k}(v)$ 
with interlaced indices $i<k<j<l$
has two simple poles at the points $u=v\pm c$ and one simple zero at $u=v$
while the inverse product $F_{l,k}(v)\,F_{j,i}(u)$ has one simple pole at $u=v$. 
Similarly, for $k<i<l<j$ the product 
of the composed currents $F_{j,i}(u)\,F_{l,k}(v)$ has a simple pole at $u=v$ and 
the inverse product $F_{l,k}(v)\,F_{j,i}(u)$ has two simple poles at the points 
$u=v\pm c$ and one simple zero at $u=v$.
The importance to investigate the 
commutation relation between composed currents
with interlaced indices was mentioned in 
\cite{KT2}.


The algebra of the composed currents in the Yangian 
doubles  $\aDY{\ggb}$, $\aDY{\gsp}$, and $\aDY{\ggd}$ 
may be similarly investigated. 
The commutation relations of the composed currents $F_{j,i}(u)$ 
in these cases for the values of the indices 
$\varphi_\ggo\leq i<j\leq n$ and $\varphi_\ggo\leq j'<i'\leq n$
repeat the $\gga$ type structure shown in the lemma~\ref{lem21}. 
Due to the simply laced character of the algebra $\fo_{2n}$ 
the algebra of the composed currents \r{ccD} in the Yangian double 
$\aDY{\ggd}$ maybe investigated in the same way as for the Yangian 
$\aDY{\gga}$. We left this as an exercise for the interested readers.

Examples of composed currents  which  
are not of $\gga$ type and their relations
to the corresponding Serre relations are given in sections~\ref{AppF}
and \ref{AppG} for the Yangian doubles $\aDY{\fs\fp_4}$ and $\aDY{\fo_5}$.

\subsection{Composed currents for  $\aDY{\gsp}$}\label{AppF}

The composed currents for the Yangian double $\aDY{\gsp}$ are defined 
by the equalities \r{ccABC}. In order to describe the commutation relations between 
them we introduce  simplified notations for the functions 
$h_0(u,v)$ and $f_0(u,v)$
\begin{equation*}\label{fhc}
h_0(u,v)\equiv \hc(u,v)=\frac{u-v+2\,c}{c},\quad 
f_0(u,v)\equiv \fc(u,v)=\frac{u-v+2\,c}{u-v},\quad  
\fc(u,v)=\hc(u,v)\,g(u,v)\,.
\end{equation*}
The currents $F_i(u)$ for $i=1,\ldots,n-1$ form the $\mathfrak{sl}_n$-type
completed  subalgebra $\oY_F$ in $\aDY{\gsp}$. 
Thus, the corresponding composed currents can be 
deduced from the study of the $\fs\fl_n$ case.
Moreover, since the currents $F_i(u)$ for $i=2,\ldots,n-1$
commute with the special current $F_0(v)$, the 
only new feature comes from the currents $F_0(u)$ and $F_1(u)$. Hence, in order to 
describe the composed currents in the Yangian double $\aDY{\gsp}$
 it is sufficient to consider the case $n=2$, namely $\aDY{\fs\fp_4}$.
\r{FFalt}
The only nontrivial commutation relation between neighboring simple root 
currents \r{FFalt} in this case will be 
\begin{equation}\label{C3}
(v-u+2\,c)\ F_0(u)F_1(v)=(v-u)\ F_1(v)F_0(u)
\end{equation}
which can be interpreted 
as an analytical continuation of the 
products of currents from the domain $|u|\gg|v|$ to 
the domain  $|v|\gg|u|$. 
These analytical properties allow to rewrite the commutation relation 
\r{C3} in the form 
\begin{equation}\label{C4}
F_1(v)F_0(u)=\fc(v,u)F_0(u)F_1(v)+2\,c\,\delta(u,v)F_{2,0}(u)
\end{equation}
where the rational function $\fc(v,u)$ in the l.h.s. of \r{C4} should be understood as 
a series with respect to the nonnegative powers of $v/u$. 
The element 
$F_{2,0}(u)$ is the  composed current and is equal to the residue 
of the product $F_1(v)F_0(u)$ at the point $v=u$
\begin{equation}\label{C5}
\begin{split}
&2\,c\ F_{2,0}(u)=\mathop{\rm res}\limits_{v=u}F_1(v)F_0(u)=
(v-u)F_1(v)F_0(u)\Big|_{v=u}=\\
&\qquad\qquad=
(v-u+2\,c)F_0(u)F_1(v)\Big|_{v=u}
=2\,c\ F_0(u)F_1(u)\,.
\end{split}
\end{equation}
This is a well-defined object in the Yangian double restricted to the category 
of  highest weight representations as it was explained in appendix~\ref{AppA}.

Analogously  the product $F_0(u)F_1(v)$
can be analytically continued from the domain $|u|\gg |v|$ to the domain 
$|u|\ll |v|$ by rewriting 
the same commutation relation \r{C3}  in the form 
\begin{equation}\label{C6}
F_0(u)F_1(v)=\fc(u-2\,c,v)F_1(v)F_0(u)+2\,c\,\delta(u-2\,c,v)F_{1,-1}(v+2\,c)
\end{equation}
where the rational function $\fc(u-2\,c,v)$ in the r.h.s. of \r{C6} should be understood 
as a series over nonnegative powers of $(u-2\,c)/v$. 
The element 
$F_{1,-1}(v)$ is another composed current which is equal to the residue 
of the product $F_0(u)F_1(v)$ at the point $u=v+2\,c$
\begin{equation}\label{C7}
\begin{split}
&2\,c\ F_{1,-1}(v+2\,c)=\mathop{\rm res}\limits_{u=v+2\,c}F_0(u)F_1(v)=
(v-u+2\,c)F_0(u)F_1(v)\Big|_{u=v+2\,c}=\\
&\qquad=(v-u)F_1(v)F_0(u)\Big|_{u=v+2\,c}=-\ 2\,c\, F_1(v)F_0(v+2\,c)\,.
\end{split}
\end{equation}
Again one can show that this element is a well-defined object
in the Yangian double restricted to the category of highest weight 
representations. 

The  shift by $2\,c$ and the indices in the composed current
$F_{1,-1}(v+2\,c)$ \r{C7} 
can be explained as follows. Recall  the auxiliary simple root current 
$F_{-1}(u)=-F_1(u-2\,c)$ defined by \r{ccABC}. In the notation 
with two indices it can be written as $F_{0,-1}(u)= - F_{2,1}(u-2\,c)$. 
Then the composed current $F_{1,-1}(u)$ defined by \r{C7}
can be written as 
\begin{equation}\label{C8}
 F_{1,-1}(u)=-F_1(u-2\,c)\,F_0(u)=F_{-1}(u)\,F_0(u)=F_{0,-1}(u)\,F_{1,0}(u)\,.
 \end{equation}
 So both composed currents \r{C5} and \r{C7} can be described by
  the single formula
 \begin{equation}\label{C9}
 F_{i+1,i-1}(u)=F_{i-1}(u)\,F_{i}(u)=F_{i,i-1}(u)\,F_{i+1,i}(u),\quad i=0,1.
 \end{equation}

Now we can address the question of the commutation relations 
between composed and simple root currents. 
Using the relation $F_{1,0}(v)^2=0$, which comes from the commutation relations \r{FFalt}, 
 one finds  
\begin{equation}\label{C16}
F_{2,0}(v)F_{1,0}(u)= \fc(u,v) F_{1,0}(u) F_{2,0}(v)
\end{equation}
and 
\begin{equation}\label{C16-1}
F_{1,0}(v)F_{1,-1}(u)= \fc(u,v) F_{1,-1}(u) F_{1,0}(v)\,.
\end{equation}

Now we can compute the commutation relation between the composed 
current $F_{2,0}(u)$ and the simple root current 
$F_{2,1}(v)=F_1(v)$.  To do this, we consider the product 
\begin{equation*}
f(u,v)\,F_{2,0}(u)\,F_{2,1}(v)=f(u,v)\,F_{0}(u)\,F_{1}(u)\,F_{1}(v)
\end{equation*}
and find that 
  \begin{equation}\label{C11a}
  f(u,v)\,F_{2,0}(u)F_{2,1}(v)=f(u-2\,c,v)\,F_{2,1}(v)F_{2,0}(u)+
  c\,\delta(u-2\,c,v)\,F_{2,-1}(u)
  \end{equation}
where now the $\delta$-function term is proportional to a new 
composed current $F_{2,-1}(u)$ given by 
\begin{equation}\label{C10}
F_{2,-1}(u)=-\ F_1(u-2c)\,F_0(u)\,F_1(u)\,.
\end{equation}

Multiplying the commutation relation \r{C11a} by the function 
$\hc(v,u)$ which have a simple zero at $v=u-2\,c$, we get rid off 
the $\delta$-function term and obtain the commutation relation 
 \begin{equation}\label{C11d}
  \hc(v,u)\,f(u,v)\,F_{2,0}(u)F_{2,1}(v)=h(v,u)\,F_{2,1}(v)F_{2,0}(u)\,.
  \end{equation}
The r.h.s. of this commutation relation implies that 
the product $F_{2,1}(v)F_{2,0}(u)$
has a simple pole at $v=u-c$. 
The l.h.s. of the same commutation relation implies that 
the product $F_{2,0}(u)F_{2,1}(v)$ has 
two simple poles at $v=u-2\,c$, $v=u+c$ and one simple zero 
at $u=v$. 

Using a similar approach we can calculate the commutation relations 
between the new composed current $F_{2,-1}(u)$ and the simple root 
currents $F_{2,1}(v)=F_1(v)$ and $F_{1,0}(v)=F_0(v)$. 
We obtain 
\begin{equation}\label{C51}
f(u,v)\,F_{2,-1}(u)\,F_{2,1}(v)=f(v,u-2\,c)\,F_{2,1}(v)\,F_{2,-1}(u)
\end{equation}
and 
\begin{equation}\label{C51a}
F_{2,-1}(u)\,F_{1,0}(v)=F_{1,0}(v)\,F_{2,-1}(u)\,.
\end{equation}

One can demonstrate that the commutation relation \r{C16} is equivalent to 
the Serre relation 
\begin{equation*}\label{C13}
\mathop{\rm Sym}\limits_{u_1,u_2}\Big[F_0(u_1),[F_0(u_2),F_1(v)]\Big]=0
\end{equation*}
while the commutation relation  \r{C51} is equivalent to
the Serre relation 
\begin{equation*}\label{C13a}
\mathop{\rm Sym}\limits_{v_1,v_2,v_3}
\Big[F_1(v_1),\big[F_1(v_2),[F_1(v_3),F_0(u)]\big]\Big]=0\,.
\end{equation*}

A special treatment is needed to investigate the commutation relation 
of composed currents with interlaced indices. In $\aDY{\fs\fp_4}$,
it corresponds to the composed currents 
$F_{1,-1}(u)=-F_1(u-2\,c)\,F_0(u)$ and $F_{2,0}(v)=F_0(v)\,F_1(v)$. 
Their commutation relation is 
\begin{equation}\label{inlC2}
h(u-2\,c,v)\,\fc(u,v)\,F_{1,-1}(u)\,F_{2,0}(v)=
g(u,v)^{-1}\,f(v,u-2\,c)\,F_{2,0}(v)\,F_{1,-1}(u)\,.
\end{equation}

The commutation relation \r{inlC2} tells that the product $F_{1,-1}(u)\,F_{2,0}(v)$, considered 
as a  function in the category of  highest weight representations 
of $\aDY{\fs\fp_4}$, has  simple poles at the points 
$u=v+c$, $u=v-2\,c$ and a simple zero at the point $u=v$. 
Analogously, the product 
$F_{2,0}(v)\,F_{1,-1}(u)$ has simple poles at the points $u=v$, $u=v+3\,c$ and 
a simple zero at the point $u=v+2\,c$.

\subsection{Composed currents for  $\aDY{\ggb}$}\label{AppG}

The composed currents are defined by the equalities \r{ccABC}. To 
find their commutation relations we follow  
the same approach as in the section~\ref{AppF}.

To understand how one can get these commutation relations and which 
of them are equivalent to the Serre relations 
 it is sufficient to consider the case of the
Yangian double $\aDY{\fo_5}$. The general case 
is then  obvious in view of the embedding 
of $\aDY{\fo_{2n-1}}$ into $\aDY{\ggb}$. 

The composed currents $F_{j,i}$ of the first level (i.e. with $j-i=2$) 
can be defined by rewriting the commutation relations for the simple 
root currents $F_{1,0}(u)$ and $F_{2,1}(v)$ in the analytical
language 
\begin{equation}\label{B4}
\begin{split}
F_1(v)F_0(u)&=f(v,u)F_0(u)F_1(v)+c\delta(u,v)F_{2,0}(u)\,,\\
F_0(u)F_1(v)&=f(u,v+c)F_1(v)F_0(u)+c\delta(u,v+c)F_{0,-2}(v+c/2)\,,\\
F_0(u)F_0(v)&=f(v,u+c/2)\ F_0(v)F_0(u)+c\delta(u,v-c/2)\ F_{1,-1}(v-c/2)\,.
\end{split}
\end{equation} 

Next one can address the question  of the commutation relations 
of the composed currents $F_{2,0}(v)$ and $F_{0,-2}(v)$
with the simple root current $F_0(u)$. 
One gets 
\begin{equation}\label{B21}
\begin{split}
F_{2,0}(v)F_0(u)&=f(v,u)f(u,v+c/2)F_0(u)F_{2,0}(v)-c\delta(v,u-c/2)F_{2,-1}(v)\,,\\
F_0(u)F_{0,-2}(v)&=f(u,v+c/2)f(v,u)F_{0,-2}(v)F_{0}(u)+c\delta(v,u)F_{1,-2}(v)\,,
\end{split}
\end{equation}
where the two new  composed currents of the next level 
$F_{1,-2}(v)=F_1(v-c/2)F_0(v+c/2)F_0(v)$ and 
$F_{2,-1}(v)=-F_0(v+c/2)F_0(v)F_1(v)$  are  well-defined elements in the Yangian double $\aDY{\fo_5}$ in the sense explained 
in appendix~\ref{AppA}.  

The commutation relations 
between the composed currents $F_{2,0}(v)$ and $F_{0,-2}(v)$
 and the simple root current 
$F_{2,1}(u)$ are 
\begin{equation}\label{B16}
\begin{split}
F_{2,1}(v)\,F_{2,0}(u)&= f(u,v)\, F_{2,0}(u)\,F_{2,1}(v)\,,\\
F_{0,-2}(v)\,F_{-1,-2}(u)&= f(u,v)\, F_{-1,-2}(u)\,F_{0,-2}(v)\,,
\end{split}
\end{equation}
and are equivalent to the Serre relation 
\begin{equation}\label{B13}
\mathop{\rm Sym}\limits_{u_1,u_2}\Big[F_1(u_1),[F_1(u_2),F_0(v)]\Big]=0\,.
\end{equation}
On the other hand the commutation relations 
between the composed current $F_{1,-2}(v)$ or $F_{2,-1}(u)$
and the simple root current 
$F_{1,0}(u)=F_{0}(u)$ or $F_{0,-1}(v)=-F_0(v+c/2)$
\begin{equation}\label{B35}
\begin{split}
F_{2,-1}(u)F_{0,-1}(v)&=f(v,u)F_{0,-1}(v)F_{2,-1}(u)\,,\\
F_{1,0}(u)F_{1,-2}(v)&=f(v,u)\ F_{1,-2}(v)F_0(u)
\end{split}
\end{equation}
are equivalent to the Serre relation 
\begin{equation}\label{B24}
\mathop{\rm Sym}\limits_{v_1,v_2,v_3}
\Big[F_0(v_1),\big[F_0(v_2),[F_0(v_3),F_1(u)]\big]\Big]=0\,.
\end{equation}

Finally, the commutation relations which creates 
 the composed current of the highest level in Yangian double $\aDY{\fo_5}$
\begin{equation}\label{B25}
F_{2,-2}(u)=F_1(u-c/2)F_0(c+c/2)F_0(u)F_1(u)=F_{-2}(u)F_{-1}(u)F_{0}(u)F_{1}(u)\,.
\end{equation}
are 
\begin{equation*}\label{B50a}
\begin{split}
f(v,u-c/2)F_{2,1}(v)F_{1,-2}(u)&=f(v,u)F_{1,-2}(u)F_{2,1}(v)+c\delta(u,v)F_{2,-2}(v)\,,\\
f(u,v-c/2)F_{2,-1}(u)F_{-1,-2}(v)&
=f(u,v)F_{-1,-2}(v)F_{2,-1}(u)+c\delta(u,v)F_{2,-2}(u)\,,\\
f(u,v-c/2)F_{2,0}(u)F_{0,-2}(v)&= 
f(u,v+c/2)f(v,u)F_{0,-2}(v)F_{2,0}(u)-c\delta(u,v)F_{2,-2}(v)\,.
\end{split}
\end{equation*}

\subsection{Additional zeros in the product of the simple roots currents}

It was shown in the section~\ref{an-pro-BV} that the Serre relations lead 
to the apparition of additional zeros in the product of the currents. 
Let us demonstrate how these additional zeros imply the commutation
relations of the simple and composed roots currents on the example of the 
Yangian double $\aDY{\fo_5}$.

Consider for example the  first commutation relation in \r{B16} 
which is equivalent to the 
Serre relation \r{B13}. The left hand side of this commutation relation tells us that 
the product $F_1(v)F_0(u)F_1(u)$ has no zeros nor poles. But 
individual commutation relations between the simple root currents yields 
two simple poles at the points $v=u$, $v=u-c$ and one simple 
zero at the point $v=u$, leading to a potential pole at $v=u-c$ for the product $F_1(v)F_0(u)F_1(u)$. 
Since this product has 
no zeros nor poles, it implies that it should have {\sl additional} (i.e. not coming from the commutation relations)
zero at the point 
$v=u-c$,  which is a consequence of the Serre relation \r{B13}. 

The r.h.s. of the commutation relation \r{B16} yields that 
the product $F_0(u)F_1(u)F_1(v)$ should 
have a simple pole at the point $u=v-c$ and a simple zero at $u=v$. 
Indeed, the pole at $u=v-c$ follows from the product $F_1(u)F_1(v)$,
but the product $F_0(u)F_1(v)$ produces the pole at the point $u=v+c$
which is absent in the r.h.s. of the commutation relation \r{B16}. It means 
that the product $F_0(u)F_1(u)F_1(v)$ should have an additional zero 
when $u=v+c$ as a consequence of the Serre relation \r{B13}. 
Concluding, we may say that, as a consequence of the Serre relations \r{B13},
the products $F_1(v)F_0(u)F_1(u)$ and $F_0(u)F_1(u)F_1(v)$  have 
an additional zero at the point $u=v+c$. This is 
indeed true and was proved above using results of the paper \cite{E}.

In the l.h.s. of the first commutation relation in \r{B35} one has the product 
 \begin{equation*}
 F_0(v+c/2)F_0(v)F_1(v)F_0(u)
 \end{equation*} 
 which 
has simple poles at the points $u=v+c$, $u=v+c/2$ and $u=v$. On the other 
hand this product has simple zeros at the points  $u=v+c/2$ and $u=v$.
Since according to \r{B35} this product has no zeros nor poles, it means 
that it has an additional zero at the point $u=v+c$. Now in the r.h.s. 
of the commutation relation \r{B35} one has the product 
$F_0(u)F_0(v+c/2)F_0(v)F_1(v)$ which has simple
poles at the points $u=v$, $u=v-c/2$, $u=v+c$ and simple 
zeros at the points $u=v$ and $u=v+c/2$ as result of the commutation 
relations between simple roots currents. But the r.h.s. of \r{B35} tell us that 
the product $F_0(u)F_0(v+c/2)F_0(v)F_1(v)$ should have only simple 
pole at the point $u=v-c/2$ and a simple zero at the point $u=v+c/2$. 
In order to get this, one has to request that the product 
$F_0(u)F_0(v+c/2)F_0(v)F_1(v)$ has an additional zero at the point $u=v+c$
in order to compensate the pole in the same point.

\section{Discussion}\label{disc}

In the present paper, we established the connexion between the Serre relations and the commutation relations of composed currents, within 
the category of highest weight representations. The key feature is the use of a normal order between half-currents, compatible with the notion of highest weight vectors. Clearly, one could proceed in the same way in the category of lowest weight representations. In that case the normal order will be different, so as to be compatible with the notion of lowest weight vectors.

Within the framework of the algebraic Bethe ansatz \cite{TF79,S22}, Bethe vectors in a generic $\fg$-invariant integrable model are constructed from the entries of the monodromy matrix, which satisfy the same commutation relations as the $T$-operator of the Yangian associated with a finite-dimensional representation of $Y(\fg)$. Although the set-up is clear, the explicit construction of Bethe vectors appears to be technically involved, so that one seeks for new ways to tackle the problem. 

For quantum affine algebras, viewed as quantum doubles of their Borel subalgebras, it was shown in \cite{EKhP,KhP} that the construction becomes simpler in the context of quantum doubles. It led to the development of the projection method. The terminology reflects the fact that, within this approach, off-shell Bethe vectors are obtained as projections of products of currents onto intersections of Borel subalgebras of different types in the quantum affine algebra. Subsequently, the projection method was extended in \cite{HLPRS17} to integrable models associated with the supersymmetric Yangian doubles $\mathcal{D}Y(\fg\fl(m|n))$.

By analogy, one may realize off-shell Bethe vectors in a generic $\fg$-invariant integrable model as projections of products of currents in the Yangian double $\aDY{\fg}$. To establish that the resulting vectors indeed satisfy the defining properties of off-shell Bethe vectors formulated in \cite{LPR-RR}, it is necessary to investigate the structure of products of currents and to understand the role played by the Serre relations in these products.

This constitutes the main objective of the present paper. The results obtained here provide the necessary groundwork for proving that projections of products of currents in the Yangian double $\aDY{\fg}$ yield a valid description of off-shell Bethe vectors in generic $\fg$-invariant integrable models. This application will be developed in the forthcoming work \cite{LPR-BV-CU}.

\section*{Acknowledgement}
S.P. acknowledges support from the PAUSE Programme and the hospitality of LAPTh, where this work was conducted.
The work of A.~L. was supported by the Beijing Natural Science Foundation (IS24006) and Beijing
Talent Program.

\appendix

\section{Composed currents as well-defined objects in $\oY_F$
\label{AppA}}

In the section~\ref{cc-property}
we  consider  the composed  currents in the analytical framework 
following the ideas developed in the paper \cite{DKh}. 
Namely, we demonstrate that the product of the 
currents $F_a(u)\,F_b(v)$ 
being restricted to the category of the 
highest weight representations
has zeros (poles) at the points 
where function $\cre_{a,b}(u,v)$ have poles (zeros). 
To understand this approach, it is sufficient to consider the 
case of the Yangian double $\aDY{\fs\fl_3}$. The general case for
arbitrary Yangian double $\aDY{\fg}$ can be treated 
similarly. 

There is only one 
 commutation relation \r{FFalt} for the different simple root currents
 \begin{equation}\label{Ac1}
(v-u+c)\ F_1(u)F_2(v)=(v-u)\ F_2(v)F_1(u)\,.
\end{equation}
 According to \cite{DKh,E} 
 the product   
$F_2(v)F_1(u)$ is an 
analytical function in the domain $|v|\gg |u|$ with a simple pole in the point $v=u$. 
On the other hand  
the product  $F_1(u)F_2(v)$ is an analytical function in the domain 
$|u|\gg |v+c|$ with a simple pole at $u=v+c$. 
Then  the commutation relation \r{Ac1} between simple root currents 
can be interpreted  as a method of analytical 
continuation of the 
products of the currents from one domain to another.

In \cite{DKh} these analytical properties were rewritten in the form of  
commutation 
relations. In order to analytically continue the product $F_2(v)F_1(u)$ 
from the domain $v\gg u$ to the domain $v\ll u$ one can rewrite the commutation 
relation \r{Ac1} in the form 
\begin{equation}\label{Ac2}
F_2(v)F_1(u)=f(v,u)F_1(u)F_2(v)+c\delta(u,v)F_{3,1}(u), 
\end{equation}
where the rational function $f(v,u)$ in the l.h.s. of \r{Ac2} should be understood as 
a series with respect to the nonnegative powers of $v/u$. 
The element 
$F_{3,1}(u)$ is the composed current which is equal to the residue 
of the product $F_2(v)F_1(u)$ at the point $v=u$
\begin{equation}\label{Ac3}
\begin{split}
&c\ F_{3,1}(u)=\mathop{\rm res}\limits_{v=u}F_2(v)F_1(u)=
(v-u)F_2(v)F_1(u)\Big|_{v=u}=\\
&\qquad=(v-u+c)F_1(u)F_2(v)\Big|_{v=u}=c\ F_1(u)F_2(u)\,.
\end{split}
\end{equation}

Let us demonstrate that 
the composed current given by the definition \r{Ac3}  
is a well-defined object in the 
completion  of the subalgebra $\oY_F$ compatible with the 
category of the highest weight  representations. 
This can be achieved by the normal ordering procedure in the 
product $F_2(v)\,F_1(u)$ described below. 
In the analytical language the definition \r{Ac3} means  \begin{equation}\label{an1}
\begin{split}
c\,F_{3,1}(v)&=\mathop{\rm res}\limits_{z=v}\,F_2(z)\,F_1(v)=-
\mathop{\rm res}\limits_{z=v}\,F_2(v)\,F_1(z)\\
&=\oint_v dz\, F_2(z)\,F_1(v)= - \oint_v dz\,F_2(v)\,F_1(z)
\end{split}
\end{equation}
or 
\begin{equation}\label{an2}
\begin{split}
c\,F_{3,1}(v)&=\oint_{C_\infty} 
dz\ F_2(z)\,F_1(v)- \oint_{C_0} dz\ 
\frac{1-(z+c)/v}{1-z/v}\ F_1(v)\,F_2(z)\,,\\
c\,F_{3,1}(v)&=\oint_{C_0} dz\ F_2(v)\,F_1(z)- \oint_{C_\infty} 
dz\ \frac{1-(v+c)/z}{1-v/z}\ F_1(z)\,F_2(v)\,,
\end{split}
\end{equation}
where $C_0$ is a contour around zero such that the point $z=v$ is outside 
the contour while $C_\infty$ is a contour close to infinity, which includes 
zero and the point $z=v$. 
One gets from \r{an2} 
\begin{equation}\label{an3}
\begin{split}
c\,F_{3,1}(v)&= [F_2[0],F_1(v)]+ c\,F_1(v)F^{+}_{2}(v)\,,\\
c\,F_{3,1}(v)&= [F_2(v),F_1[0]]- c\,F^{-}_{1}(v)\,F_2(v)\,,
\end{split}
\end{equation}
where for $i=1,2$
\begin{equation}\label{an4}
\begin{split}
F^{+}_{i}(v)&=\oint_{C_0} \frac{dz}{v}\,\frac{1}{1-z/v}\,F_i(z)=
\sum_{m\geq0}F_i[m]\,(v/c)^{-m-1}\,,\\
F^{-}_{i}(v)&=-\ \oint_{C_\infty} \frac{dz}{z}\, \frac{1}{1-v/z}\,F_i(z)
=-\sum_{m<0}F_i[m]\,(v/c)^{-m-1}
\end{split}
\end{equation}
are half-currents  and 
\begin{equation}\label{an5}
F_i(v)=F_{i}(v)^{+}-F_{i}(v)^{-}\,.
\end{equation}

By definition of the completed subalgebra $\oY_\aF$ of $\aDY{\fs\fl_3}$
each element of these subalgebra should be a linear span of 
monomials such that the half-currents $F_i^{-}(u)$ are on the 
left of the half-currents $F_i^{+}(v)$. It is seen from the formulas 
\r{an3} that the composed current $F_{3,1}(v)\in\oY_\aF$ since 
obviously 
\begin{equation*}
[F_2[0],F_1(v)]^{\pm}=[F_2[0],F^{\pm}_1(v)]\quad\mbox{and}\quad 
[F_2(v),F_1[0]]^{\pm}=[F^{\pm}_2(v),F_1[0]]\,.
\end{equation*}

In order to reveal the analytical properties of the 
product  $F_1(u)\,F_2(v)$ in the region $|u|\gg |v|$
we rewrite this product as sum of four terms  according to \r{an5} 
\begin{equation}\label{an8}
F_1(u)\,F_2(v)=F^{+}_{1}(u)\,F^{+}_{2}(v)+F^{-}_{1}(u)\,F^{-}_{2}(v)-
F^{-}_{1}(u)\,F^{+}_{2}(v)-F^{+}_{1}(u)\,F^{-}_{2}(v)\,.
\end{equation}
The first three terms are already normal ordered while the last one 
$F^{+}_{1}(u)\,F^{-}_{2}(v)$ can be ordered using the commutation 
relations between currents \r{Ac1} $(w-z+c)\,F_1(z)\,F_2(w)=(w-z)\,F_2(w)\,F_1(z)$.
Applying to this commutation relation the integral transform 
\begin{equation*}
-{\oint_{C_0}}\  \frac{dz}{u(1-z/u)}\ {\oint_{C_\infty}} \frac{dw}{w(1-v/w)}
\end{equation*}
and using the trivial identity 
\begin{equation*}
\frac{w-z+c}{(u-z)(w-v)}=\frac{v-u+c}{(u-z)(w-v)}+\frac{1}{u-z}+\frac{1}{w-v}
\end{equation*}
one gets using \r{an4}
\begin{equation}\label{an9}
(v-u+c)\,F^{+}_{1}(u)\,F^{-}_{2}(v)=
(v-u)\,F^{-}_{2}(v)\,F^{+}_{1}(u)+[F^{-}_{2}(v),F_1[0]]+
[F^{+}_{1}(u),F_2[0]]\,.
\end{equation}
This equality together with the formulas \r{an3} 
allows to present the product 
 $F_1(u)\,F_2(v)$ in the normal ordered form 
\begin{equation}\label{an10}
\begin{split}
F_1(u)\,F_2(v)&=F^{+}_{1}(u)\,F^{+}_{2}(v)+F^{-}_{1}(u)\,F^{-}_{2}(v)-
F^{-}_{1}(u)\,F^{+}_{2}(v)-\\
&\quad-\frac{v-u}{v-u+c}\,F^{-}_{2}(v)\,F^{+}_{1}(u)+\\
&\quad+\frac{c}{v-u+c}\Big(F^{+}_{3,1}(u)-F^{+}_{1}(u)\,F^{+}_{2}(u)+
\Big(F^{+}_{1}(u)\,F^{+}_{2}(u)\Big)^{+}\Big)-\\
&\quad-\frac{c}{v-u+c}\Big(F^{-}_{3,1}(v)+F^{-}_{1}(v)\,F^{-}_{2}(v)-
\Big(F^{-}_{1}(v)\,F^{+}_{2}(v)\Big)^{-}\Big).
\end{split}
\end{equation}
It explicitly shows the presence of the first order pole at $u=v+c$ 
in this product. Note that at $u=v$ the equality \r{an10} 
becomes a tautological identity. Repeating the same calculation and 
normal ordering, the product of the currents $F_2(v)\,F_1(u)$,
viewed as an element of the completed subalgebra $\oY_\aF$, 
will have a simple pole at $v=u$. 

One can repeat the same normal ordering calculations 
of the product of the same root currents using the commutation relations 
\begin{equation*}
f(u,v)\,F_1(u)\,F_1(v)=f(v,u)\,F_1(v)\,F_1(u)
\end{equation*}
to find that the product  $F_1(u)\,F_1(v)$ 
being rewritten in the normal ordered form 
reveals explicitly one zero when $u=v$ and one pole when $u=v-c$
\begin{equation*}
\begin{split}
&F_1(u)\,F_1(v)=\frac{u-v}{u-v+c}\Big(F^{+}_{1}(u)\,F^{+}_{1}(v)+
F^{-}_{1}(u)\,F^{-}_{1}(v)-F^{-}_{1}(u)\,F^{+}_{1}(v)+\\
&\qquad-F^{-}_{1}(v)\,F^{+}_{1}(u)+F^{-}_{1}(u,v)\,F^{-}_{1}(v)-
F^{+}_{1}(u)\,F^{+}_{1}(u,v)-F^{-,+}_{1}(u,v)\Big)\,,
\end{split}
\end{equation*}
where the combinations of half-currents 
\begin{equation*}
F^{\pm}_{1}(u,v)=\frac{F^{\pm}_{1}(u)-F^{\pm}_{1}(v)}{u-v}\,,\qquad
F^{-,+}_{1}(u,v)=\frac{F^{-}_{1}(u)\,F^{+}_{1}(v)-
F^{-}_{1}(v)\,F^{+}_{1}(u)}{u-v}
\end{equation*}
are regular at $u=v$ and normal ordered.

The same approach should be applied to the commutation relations
of the currents in the Yangian doubles $\aDY{\fg}$ for any $\ggo$. In order to equate any 
two expressions which contain  products of  currents one has 
to fix first a category of the highest weight representations where these products are considered and then 
normal order these products as it was explained above.

\section{Composed current commutation relation for $\aDY{\fg_2}$}\label{AppB}

In this appendix we will consider one particular commutation relation 
\r{sm-cr-g2} between simple root current $F_2(v)$ and the composed 
currents $F_{3,-1}(u)=-F_2(u-2\,c)\,F_2(u-3\,c)\,F_1(u)\,F_2(u)$. 
This composed current appear in 
the Yangian double $\aDY{\fg_2}$ which we will consider elsewhere.  
We will use the notation $  \ftr(u,v)$ for the rational function 
\begin{equation}\label{ftr}
    \ftr(u,v)=\frac{u-v+3\,c}{u-v}\,.
\end{equation}

To show that the Serre relation \r{Ser4} in the Yangian double $\aDY{\fg_2}$
implies the commutation relation \r{sm-cr-g2} we will define the composed 
currents  $F_{3,1}(u)=F_1(u)\,F_2(u)$ and 
$F_{3,0}(u)=-F_2(u-3\,c)\,F_1(u)\,F_2(u)$ inductively starting 
from the commutation relation 
\begin{equation*}
(v-u+3\,c)\,F_1(u)\,F_2(v)=(v-u)\,F_2(v)\,F_1(u)\,.
\end{equation*}
As it was explained in the previous appendix this relation can be rewritten in the form 
\begin{equation}\label{AB3}
F_2(v)\,F_1(u)=\frac{1}{\ftr(u,v+3\,c)}\,F_1(u)\,F_2(v)+3\,c\,\delta(u,v)\,F_{3,1}(u)\,,
\end{equation}
where the coefficient $\ftr(u,v+3\,c)^{-1}$ is a series $1-3\,c\,\sum_{\ell\geq0}v^\ell\,u^{-\ell-1}$.
This defines the composed current $F_{3,1}(u)=F_1(u)\,F_2(u)$ which is a well-defined 
object in the completed subalgebra $\oY_F$ of the Yangian double 
$\aDY{\fg_2}$ restricted to the category of the highest weight representations. 
Due to \r{FFalt}
the composed current $F_{3,1}(u)$ has the following commutation relation with the simple root 
current $F_2(v)$
\begin{equation}\label{AB6c}
(v-u+3\,c)\,f(u,v)\,F_{3,1}(u)\,F_2(v)=(v-u)\,f(v,u)\,F_2(v)\,F_{3,1}(u)\,.
\end{equation}
It can be rewritten in the analytical form  
\begin{equation}\label{AB6a}
f(u,v)\,F_{3,1}(u)\,F_2(v)=\frac{f(v,u)}{\ftr(v,u)}\,F_2(v)\,F_{3,1}(u)+2\,c\,\delta(u,v+3\,c)\,F_{3,0}(u)\,,
\end{equation}
which defines the composed currents $F_{3,0}(u)=-F_2(u-3\,c)\,F_1(u)\,F_2(u)$. 
The commutation relation of this composed current again with the simple root 
current $F_2(v)$ is 
\begin{equation*}
(v-u+3\,c)\,f(u-3\,c,v)\,f(u,v)\,F_{3,0}(u)\,F_2(v)=(v-u)\,f(v,u)\,f(v,u-3\,c)\,F_2(v)\,F_{3,0}(u)
\end{equation*}
which is equivalent to 
\begin{equation}\label{AB6d}
(v-u+2\,c)\,f(u,v)\,F_{3,0}(u)\,F_2(v)=(v-u+c)\,f(v,u-3\,c)\,F_2(v)\,F_{3,0}(u)\,.
\end{equation}
The latter equality in its analytical form 
\begin{equation}\label{AB6b}
f(u,v)\,F_{3,0}(u)\,F_2(v)=\frac{f(v,u-3\,c)}{f(v,u-c)}
\,F_2(v)\,F_{3,0}(u)+2\,c\,\delta(u,v+2\,c)\,F_{3,-1}(u)
\end{equation}
defines the composed current $F_{3,-1}(u)=-F_2(u-2\,c)\,F_2(u-3\,c)\,F_1(u)\,F_2(u)$. 
Note that the commutation relations \r{AB6c} and \r{AB6d} signify that 
the normalized 
products $f(u,v)\,F_{3,1}(u)\,F_2(v)$ and $f(u,v)\,F_{3,0}(u)\,F_2(v)$ being considered as 
an operator valued functions of the complex parameter $u-v$ have only  simple 
poles at $u-v=3\,c$ and $u-v=2\,c$ respectively. On the other hand the unnormalized 
products of the currents $F_{3,1}(u)\,F_2(v)$ and $F_{3,0}(u)\,F_2(v)$ have an
additional pole at $u-v=-c$ and a zero at $u=v$.

We rewrite  the Serre relation \r{Ser4}  using symmetrizations 
of the products of simple root currents 
\begin{equation*}
\cX_b(u;\bv)=F_2(v_1)\cdots F_2(v_{b})\,F_1(u)\,F_2(v_{b+1})\cdots F_2(v_4),\qquad
b=0,1,2,3,4
\end{equation*}
as follows
\begin{equation}\label{AB4}
\sum_{b=0}^4
(-1)^b\,\bfC^4_b\,\mathop{\rm Sym}\limits_{\bv}\sk{\cX_b(u;\bv)}=0 \,,
\end{equation}
where $\bv=\{v_1,v_2,v_3,v_4\}$ is a set of parameters and 
$\bfC^4_b=\frac{4!}{b!(4-b)!}$ is a binomial coefficient.

The commutation relations \r{AB3} allows to present each product  in the Serre relation \r{AB4} in the form 
\begin{equation}\label{AB7}
\cX_b(u;\bv)=3\,c\sum_{s=1}^b\delta(u,v_s)\,\cY_s(u;\bv_s)\prod_{a=s+1}^b\ftr(v_a,u)+
\cX_0(u;\bv)\prod_{s=1}^b\ftr(v_s,u)\,,
\end{equation}
where we denoted 
\begin{equation}\label{AB8}
\cY_s(u;\bv_s)=F_2(v_1)\cdots F_2(v_{s-1})\,F_{3,1}(u)\,
F_2(v_{s+1})\cdots F_2(v_4),\qquad s=1,2,3,4
\end{equation}
and assume, as usual, that $\sum_a^b(\cdot)=0$ and $\prod_a^b(\cdot)=1$ 
if $a<b$. In \r{AB7} and \r{AB8} the set $\bv_s$ means the set $\bar v$
 without the parameter $v_s$
 \begin{equation*}
 \bv_s=\bv\setminus\{v_s\}\,.
 \end{equation*}
Substituting expressions \r{AB7} in the Serre relation \r{AB4} we conclude that 
it is equivalent to 
\begin{equation}\label{AB9}
\mathop{\rm Sym}\limits_{\bv}\sk{3\,c
\sum_{s=1}^4\delta(u,v_s)\,\cY_s(u;\bv_s)\,\Ccc_s(u;\bv)+\cX_0(u;\bv)\,
\Ccc_0(u;\bv)}=0\,,
\end{equation}
where  
\begin{equation}\label{AB10}
\Ccc_s(u;\bv)=\sum_{b=s}^4(-1)^b\,\bfC^4_b
\prod_{a=s+1}^b\ftr(v_a,u),\quad s=0,1,2,3,4\,.
\end{equation}

Let us prove first that the second term in \r{AB9} is vanishing due to the 
commutation relation \r{FF}. Indeed,  using definition \r{ftr} one gets 
\begin{equation}\label{AB11}
\begin{split}
&\Ccc_0(u;\bv)=\sum_{b=0}^4
(-1)^b\,\bfC^4_b\prod_{s=1}^b\ftr(v_s,u)=c\,\prod_{s=1}^4
g(v_s,u)\times\\
&\times\Big(h(v_1,v_2)\,\mathcal{P}_{3,4}(u;v_3,v_4)-
2\,h(v_2,v_3)\,\mathcal{P}_{1,4}(u;v_1,v_4)+
h(v_3,v_4)\,\mathcal{P}_{1,2}(u;v_1,v_2)\Big)\,,
\end{split}
\end{equation}
where $\mathcal{P}_{a,b}(u;v_a,v_b)$ are quadratic polynomials 
in $u$, $v_a$, $v_b$, $c$ 
\begin{equation*}
\begin{split}
\mathcal{P}_{3,4}(u;v_3,v_4)&=u^2+u(v_4-3v_3)+v_3v_4+5cv_3-7cv_4+7c^2\,,\\
\mathcal{P}_{1,4}(u;v_1,v_4)&=u^2+u(v_1+v_4+3c)+v_1v_4-cv_1+4cv_4-4c^2\,,\\
\mathcal{P}_{1,2}(u;v_1,v_2)&=u^2+u(v_1-3v_2-6c)+v_1v_2+8cv_2+12c^2\,.
\end{split}
\end{equation*}

Due to the commutation relation in  \r{FF} 
which can be written in the form 
\begin{equation}\label{F2F2}
h(v_a,v_{a+1})\,F_2(v_a)\,F_2(v_{a+1})+h(v_{a+1},v_{a})\,F_2(v_{a+1})\,F_2(v_{a})=0
\end{equation}
we obtain that 
\begin{equation*}
\mathop{\rm Sym}\limits_{v_a,\ v_{a+1}}
\Big(F_1(u)\,F_2(v_1)\,F_2(v_2)\,F_2(v_3)\,F_2(v_4)\, h(v_a,v_{a+1})\Big)=0
\end{equation*}
for any $a=1,2,3$, 
hence proving that the second term in the Serre relation \r{AB9} 
is vanishing.  The Serre relation itself becomes 
\begin{equation}\label{AB12}
\mathop{\rm Sym}\limits_{\bv}
\sk{\delta(u,v_4)
\sum_{s=1}^4 \overline{\cY}_s(u;\bv_4)\ 
\overline{\Ccc}_s(u;\bv_4)}=0\,,
\end{equation}
where products of the currents $\overline{\cY}_s(u;\bv_4)$ are 
\begin{equation}\label{AB13}
\overline{\cY}_s(u;\bv_4)=F_2(v_1)\cdots F_2(v_{s-1})\,F_{3,1}(u)\,
F_2(v_{s})\cdots F_2(v_3)
\end{equation}
and 
\begin{equation}\label{AB14}
\overline{\Ccc}_s(u;\bv_4)=
\overline{\Ccc}_s(u;v_1,v_2,v_3)= \sum_{b=s}^4(-1)^b\,\bfC^4_b\prod_{a=s}^{b-1}
\ftr(v_a,u),\qquad s=1,2,3,4\,.
\end{equation}
Note that according to this definition $\overline{\Ccc}_4(u;\bv_4)=1$ and 
for $s=1,2,3$ the functions $\overline{\Ccc}_s(u;\bv_4)$ depends only 
on
$v_s-u,\ldots,v_3-u$.

Using the linear independence of the $\delta$-functions we conclude that 
the Serre relation \r{Ser4} implies the following relation between composed 
current $\aF_{3,1}(u)$ and the simple root currents $F_2(v_a)$ for 
$a=1,2,3$ (recall that the set 
$\bv_4=\{v_1,v_2,v_3\}$)
\begin{equation}\label{AB12a}
\mathop{\rm Sym}\limits_{\bv_4}
\sk{\sum_{s=1}^4 \overline{\cY}_s(u;\bv_4)\ 
\overline{\Ccc}_s(u;\bv_4)}=0\,.
\end{equation}

The l.h.s. of this equality is a linear combination of the product of the currents 
\r{AB13}. As it was explained at the beginning of this appendix 
these products have poles at $u=v_s+3\,c$,
$u=v_s-c$ and zeros at $u=v_s$
for $s=1,2,3$. In order to compensate the poles 
at $u=v_s-c$ and zeros at $u=v_s$
 we multiply relation \r{AB12a}  by the symmetric product 
$\prod_{s=1}^3f(u,v_s)$.
  Then the  equality \r{AB12a}
takes the form 
\begin{equation}\label{AB12b}
\mathop{\rm Sym}\limits_{\bv_4}
\sk{2\,c\,\sum_{s=1}^3\delta(u,v_s+3\,c)\mathcal{W}_s(u;\bv_4)\,\mathcal{R}_s(u;\bv_4)
+\overline{\cY}_4(u;\bv_4)\,\mathcal{R}_4(u;\bv_4)}=0\,,
\end{equation}
where we used the commutation relations  \r{AB6a}. In \r{AB12b} 
 $\mathcal{W}_s(u;\bv_4)$ are normalized  products of the currents 
\begin{equation*}
\mathcal{W}_s(u;\bv_4)=\prod_{a=s+1}^3f(u,v_a)\,F_2(v_1)\cdots F_2(v_{s-1})\,
F_{3,0}(u)\,F_2(v_{s+1})\cdots F_2(v_3)
\end{equation*}
and the rational functions $\mathcal{R}_s(u;\bv)$ are 
\begin{equation}\label{R-def2}
\mathcal{R}_s(u;\bv_4)=\sum_{b=1}^s\overline{\Ccc}_b(u;\bv)
\prod_{a=1}^{b-1}f(u,v_a)
\prod_{a=b}^{s-1}
\frac{f(v_a,u)}{\ftr(v_a,u)}\,.
\end{equation}

One can verify now that the rational function $\mathcal{R}_4(u;\bv_4)$
can be presented in the form 
\begin{equation*}
\mathcal{R}_4(u;\bv_4)=\prod_{s=1}^3g(u,v_s)\,g(v_s,u-3\,c)
\sum_{a=1}^2h(v_a,v_{a+1})\,\overline{\mathcal{P}}_a(u;\bv_4)\,,
\end{equation*}
where the non-factorizable polynomials  $\overline{\mathcal{P}}_a(u;\bv_4)$
are symmetric with respect to permutation of the parameters $v_a$ and $v_{a+1}$
for $a=1,2$. Then one has to consider the symmetrization over $\bar v_4$ of the product 
$\mathcal{W}_s(u;\bv_4)\sum_{a=1}^2h(v_a,v_{a+1})\,\overline{\mathcal{P}}_a(u;\bv_4)$.
Due to the symmetry property of the polynomials and to the commutation relations \r{F2F2}, this term vanishes.

Thus, we are led to
\begin{equation}\label{AB12c}
\mathop{\rm Sym}\limits_{\bv_4}
\sk{\delta(u,v_3+3\,c)\sum_{s=1}^3\overline{\mathcal{W}}_s(u;v_1,v_2)\,
\overline{\mathcal{R}}_s(u;v_1,v_2)}=0\,,
\end{equation}
where $\overline{\mathcal{W}}_s(u;v_1,v_2)$ are the normalized products of the currents 
\begin{equation*}
\overline{\mathcal{W}}_s(u;v_1,v_2)=\prod_{a=s}^2f(u,v_a)\,F_2(v_1)\cdots F_2(v_{s-1})\,
F_{3,0}(u)\,F_2(v_{s})\cdots F_2(v_2)
\end{equation*}
and the rational functions $\overline{\mathcal{R}}_s(u;v_1,v_2)$ are 
\begin{equation}\label{R-def1}
\begin{split}
\overline{\mathcal{R}}_3(u;v_1,v_2)&=\mathcal{R}_3(u;v_1,v_2,u-3\,c)\,,\\
\overline{\mathcal{R}}_2(u;v_1,v_2)&=\mathcal{R}_2(u;v_1,u-3\,c,v_2)=8\,\frac{v_1-u+4\,c}{v_1-u+3\,c}\,,\\
\overline{\mathcal{R}}_1(u;v_1,v_2)&=\mathcal{R}_1(u;u-3\,c,v_1,v_2)=-4\,,
\end{split}
\end{equation}
where $\overline{\mathcal{R}}_3(u;v_1,v_2)$ is a rational function that we refrain to write explicitly to lighten the presentation.
 Using again the linear independence of the $\delta$-functions we conclude that the 
relation \r{AB12c} is equivalent to
\begin{equation}\label{AB12d}
\mathop{\rm Sym}\limits_{v_1,v_2}
\sk{\sum_{s=1}^3\overline{\mathcal{W}}_s(u;v_1,v_2)\,
\overline{\mathcal{R}}_s(u;v_1,v_2)}=0\,.
\end{equation}

Denoting 
\begin{equation*}
\ff(v,u)=\frac{f(v,u-3\,c)}{f(v,u-c)}=\frac{(v-u+4\,c)(v-u+c)}{(v-u+3\,c)(v-u+2\,c)}
\end{equation*}
and using the commutation relation \r{AB6b} one can rewrite equality 
\r{AB12d} as follows
\begin{equation}\label{eq:ser-aux}
\begin{split}
&\mathop{\rm Sym}\limits_{v_1,v_2}
\Big(\overline{\mathcal{W}}_3(u;v_1,v_2)\,
\Big(\overline{\mathcal{R}}_3+
\overline{\mathcal{R}}_2\,\ff(v_2,u)+
\overline{\mathcal{R}}_1\,\ff(v_2,u)\,\ff(v_1,u)
\Big)\Big)+\\
&\qquad\qquad+2\,c\mathop{\rm Sym}\limits_{v_1,v_2}\Big(\delta(u,v_2+2\,c)
\Big(\overline{\mathcal{R}}_2 +
\overline{\mathcal{R}}_1 \,\ff(v_1,u)\Big)\,F_2(v_1)\,F_{3,-1}(u)+\\
&\qquad\qquad\qquad\qquad+\delta(u,v_1+2\,c)\,\overline{\mathcal{R}}_1 \,f(u,v_2)\,
F_{3,-1}(u)\,F_2(v_2)\Big)=0\,.
\end{split}
\end{equation}
Using the explicit form of the rational functions  \r{R-def1} and \r{R-def2} 
one can verify that the coefficient of $\overline{\mathcal{W}}_3(u;v_1,v_2)$ in the  
relation \r{eq:ser-aux} is equal to $h(v_1,v_2)\,\mathcal{Q}(u;v_1,v_2)$ where 
$\mathcal{Q}(u;v_1,v_2)$ is a rational 
function, symmetric in the variables $v_1$ and $v_2$. Then, the first line in \r{eq:ser-aux} 
vanishes due to the commutation relations \r{F2F2}. 

Now, taking into account the relation
\begin{equation*}
\overline{\mathcal{R}}_2(u;v_1,v_2) +
\overline{\mathcal{R}}_1(u;v_1,v_2) \,\ff(v_1,u)=
4\,\frac{v_1-u+4\,c}{v_1-u+2\,c}=4\,
\frac{\ftr(v_1,u-c)}{f(v1,u-c)}
\end{equation*}
and the linear independence of the $\delta$-functions 
we conclude that the relation 
\r{AB12d} implies the commutation relation
\begin{equation*}\label{AB15}
f(u,v)\,F_{3,-1}(u)\,F_2(v) =
\frac{\ftr(v,u-c)}{f(v,u-c)}\,
F_2(v)\,F_{3,-1}(u)\,.
\end{equation*}
This finishes the proof of proposition~\ref{Enr5}.\qed

\section{Identities for distributions and Serre relations}\label{AppC}

In this appendix, we provide the sketch of an alternative proof of the proposition~\ref{Enr1}--\ref{Enr4}, closer to the original proof done in 
\cite{E} for quantum affine algebras.
The proof given in section~\ref{sect31}
 uses the notion of  composed currents and is more informative since,  
 in addition to the vanishing of the elements \r{Ser-ele}, it also provides 
the structure of these elements in the vicinity of the Serre strata. 

\begin{prop}\label{En-iden}
For any two nodes $a$ and $b$  of the Dynkin diagram for 
the Lie algebra $\ggo$ such that $a\not=b$, $\fb_{a,b}\not=0$, and $\fg\not=\fg_2$
there are two equivalent  $\delta$-function identities
(for shortness we denoted $\fm_{a,b}\equiv \fm=2,3$)
\begin{itemize}
\item for $a<b$
\begin{equation}\label{En2}
\begin{split}
&\sum_{\sigma\in S_\fm}\sum_{s=0}^{\fm}\Big[{\fm\atop s}\Big]
\prod_{\ell_1<\ell_2}^{\fm}f_a(u_{\sigma(\ell_1)},u_{\sigma(\ell_2)})^{-1}
\prod_{\ell_1=1}^s h_{a,b}(u_{\sigma(\ell_1)},v)^{-1}\prod_{\ell_2=s+1}^{\fm}
g(v,u_{\sigma(\ell_2)})=\\
&=c^{\fm}\,
\sum_{\sigma\in S_\fm} \delta(v,u_{\sigma(1)})\,
\delta(u_{\sigma(1)},u_{\sigma(2)}-c\,\fb_{a,b} )\cdots
\delta(u_{\sigma(\fm-1)} ,u_{\sigma(\fm)}-c\,\fb_{a,b} )\,,
\end{split}
\end{equation}
\item for $a>b$
\begin{equation}\label{En3}
\begin{split}
&\sum_{\sigma\in S_\fm}\sum_{s=0}^{\fm}\Big[{\fm\atop s}\Big]
\prod_{\ell_1<\ell_2}^{\fm}f_a(u_{\sigma(\ell_1)},u_{\sigma(\ell_2)})^{-1}
\prod_{\ell_1=1}^s g(u_{\sigma(\ell_1)} ,v)\prod_{\ell_2=s+1}^{\fm}
h_{a,b}(v,u_{\sigma(\ell_2)})^{-1}=\\
&=c^{\fm}\,
\sum_{\sigma\in S_\fm} \delta(v,u_{\sigma(1)})\,
\delta(u_{\sigma(1)} ,u_{\sigma(2)}+c\,\fb_{a,b} )\cdots
\delta(u_{\sigma(\fm-1)} ,u_{\sigma(\fm)}+c\,\fb_{a,b} )\,.
\end{split}
\end{equation}
\end{itemize}
\end{prop}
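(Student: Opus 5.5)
The identities \r{En2} and \r{En3} are statements about formal distributions: each rational function on the left is to be expanded in the direction dictated by the ordering of the currents in the corresponding monomial of the Serre relation. We read the coefficients $\Big[{\fm\atop s}\Big]$ as the signed binomials $(-1)^s\bfC^\fm_s$ coming from expanding the nested commutator \r{SR-gF}. The $s$-th term corresponds to the monomial $\aF_a(u_1)\cdots\aF_a(u_s)\,\aF_b(v)\,\aF_a(u_{s+1})\cdots\aF_a(u_\fm)$. Its factor $\prod h_{a,b}(u,v)^{-1}\prod g(v,u)$ is therefore expanded with $|u_{\ell}|\gg|v|$ for $\ell\le s$ and $|v|\gg|u_\ell|$ for $\ell>s$. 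The factor $\prod f_a^{-1}$ is expanded in the order $|u_{\sigma(1)}|\gg\cdots\gg|u_{\sigma(\fm)}|$.

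The plan is to compare, term by term, these expansions with a single common one, say the region $|u_{\sigma(1)}|\gg\cdots\gg|u_{\sigma(\fm)}|\gg|v|$. The difference between two expansions of the same rational function is a sum of $\delta$-functions supported at its poles. We use the basic identity $\frac{1}{u-v}\big|_{|u|\gg|v|}-\frac{1}{u-v}\big|_{|v|\gg|u|}=\delta(u,v)$, together with its shifted version for $h_{a,b}^{-1}$, whose pole is at $u=v-2c\fb_{a,b}$.

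In the first step we show that, once everything is expanded in the common region, the total rational function vanishes. That is,
\[
\sum_{\sigma}\prod_{\ell_1<\ell_2}f_a(u_{\sigma(\ell_1)},u_{\sigma(\ell_2)})^{-1}\sum_{s}(-1)^s\bfC^\fm_s\prod_{\ell\le s}h_{a,b}(u_{\sigma(\ell)},v)^{-1}\prod_{\ell>s}g(v,u_{\sigma(\ell)})=0
\]
as a rational function. This is the same mechanism as the vanishing of $\Ccc_0$ in \r{AB11}. The inner $s$-sum factorizes into a product of $g(v,u_\ell)$ times a polynomial which, after symmetrization, is annihilated by the $f_a^{-1}$ weights. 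We would check this by a partial fraction argument in $v$: the total has no poles in $v$ and vanishes at $v=\infty$. For $\fm=2$ it is a direct computation using $\fb_{a,a}+2\fb_{a,b}=0$ \r{Ep3}. For $\fm=3$ it uses $\fb_{a,a}=-\fb_{a,b}$ (cases of propositions \ref{Enr3}, \ref{Enr4}).

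In the second step we collect the $\delta$-terms. Each switch of expansion direction at a pole $v=u_\ell$ produces $c\,\delta(v,u_\ell)$ times the residue. In this residue, further factors $f_a^{-1}(u_\ell,u_{\ell'})$ then carry poles at $u_{\ell'}=u_\ell-c\fb_{a,a}=u_\ell-c\fb_{a,b}$ (for $\fm=2$ after using $\fb_{a,a}=-2\fb_{a,b}$ appropriately), and these must again be re-expanded. Iterating, the surviving contributions are products of chained $\delta$'s along the string $v=u_{\sigma(1)}$, $u_{\sigma(k)}=u_{\sigma(k+1)}-c\fb_{a,b}$, which is exactly the right-hand side of \r{En2}. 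All partially chained terms should cancel by the vanishing of step one applied to the subproblems with fewer variables, by induction on $\fm$. The case $a>b$ in \r{En3} is identical after replacing $h_{a,b}^{-1}(u,v)$ by $g(u,v)$ and reversing the direction of the shift. The two identities are then equivalent under $\tau$-type shifts and inversion $u\to -u$. The main obstacle is the bookkeeping in step two for $\fm=3$: we must verify that the coefficient of the fully chained $\delta$-product is exactly $c^\fm$ and that the mixed terms (one $\delta$ times a rational function) cancel after symmetrization. We would handle this first by brute force for $\fm=2$, then organize $\fm=3$ by fixing which $u_\ell$ sits at $v$ and reducing to the $\fm=2$ identity in the remaining variables with shifted $v$.
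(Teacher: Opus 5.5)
The paper gives no proof of its own here. It defers to \cite{E} and adds only a heuristic recipe, explicitly ``not a proof'', for reading off the shape of \eqref{En2}--\eqref{En3} from \eqref{En1}. Your plan (re-expand all terms in one region, show the total rational function vanishes, collect the $\delta$-terms) is the right kind of argument to actually prove it. As written, however, step one is false because of your sign convention.

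In \eqref{En2} the symbol $\Big[{\fm\atop s}\Big]$ is the plain binomial $\bfC^\fm_s$; in \eqref{En1} the factor $(-1)^s$ is written separately. The recipe attaches $\ocre_{b,a}(v,u)^{-1}=g(u,v)$ to $\aF_b(v)\aF_a(u)$. Writing $g(v,u)=-g(u,v)$ instead has already absorbed the commutator sign, since $(-1)^s(-1)^{\fm-s}=(-1)^\fm$. Inserting $(-1)^s$ again destroys the cancellation.

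Take $\fm=2$, $c=1$, $\fb_{a,a}=1$, $\fb_{a,b}=-\tfrac12$ and $x_\ell=u_\ell-v$. Your signed $s$-sum is $\frac{4x_1x_2-3x_1-x_2+1}{x_1x_2(x_1-1)(x_2-1)}$. After multiplying by $f_a(u_1,u_2)^{-1}=\frac{x_1-x_2}{x_1-x_2+1}$ and symmetrizing, a pole at $u_2=u_1+1$ survives. With the plain binomial the numerator is instead $x_1-x_2+1=h_a(u_1,u_2)$. Then the product with $f_a^{-1}$ is antisymmetric and the sum over $\sigma$ vanishes, exactly by the mechanism you describe.

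The second problem is the $\delta$-chain. The factor $f_a(u,u')^{-1}=\frac{u-u'}{u-u'+c\fb_{a,a}}$ has its pole at $u'=u+c\fb_{a,a}$, not at $u-c\fb_{a,a}$. Moreover $c\fb_{a,a}$ equals $-2c\fb_{a,b}$ for $\fm=2$ and $-c\fb_{a,b}$ for $\fm=3$. So your identification ``$u_\ell-c\fb_{a,a}=u_\ell-c\fb_{a,b}$'' and the chain you then write down are wrong in sign, and for $\fm=2$ also in size.

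With the sign fixed, the $\fm=2$ computation along your lines does close. The fixed-$\sigma$ $s$-sum has no pole at $u_{\sigma(2)}=u_{\sigma(1)}+c\fb_{a,a}$, and the single-$\delta$ terms combine. The result is $c^2\sum_\sigma\delta(v,u_{\sigma(1)})\,\delta(u_{\sigma(1)},u_{\sigma(2)}-c\fb_{a,a})$. This is supported on the stratum \eqref{S-st1} of Proposition~\ref{Enr1} and on the strata named after \eqref{A10}. It is not supported at $u_{\sigma(2)}=u_{\sigma(1)}+c\fb_{a,b}$, as printed in \eqref{En2}.

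In general the chain step is $c\fb_{a,a}$, which is also what \eqref{S-st13} requires. So the right-hand sides of \eqref{En2}--\eqref{En3} have to be read with $\fb_{a,a}$ in place of $\fb_{a,b}$. You should derive the right-hand side rather than assert that your chain ``is exactly'' the printed one.

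For $\fm=3$ your argument is only a sketch, and two points need real work.
First, your reference region depends on $\sigma$. For $\fm=3$ the fixed-$\sigma$ $s$-sum does keep poles at $u_j=u_i+c\fb_{a,a}$. Changing $u$-orderings therefore produces $\delta(u_i,u_j-c\fb_{a,a})$-terms with $v$ free, and these must be shown to cancel.
Second, the coefficient of $\delta(v,u_\ell)$ is not literally an instance of the $\fm=2$ identity. A current $\aF_a(u)$ to the left of $\aF_b(v)$ contributes $f_a(u,u_\ell)^{-1}h_{a,b}(u,u_\ell)^{-1}$. That is an extra factor $f_a(u,u_\ell)^{-1}$ compared with the $\fm=2$ weight at $v'=u_\ell+c\fb_{a,a}$, so your proposed induction on $\fm$ does not apply as stated.

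Finally, a minor point on relating the two identities. The shift $v\to v+2c\fb_{a,b}$ alone turns the left side of \eqref{En2} into that of \eqref{En3}, with all expansion directions preserved. The inversion $u\to-u$ is not needed. By itself it would reverse the expansion convention for the $f_a^{-1}$ factors.
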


The identity \r{En3} can be obtained from \r{En2} by renaming 
$a\leftrightarrow b$ and an overall shift of the parameters $u_1,\ldots,u_m$. 
The proof of \r{En2} or \r{En3} 
is the same as in \cite{E} and we skip it. 
Instead, we describe how the rational analog of the $\delta$-function identities invented 
in \cite{E} can be inferred from the Serre relations. 
Let us emphasize that the following is not a proof of these identities (which can be established by arguments very similar to those used in \cite{E}). 
Rather, it provides a recipe for deriving such possible identities directly from the Serre relations.

We first rewrite the Serre relations \r{SR-gF} for the simple root 
currents $F_a(u)$, $a\in\PSR_\ggo$ in the following form 
\begin{equation}\label{En1}
\sum_{\sigma\in S_\fm}
\sum_{s=0}^{\fm}(-1)^s\,\Big[{\fm\atop s}\Big]\,
\prod_{\ell_1=1}^s F_a(u_{\sigma(\ell_1)})\  F_b(v)\,
\prod_{\ell_2=s+1}^{\fm}F_a(u_{\sigma(\ell_2)})=0\,.
\end{equation}

In order to get the desired $\delta$-function identities one has 
to associate to each product of pair of  currents 
in \r{En1} the inverse function which is in the commutation relations 
\r{FFalt} of this pair of the currents. The right-hand-side is the product of $\delta$-terms corresponding to all poles appearing in the left-hand-side. 
For example, for $\fm=2$, to the product 
 $F_a(u_j)\, F_b(v)$ we associate the function 
$\ocre_{a,b}(u_j,v)^{-1}$, to the product $F_b(v)\, F_a(u_j)$
the function $\ocre_{b,a}(v,u_j)^{-1}$ ($j=1,2$) and finally to the product 
$F_a(u_1)\, F_a(u_2)$ the function $\ocre_{a,a}(u_1,u_2)^{-1}$. Supposing $a<b$, the corresponding right-hand-side is $\sum_\sigma\delta(v,u_{\sigma(1)})\,\delta(u_{\sigma(1)},u_{\sigma(2)}-c\,\fb_{a,b} )$.

The functions which we associate to the products of  currents and 
taken from the commutation relations \r{FF}
  should be always understood  as a series 
in the domain where the modulus of the ratio of the second argument to the 
first one is less than 1. 

Using the definition \r{CuCu-alt} of the functions $\ocre_{a,b}(u,v)$  
 we are led to the proposition \ref{En-iden}, which is a rational analog of the 
proposition~2.1 in the paper \cite{E}. \qed

We now demonstrate how these identities lead to the vanishing of the 
elements $\sF_{a,b}(\bt^a,\bt^b)$ defined by the equalities \r{Ser-ele}. 
The identity \r{En2} can be used to prove the vanishing of the elements 
\r{pr-cu-1} and \r{pr-cu-13}  of the proposition~\ref{Enr1} and \ref{Enr3}.
Correspondingly, the identity \r{En3} is more convenient to 
prove the vanishing of the elements \r{pr-cu-2}  and \r{pr-cu-14} 
of the propositions~\ref{Enr2} and \ref{Enr4}. 
The r.h.s. of identites \r{En2} and \r{En3} is understood as the symmetrization 
of the product of two (resp. three) $\delta$-functions, when $\fm=2$ (resp. $\fm=3$).

Let us demonstrate how the identity \r{En1} for $\fm=2$,  $b=a+1$
and $\fb_{a,a+1}=-1/2$ 
provides the vanishing of the element \r{pr-cu-1} 
in  the proposition~\ref{Enr1}. 
From the expression \r{pr-cu-1}, we get
\begin{equation}\label{A9}
\begin{split}
\sF_{a,a+1}(\{u_1,u_2\},v)&=\frac{(u_2-u_1+c)}{(u_2-u_1)}\,\frac{(u_2-v-c)}{c}\,
\frac{(u_1-v-c)}{c}\ F_a(u_2)F_a(u_1)F_{a+1}(v)\,,\\
&=-\ \frac{(u_2-u_1+c)}{(u_2-u_1)}\,\frac{(u_2-v-c)}{c}\,
\frac{(v-u_1)}{c}\ F_a(u_2)F_{a+1}(v)F_a(u_1)\,,\\
&=\frac{(u_2-u_1+c)}{(u_2-u_1)}\,
\frac{(v-u_2)}{c}\,\frac{(v-u_1)}{c}\ F_{a+1}(v)F_a(u_2)F_a(u_1)\,,
\end{split}
\end{equation}
so that we can substitute
\begin{equation}\label{subs}
\begin{split}
F_a(u_2)F_a(u_1)F_{a+1}(v)&=\frac{u_2-u_1}{u_2-u_1+c}\ 
\frac{c^2\ \sF_{a,a+1}(\{u_1,u_2\},v)}{(u_1-v-c)(u_2-v-c)}\,,\\
F_a(u_2)F_{a+1}(v)F_a(u_1)&=- \frac{u_2-u_1}{u_2-u_1+c}\ 
\frac{c^2\ \sF_{a,a+1}(\{u_1,u_2\},v)}{(v-u_1)(u_2-v-c)}\,,\\
F_{a+1}(v)F_a(u_2)F_a(u_1)&=- \frac{u_2-u_1}{u_2-u_1+c}\ 
\frac{c^2\ \sF_{a,a+1}(\{u_1,u_2\},v)}{(v-u_1)(v-u_2)}
\end{split}
\end{equation}
into the Serre relation \r{En1}. Then, using the identity \r{En2}, it leads to the equality
\begin{equation}\label{A10}
0= c^2\ \mathop{\rm Sym}\limits_{u_1,u_2}\Big(\delta(v,u_1)\delta(v,u_2+c)\Big)\ \sF_{a,a+1}(\{u_1,u_2\},v)\,.
\end{equation}

The vanishing of the element $\sF_{a,a+1}(\{u_1,u_2\},v)$ at the 
Serre strata 
$v=u_1=u_2-c$
and $v=u_2=u_1-c$
follows from the  symmetry of the element 
$\sF_{a,a+1}(\{u_1,u_2\},v)=\sF_{a,a+1}(\{u_2,u_1\},v)$ 
and the linear independence of 
the $\delta$-functions.


\begin{thebibliography}{99}


\bibitem{AMR}
D.~Arnaudon, A.~Molev, E.~Ragoucy. 
{\sl On the R-matrix realization of Yangians and their representations.}
Ann. Henri Poincare, 7 (2006), 1269--1325.



\bibitem{DKh}
J.~Ding, S.~Khoroshkin. {\sl Weyl group extension of quantized current algebras.} 
Transform. Groups 5 (2000), 35--59. 

\bibitem{DKhP}
J.~Ding, S.~Khoroshkin, S.~Pakuliak. 
{\sl Integral Presentations for the Universal $\mathcal{R}$-Matrix}.
Lett Math Phys 53, 121--141 (2000), {\tt arXiv:math/0008226}, 
https://doi.org/10.1023/A:1026730817516 



\bibitem{D85}
V.~G.~Drinfeld, {\sl Hopf algebras and the quantum Yang-Baxter equation},
Sov.Math.Dokl. {\bf 32} (1985) 254--258.




\bibitem{D} V.~G.~Drinfel'd, {\sl Quantum groups},
J Sov Math, \textbf{41} (1988) 898.

\bibitem{Dnew} V.~G.~Drinfeld. \textsl{A new realization of 
Yangians and of quantum affine algebras}, Soviet Math. Dokl. {\bf 36}
(1988) 212--216.

\bibitem{D-FTINT} V.~G.~Drinfeld. \textsl{A new realization of 
Yangians and of quantum affine algebras}, Preprint FTINT 30--86, (1986)  (in Russian). 







\bibitem{E}
B.~Enriquez. {\sl On correlation functions of Drinfeld 
currents and shuffle algebras.} 
Transformation Groups 5 (2000), 111--120, {\tt arXiv:math/9809036},
https://doi.org/10.1007/BF01236465

\bibitem{EKhP} B.~Enriquez, S.~Khoroshkin, S.~Pakuliak, \textsl{Weight functions and Drinfeld currents,}
{Comm. Math. Phys.} {\bf 276} (2007) 691--725, \texttt{arXiv:0610398}.





\bibitem{HLPRS17} A.~Hutsalyuk,  A.~Liashyk, S.~Z.~Pakuliak,
E.~Ragoucy, N.~A.~Slavnov,
\textsl{Current presentation for the double super-Yangian
$DY(\mathfrak{gl}(m|n))$ and Bethe vectors},
Russ. Math. Surv. {\bf 72}:1  (2017) 33--99, \texttt{arXiv:1611.09620}.


\bibitem{KhT-DY}
S.~Khoroshkin,  V.~N.~Tolstoy.
{\sl Yangian double}, Lett. Math. Phys. {\bf 36} (1996), 373--402.

 \bibitem{KhP} 
S.~Khoroshkin, S.~Pakuliak, {\sl A computation of an universal weight function for
the quantum affine algebra $U_q(\mathfrak{gl}(N))$},  {J. of Mathematics of Kyoto University},
{\bf 48} n.2 (2008) 277--321, \texttt{arXiv:0711.2819}.


\bibitem{KT2} 
M.~Kosmakov, V.~Tarasov, {\sl
New combinatorial formulae for nested Bethe vectors II},
Lett. Math. Phys. {\bf 115} (2025), 12, \texttt{arXiv:2402.15717}.


\bibitem{LP1}
A.~Liashyk, S.~Pakuliak.
{\sl Gauss Coordinates vs Currents for the Yangian Doubles of the Classical Types.}
SIGMA {\bf 16} (2020) 120, 23 pages, {\tt arXiv:2006.01579}.

\bibitem{LPR-RR}
A.~Liashyk, S.~Pakuliak, E.~Ragoucy.
{\sl Bethe Vectors in Quantum Integrable Models
with Classical Symmetries}, 39 pages, {\tt 	arXiv:2601.00713}.



\bibitem{LPR-BV-CU}
A.~Liashyk, S.~Pakuliak, E.~Ragoucy.
{\sl Yangian Doubles and off-Shell Bethe Vectors,} {\tt to appear.}


\bibitem{S22}
N.~Slavnov. {\sl Algebraic Bethe Ansatz and Correlation Functions. 
An Advanced Course}, World Scientific, (2022). 


\bibitem{TF79}
L.~A.~Takhtadzhan, L.~D.~Faddeev, {\sl Quantum method of the inverse 
problem and the Heisenberg $XYZ$-model}, Russ Math Surv {\bf 34} 5 (1979)
11--68. 



\bibitem{Yang} C.~N.~Yang. {\sl Some Exact Results for the Many-Body Problem
in one Dimension with Repulsive Delta-Function Interaction}, Phys. Rev.
Lett. (1967)  {\bf 19}  1312--1315.

\end{thebibliography}
\end{document}